\newtheorem{theorem}{Theorem}[section]
\newtheorem{lemma}[theorem]{Lemma}
\newtheorem{definition}[theorem]{Definition}
\newtheorem{claim}[theorem]{Claim}
\newtheorem{proposition}[theorem]{Proposition}
\newtheorem{question}[theorem]{Open Question}
\newcommand\slarge{\@setfontsize\slarge{14}{14}}
  \theoremstyle{definition}
\tikzset{
    ncbar angle/.initial=90,
    ncbar/.style={
        to path=(\tikztostart)
        -- ($(\tikztostart)!#1!\pgfkeysvalueof{/tikz/ncbar angle}:(\tikztotarget)$)
        -- ($(\tikztotarget)!($(\tikztostart)!#1!\pgfkeysvalueof{/tikz/ncbar angle}:(\tikztotarget)$)!\pgfkeysvalueof{/tikz/ncbar angle}:(\tikztostart)$)
        -- (\tikztotarget)
    },
    ncbar/.default=0.5cm,
}
\tikzset{square left brace/.style={ncbar=0.5cm}}
\tikzset{square right brace/.style={ncbar=-0.5cm}}
\tikzset{small square left brace/.style={ncbar=0.2cm}}
\tikzset{small square right brace/.style={ncbar=-0.2cm}}
\date{}
\renewcommand{\Mod}[1]{\ (\mathrm{mod}\ #1)}
\renewcommand{\W}{\textsf{W}}
\renewcommand{\XP}{\textsf{XP}}
\renewcommand{\NP}{\textsf{NP}}
\renewcommand{\P}{\textsf{P}}
\renewcommand{\RP}{\textsf{RP}}
\renewcommand{\FPT}{\textsf{FPT}}
\newcommand{\MDP}{{\textsf{MDP}}}
\newcommand{\SVP}{{\textsf{SVP}}}
\newcommand{\NCP}{{\textsf{NCP}}}
\renewcommand{\CVP}{{\textsf{NVP}}}
\newcommand{\col}{{\textsf{col}}}
\newcommand{\bA}{{\mathbf{A}}}
\newcommand{\bB}{{\mathbf{B}}}
\newcommand{\bx}{{\mathbf{x}}}
\newcommand{\by}{{\mathbf{y}}}
\newcommand{\bw}{{\mathbf{w}}}
\newcommand{\bu}{{\mathbf{u}}}
\newcommand{\vone}{\mathbf{1}}
\newcommand{\vzero}{\mathbf{0}}
\newcommand{\etal}{\text{et al.}}
\newcommand{\geqs}{\geqslant}
\newcommand{\leqs}{\leqslant}
\renewcommand{\le}{\leqs}
\renewcommand{\leq}{\leqs}
\renewcommand{\ge}{\geqs}
\renewcommand{\geq}{\geqs}
\newcommand{\Z}{\mathbb{Z}}
\newcommand{\F}{\mathbb{F}_2}
\newcommand{\N}{\mathbb{N}}
\renewcommand{\poly}{\text{poly}}
\newcommand{\bL}{{\mathbf{L}}}
\newcommand{\bT}{{\mathbf{T}}}
\newcommand{\bs}{{\mathbf{s}}}
\newcommand{\bz}{{\mathbf{z}}}
\newcommand{\bP}{{\mathbf{P}}}
\newcommand{\bv}{{\mathbf{v}}}
\newcommand{\br}{{\mathbf{r}}}
\newcommand{\tbx}{{\tilde{\bx}}}
\newcommand{\bzero}{{\mathbf{0}}}
\newcommand{\Id}{{\rm Id}}
\newcommand{\cB}{\mathcal{B}}
\newcommand{\gapvec}{\textsc{GapMLD}}
\newcommand{\mdp}{{\textsc{GapMDP}}}
\newcommand{\sncp}{{\textsc{GapSNC}}}
\newcommand{\snvp}{\textsc{GapNVP}}
\newcommand{\svp}{\textsc{GapSVP}}
\newcommand{\LSDC}{\textsf{LSDC}}
\newcommand{\LDC}{\textsf{LDC}}
\newcommand{\cA}{\mathcal{A}}
\newcommand{\kvec}{\textsc{MLD}}
\newcommand{\cC}{\mathcal{C}}
\newcommand{\cF}{\mathcal{F}}
\newcommand{\cW}{\mathcal{W}}
\newcommand{\be}{{\mathbf{e}}}
\newcommand{\LDS}{{\textsf{LDS}}}
\newcommand{\gapLDS}{{\textsc{GapLDS}}}
\newcommand{\gapbsmd}{{\textsc{GapBSMD}}}
\newcommand{\biclique}{\textsc{Biclique}}
\renewcommand{\polylog}{\text{polylog}}
\newcommand{\odds}{\textsc{OddSet}\xspace}
\newcommand{\gapodds}{\textsc{GapOddSet}\xspace}
\newcommand{\cS}{\mathcal{S}}
\newcommand{\nc}{\newcommand}
\nc{\rnc}{\renewcommand} \nc{\nev}{\newenvironment}
\newlength{\probwidth}
\nc{\prob}[3][9]{
\begin{center}
  \normalfont\fbox{
   \begin{tabular}[t]{
     rp{#1cm}}\textit{Instance:}&#2. \\
     \textit{Problem:}&#3
   \end{tabular}}
\end{center}}
\nc{\pprob}[4][9]{
\begin{center}
   \normalfont\fbox{
    \begin{tabular}[t]{
     rp{#1cm}}\textit{Instance:}&#2. \\
     \textit{Parameter:}&#3. \\
     \textit{Problem:}&#4
   \end{tabular}}
\end{center}}
\nc{\nprob}[4][9]{
\begin{center}
  \normalfont\fbox{

\addtolength{\probwidth}{#1cm}\parbox{\probwidth}{\textsc{#2}\\\hspace*{1.5em}
     \begin{tabular}[t]{
      rp{#1cm}}\textit{Instance:}&#3. \\
      \textit{Problem:}&#4
     \end{tabular}}}
\end{center}}
\nc{\npprob}[5][9]{
\begin{center}
  \normalfont\fbox{

\addtolength{\probwidth}{#1cm}\parbox{\probwidth}{\textsc{#2}\\\hspace*{1.5em}
    \begin{tabular}[t]{
     rp{#1cm}}\textit{Instance:}&#3. \\
     \textit{Parameter:}&#4. \\
     \textit{Problem:}&#5
    \end{tabular}}}
\end{center}}
\nc{\nppxrob}[5][9]{ \normalfont\fbox{

\addtolength{\probwidth}{#1cm}\parbox{\probwidth}{\textsc{#2}\\\hspace*{1.5em}
   \begin{tabular}[t]{
    rp{#1cm}}\textit{Instance:}&#3. \\
    \textit{Parameter:}&#4. \\
    \textit{Problem:}&#5
   \end{tabular}}}}
\nc{\nppprob}[5][4]{
\begin{center}
  \normalfont\fbox{

\addtolength{\probwidth}{#1cm}\parbox{\probwidth}{\textsc{#2}\\\hspace*{1.5em}
    \begin{tabular}[t]{
     rp{#1cm}}\textit{Instance:}&#3. \\
     \textit{Parameter:}&#4. \\
     \textit{Problem:}&#5
    \end{tabular}}}
\end{center}}
\nc{\dotcup}{\;\dot\cup\;}
\providecommand{\definitionname}{Definition}
\title{Parameterized Intractability of Even Set and \\Shortest Vector Problem\vspace{0.5cm}}
\author{Arnab Bhattacharyya\footnote{National University of Singapore. Email: \texttt{arnabb@nus.edu.sg}}\and 
\'Edouard Bonnet\footnote{CNRS, \'Ecole Normale Sup\'erieure de Lyon, Universit\'e Claude Bernard Lyon 1, LIP UMR5668. Email: \texttt{edouard.bonnet@ens-lyon.fr}}\and 
L\'{a}szl\'{o} Egri\footnote{Indiana State University, Terre Haute. Email: \texttt{laszlo.egri@mail.mcgill.ca}}\and 
Suprovat Ghoshal\footnote{Indian Institute of Science. Email:  \texttt{suprovat@iisc.ac.in}} \vspace{0.3cm} \and 
Karthik C.\ S.\footnote{Weizmann Institute of Science. Email: \texttt{karthik.srikanta@weizmann.ac.il}} \and 
Bingkai Lin\footnote{Nanjing University. Email:  \texttt{lin@nju.edu.cn}}\and 
Pasin Manurangsi\footnote{University of California, Berkeley. Email: \texttt{pasin@berkeley.edu}}\and 
D\'{a}niel Marx\footnote{Institute for Computer Science and Control, Hungarian Academy of Sciences. Email: \texttt{dmarx@cs.bme.hu}}}
\begin{document}

	\maketitle
\begin{abstract}
  The \emph{$k$-Even Set} problem is a parameterized variant of the \emph{Minimum Distance Problem} of linear codes over $\mathbb F_2$, which can be stated as follows: given a generator matrix $\mathbf A$ and an integer $k$, determine whether the code generated by $\mathbf A$ has distance at most $k$, or in other words, whether there is a nonzero vector $\bx$ such that $\mathbf A \bx$ has at most $k$ nonzero coordinates.  The question of whether $k$-Even Set is fixed parameter tractable (FPT) parameterized by the distance $k$ has been repeatedly raised in literature; in fact, it is one of the few remaining open questions from the seminal book of Downey and Fellows (1999). In this work, we show that $k$-Even Set is \W[1]-hard under randomized reductions.\vspace{0.1cm}

We also consider the parameterized  \emph{$k$-Shortest Vector Problem (SVP)}, in which we are given a lattice whose basis vectors are integral and an integer $k$, and the goal is to determine whether the norm of the shortest vector (in the $\ell_p$ norm for some fixed $p$) is at most $k$. Similar to $k$-Even Set, understanding the complexity of this problem is also a long-standing open question in the field of Parameterized Complexity. We show that, for any $p > 1$, $k$-SVP is \W[1]-hard to approximate (under randomized reductions) to some constant factor. 
\end{abstract}


\clearpage	

\section{Introduction} \label{sec:intro}

The study of error-correcting codes gives rise to many interesting computational problems. One of the most fundamental among these is the problem of computing the distance of a linear code. In this problem, which is commonly referred to as the \emph{Minimum Distance Problem (\MDP)}, we are given as input a generator matrix $\bA \in \mathbb{F}_2^{n \times m}$ of a binary\footnote{Note that \MDP\ can be defined over larger fields as well; we discuss more about this in Section~\ref{sec:open}.} linear code and an integer $k$. The goal is to determine whether the code has distance at most $k$. Recall that the distance of a linear code is $\underset{\vzero \ne \bx \in \mathbb{F}_2^m}{\min}\ \|\bA\bx\|_0$ where $\| \cdot \|_0$ denote the 0-norm (aka the Hamming norm).

To see the fundamental nature of \MDP, let us discuss two other natural ways of arriving at (equivalent formulations of) this problem. \MDP\ has the following well-known dual formulation: the minimum distance of the code generated by  $\bA \in \mathbb{F}_2^{n \times m}$  can be also expressed as 
$\underset{\vzero \ne \by \in \mathbb{F}_2^m, \bA^{\perp}\by=0}{\min}\ \|\by\|_0$, where $\bA^{\perp}$ is the orthogonal complement of $\bA$. In other words, finding the minimum distance of the code is equivalent to finding the minimum set of linearly dependent vectors among the column vectors of $\bA^{\perp}$. Thus, \MDP\ is equivalent to the the \textsf{Linear Dependent Set} problem on vectors over $\mathbb{F}_2$ or,  using the language of matroid theory, solving the \textsf{Shortest Circuit} problem on a represented binary matroid.

One can arrive at a more combinatorial formulation of the problem as a variant of the \textsf{Hitting Set} problem. Given a set system $\cS$ over a universe $U$ and an integer $k$, the \textsf{Hitting Set} problem asks for a $k$-element subset $X$ of $U$ such that $|S\cap X|\neq 0$ for every $S\in \cS$. \textsf{Hitting Set} is a basic combinatorial optimization that is well studied (often under the dual formulation \textsf{Set Cover}) in the approximation algorithms and the parameterized complexity literature. More restrictive versions of the problem are the \textsf{Exact Hitting Set} problem, where we require $|S\cap X|=1$, and the \textsf{Odd Set} problem, where we require $|S\cap X|$ to be odd. By analogy, we can define the \textsf{Even Set} problem, where we require $|S\cap X|$ to be even, but in this case we need to add the requirement $X\neq \emptyset$ to avoid the trivial solution. While \textsf{Hitting Set}, \textsf{Exact Hitting Set}, and \textsf{Odd Set} are known to be \W[1]-hard parameterized by $k$, \textsf{Even Set} can be easily seen to be equivalent to \MDP\ (in the dual formulation of \MDP, the rows of $\bA^{\perp}$ play the same role as the sets in $\cS$).

The study of this problem dates back to at least 1978 when Berlekamp \etal~\cite{BMT78} conjectured that it is \NP-hard. This conjecture remained open for almost two decades until it was positively resolved by Vardy~\cite{Var97a,Var97b}. Later, Dumer \etal~\cite{DMS03} strengthened this intractability result by showing that even \emph{approximately} computing the minimum distance of the code is hard. Specifically, they showed that, unless $\NP = \RP$, no polynomial time algorithm can distinguish between a code with distance at most $k$ and one whose distance is greater than $\gamma \cdot k$ for any constant $\gamma \geqs 1$. Furthermore, under stronger assumptions, the ratio can be improved to superconstants and even almost polynomial. Dumer \etal's result has been subsequently derandomized by Cheng and Wan~\cite{CW12} and further simplified by Austrin and Khot~\cite{AK14} and Micciancio~\cite{Mic14}.

While the aforementioned intractability results rule out not only efficient algorithms but also efficient approximation algorithms for \MDP, there is another popular technique in coping with \NP-hardness of problems which is not yet ruled out by the known results: \emph{parameterization}.

In parameterized problems, part of the input is an integer that is designated as the parameter of the problem, and the goal is now not to find a polynomial time algorithm but a \emph{fixed parameter tractable (FPT)} algorithm. This is an algorithm whose running time can be upper bounded by some (computable) function of the parameter in addition to some polynomial in the input length. Specifically, for \MDP, its parameterized variant\footnote{Throughout Sections~\ref{sec:intro}  and \ref{sec:overview}, for a computational problem $\Pi$, we denote its parameterized variant by $k$-$\Pi$, where $k$ is the parameter of the problem.} $k$-\MDP\ has $k$ as the parameter and the question is whether there exists an algorithm that can decide if the code generated by $\bA$ has distance at most $k$ in time $f(k) \cdot \poly(mn)$ where $f$ can be any computable function that depends only on $k$.

Note that $k$-\MDP\ can be solved in $n^{O(k)}$ time. This can be easily seen in the dual formulation, as we can enumerate through all vectors $\by$ with Hamming norm at most $k$ and check whether $\bA^\perp \by = \bzero$. In Parameterized Complexity language, this means that $k$-\MDP\ belongs to the class \XP.

The parameterized complexity of $k$-\MDP\ was first questioned by Downey \etal~\cite{DFVW99}, who showed that parameterized variants of several other coding-theoretic problems, including the Nearest Codeword Problem and the Nearest Vector Problem\footnote{The Nearest Vector Problem is also referred to in the literature as the Closest Vector Problem.} which we will discuss in more details in Section~\ref{sec:ncp-cvp},  are $\W[1]$-hard. Thereby, assuming the widely believed $\W[1] \ne \FPT$ hypothesis, these problems are rendered intractable from the parameterized perspective. Unfortunately, Downey \etal~fell short of proving such hardness for $k$-\MDP\ and left it as an open problem: 

\begin{question}\label{openq1}
Is $k$-\MDP\ fixed parameter tractable? 
\end{question}

Although almost two decades have passed, the above question remains unresolved to this day, despite receiving significant attention from the community. In particular, the problem was listed as an open question in the seminal 1999 book of Downey and Fellows~\cite{DF99} and has been reiterated numerous times over the years~\cite{DGMS07,FGMS12,GKS12,DBLP:conf/birthday/FominM12,DF13,CFJKLMPS14,CFKLMPPS15,BGGS16,CFHW17,Maj17}. This problem is one of the few questions that remained open from the original list of Downey and Fellows~\cite{DF99}. In fact, in their second book~\cite{DF13}, Downey and Fellows even include this problem as one of the six\footnote{So far, two of the six problems have been resolved: that of parameterized complexity of $k$-Biclique~\cite{Lin15} and that of parameterized approximability of $k$-Dominating Set~\cite{
KLM18}.} ``most infamous'' open questions in the area of Parameterized Complexity.

Another question posted in Downey \etal's work~\cite{DFVW99} that remains open is the parameterized \emph{Shortest Vector Problem ($k$-\SVP)} in lattices. The input of $k$-\SVP\ (in the $\ell_p$ norm) is an integer $k \in \N$ and a matrix $\bA \in \Z^{n \times m}$ representing the basis of a lattice, and we want to determine whether the shortest (non-zero) vector in the lattice has length at most $k$, i.e., whether $\underset{\vzero \ne \bx \in \Z^{m}}{\min}\ \|\bA\bx\|_p \leqs k$. Again, $k$ is the parameter of the problem. It should also be noted here that, similar to~\cite{DFVW99}, we require the basis of the lattice to be integer valued, which is sometimes not enforced in literature (e.g.~\cite{VEB,Ajt98}). This is because, if $\bA$ is allowed to be any matrix in $\mathbb{R}^{n \times m}$, then parameterization is meaningless because we can simply scale $\bA$ down by a large multiplicative factor.

The (non-parameterized) Shortest Vector Problem (\SVP) has been intensively studied, motivated partly due to the fact that both algorithms and hardness results for the problem have numerous applications. Specifically, the celebrated LLL algorithm for \SVP~\cite{LLL} can be used to factor rational polynomials, and to solve integer programming (parameterized by the number of unknowns)~\cite{Len83} and many other computational number-theoretic problems (see e.g.~\cite{NV10}). Furthermore, the hardness of (approximating) \SVP\ has been used as the basis of several cryptographic constructions~\cite{Ajt98,AD97,Reg03,Reg05}. Since these topics are out of scope of our paper, we refer the interested readers to the following surveys for more details:~\cite{Reg06,MR-survey,NV10,Reg10}.

On the computational hardness side of the problem, van Emde-Boas~\cite{VEB} was the first to show that \SVP\ is \NP-hard for the $\ell_{\infty}$ norm, but left open the question of whether \SVP\ on the $\ell_p$ norm for $1 \leqs p < \infty$ is \NP-hard. It was not until a decade and a half later that Ajtai~\cite{Ajt96} showed, under a randomized reduction, that \SVP\ for the $\ell_2$ norm is also \NP-hard; in fact, Ajtai's hardness result holds not only for exact algorithms but also for $(1 + o(1))$-approximation algorithms as well. The $o(1)$ term in the inapproximability ratio was then improved in a subsequent work of Cai and Nerurkar~\cite{CN99}. Finally, Micciancio~\cite{Mic00} managed to achieve a factor that is bounded away from one. Specifically, Micciancio~\cite{Mic00} showed (again under randomized reductions) that \SVP\ on the $\ell_p$ norm is \NP-hard to approximate to within a factor of $\sqrt[p]{2}$ for every $1 \leqs p < \infty$. Khot~\cite{Khot05} later improved the ratio to any constant, and even to $2^{\log^{1/2 - \varepsilon}(nm)}$ under a stronger assumption. Haviv and Regev~\cite{HR07} subsequently simplified the gap amplification step of Khot and, in the process, improved the ratio to almost polynomial. We note that both Khot's and Haviv-Regev reductions are also randomized and it is still open to find a deterministic \NP-hardness reduction for \SVP\ in the $\ell_p$ norms for $1 \leqs p < \infty$ (see~\cite{Mic12}); we emphasize here that such a reduction is not known even for the exact (not approximate) version of the problem. For the $\ell_\infty$ norm, the following stronger result due to Dinur~\cite{Din02} is known: \SVP\ in the $\ell_\infty$ norm is \NP-hard to approximate to within $n^{\Omega(1/\log \log n)}$ factor (under a \emph{deterministic} reduction).

Very recently, fine-grained studies of \SVP\ have been initiated~\cite{BGS17,AD17}. The authors of~\cite{BGS17,AD17} showed that \SVP\ for any $\ell_p$ norm cannot be solved (or even approximated to some constant strictly greater than one) in subexponential time assuming the existence of a certain family of lattices\footnote{This additional assumption is only needed for $1 \leqs p \leqs 2$. For $p > 2$, their hardness is conditional only on Gap-ETH.} and the (randomized) \emph{Gap Exponential Time Hypothesis (Gap-ETH)}~\cite{D16,MR16}, which states that no randomized subexponential time algorithm can distinguish between a satisfiable 3-CNF formula and one which is only 0.99-satisfiable. 

As with \MDP, Downey \etal~\cite{DFVW99} were the first to question the parameterized tractability of $k$-\SVP\ (for the $\ell_2$ norm). Once again, Downey and Fellows included $k$-\SVP\ as one of the open problems in both of their books~\cite{DF99,DF13}. As with Open Question~\ref{openq1}, this question remains unresolved to this day:

\begin{question}\label{openq2}
Is $k$-\SVP\ fixed parameter tractable?
\end{question}

We remark here that, similar to $k$-\MDP, $k$-\SVP\ also belongs to \XP, as we can enumerate over all vectors with norm at most $k$ and check whether it belongs to the given lattice. There are only $(mn)^{O(k^p)}$ such vectors, and the lattice membership of a given vector can be decided in polynomial time (e.g., see page 18 \cite{Micc12}). Hence, this is an $(nm)^{O(k^p)}$-time algorithm for $k$-\SVP. 

\subsection{Our Results}\label{sec:results}

The main result of this paper is a resolution to the previously mentioned Open~Questions~\ref{openq1}~and~\ref{openq2}: more specifically, we prove that $k$-\MDP\ and $k$-\SVP\  (on $\ell_p$ norm for any $p > 1$) are $\W[1]$-hard under randomized reductions. In fact, our result is stronger than stated here as we rule out not only exact FPT algorithms but also FPT \emph{approximation} algorithms as well. In particular, all of our results use the $\W[1]$-hardness of approximating the $k$-\biclique\ problem recently proved by Lin \cite{Lin15} as a starting point.

With this in mind, we can state our results starting with the parameterized intractability of $k$-\MDP, more concretely (but still informally), as follows:

\begin{theorem}[Informal; see Theorem~\ref{thm:MDPmain}]
For any $\gamma \geqs 1$, given input $(\bA, k) \in \F^{n \times m} \times \N$, it is \W[1]-hard (under randomized reductions) to distinguish between
\begin{itemize}
\item the distance of the code generated by $\bA$ is at most $k$ , and,
\item the distance of the code generated by $\bA$ is more than $\gamma \cdot k$.
\end{itemize} 
\end{theorem}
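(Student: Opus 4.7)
The plan is to construct a randomized parameterized reduction from the gap $k$-$\biclique$ problem, whose $\W[1]$-hardness is due to Lin~\cite{Lin15}, to $\mdp$. The overall architecture follows the two-step paradigm that Dumer, Micciancio, and Sudan~\cite{DMS03} used to prove the $\NP$-hardness of approximating $\MDP$, but adapted so as to preserve the parameter. The first step reduces gap $k$-$\biclique$ to a ``nearest codeword''-style problem such as $\gapvec$, possibly passing through an intermediate constraint-satisfaction-flavoured problem like $\csp$. The second step is a gap-amplifying randomized self-reduction that converts a $\gapvec$ instance into an $\mdp$ instance. The crucial new requirement, absent from~\cite{DMS03}, is that every gadget and every code length must depend only on $k$ and the target ratio $\gamma$, so that the final $\mdp$ distance parameter $k'$ is bounded by a function of $(k, \gamma)$ alone.

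First I would produce the reduction from gap $k$-$\biclique$ to $\gapvec$. Given a bipartite graph $G = (U \cup V, E)$ we have to separate the case that $G$ contains a $K_{k,k}$ from the case that every balanced bipartite subgraph on at most $2k$ vertices misses many edges. The idea is to encode $G$ as a matrix $\bA \in \F^{n \times m}$ whose columns correspond to vertices, together with a target vector $\bs$, so that a $k$-biclique yields a $\{0,1\}$-combination $\bx$ of exactly $2k$ columns with $\|\bA \bx - \bs\|_0$ small, while the absence of a dense subgraph forces every low-support $\bx$ to leave many coordinates of $\bA \bx - \bs$ uncovered. This would give a $\gapvec$ instance with parameter $O(k)$ whose inapproximability ratio inherits the source biclique gap.

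Second I would lift the $\gapvec$ instance to an $\mdp$ instance by a randomized ``planting'' construction in the spirit of~\cite{DMS03}. We embed $\bA \bx \approx \bs$ into a homogeneous system over a larger matrix: the new code is the direct sum of $\bA$ with a randomly sampled linear code $\cC$ of block length $\poly(k, \gamma)$ and designed minimum distance comfortably above $\gamma k'$, with $\bs$ attached through a fresh ``activator'' coordinate $x_0$. Any $\gapvec$ solution then lifts to a short nonzero codeword of the composite code; conversely, the distance property of $\cC$ forces every nonzero codeword with $x_0 = 0$ to have weight far exceeding $\gamma k'$, and codewords with $x_0 = 1$ decode back to a valid $\gapvec$ solution. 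Randomness enters via the probabilistic method to guarantee the existence of such a $\cC$ of bounded length with the required distance and with the high-probability absence of spurious short codewords.

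The hard part will be quantitative parameter control. On the one hand the inner code $\cC$ must be short enough that the resulting $\mdp$ parameter stays bounded by a function $f(k, \gamma)$; on the other hand it must be long enough that, by a union bound over the at most $(nm)^{O(k)}$ candidate low-weight combinations coming from $\bA$, every spurious short codeword is destroyed with probability $1 - 1/\poly(nm)$. Balancing these conflicting demands while keeping the entire pipeline $\FPT$ is the delicate technical core of the argument, and it is precisely what forces randomness: no explicit construction of short linear codes with the required distance and list-decodability profile is known, so we must rely on the probabilistic method together with an FPT union bound.
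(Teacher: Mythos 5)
Your high-level architecture (biclique $\to$ gap nearest codeword $\to$ gap minimum distance, following the DMS blueprint with parameter control) matches the paper's, but the second step as you have sketched it has a gap that the paper identifies explicitly and spends most of its machinery working around, so I do not think your plan would go through as stated.

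The fatal issue is using $\gapvec$ as the intermediate problem for the lift to $\mdp$. In the composite matrix $\left[\begin{smallmatrix}\bA & -\by \\ \cC' & -\bs\end{smallmatrix}\right]$, a nonzero codeword with activator $x_0 = 1$ has weight $\|\bA\bx - \by\|_0 + \|\cC'\bx - \bs\|_0$, where $\bx$ ranges over \emph{all} of $\F^m$, not just the ball of radius $\gamma k$. The NO promise of $\gapvec$ says nothing about $\bx$ of large Hamming weight, and if $\bA$ is anywhere near surjective there will be dense $\bx$ with $\bA\bx = \by$ exactly, whose composite codeword can be short. A closely related obstruction is quantitative: your gadget code $\cC$ has block length $\poly(k,\gamma)$, hence message space of size at most $2^{\poly(k,\gamma)}$, yet for the YES case to survive you must match the center $\bs$ against $\cC'\bx$ for some $\bx$ witnessing the $\gapvec$ solution, and the number of candidate $\bx \in \F^m$ is exponential in $m$, not in $k$. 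No linear transformation (DMS's $\bT$) or random planting can funnel exponentially many candidates into a gadget with only $f(k,\gamma)$ bits. The paper's fix is to introduce an intermediate problem, Sparse Nearest Codeword ($\sncp$), whose objective penalizes $\|\bx\|_0$ as well as $\|\bA\bx - \by\|_0$; then only $k$-sparse coefficient vectors matter, and one replaces Locally Dense Codes by \emph{Locally Suffix Dense Codes}: systematic codes $\bL \in \F^{h\times m}$ near the sphere-packing bound, with $m \ge q$, whose first $q$ coordinates are the identity. Because $\bL$ is systematic, the term $\|\bx\|_0$ emerges for free from the first $q$ coordinates of $\bL(\bx\circ\bz') - (\vzero_q\circ\bs)$, and the LSDC density property (a counting argument relative to the Hamming bound, instantiated with BCH codes) guarantees a good suffix with probability $\ge 1/d^{d/2}$, which suffices for an RUR-style FPT reduction.

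A secondary misdiagnosis: you argue that randomness is needed to kill spurious short codewords in the NO case, i.e., a union bound over bad events. In the paper's $\sncp \to \mdp$ reduction the NO case holds deterministically; the randomness is entirely in the YES case, where a random suffix $\bs$ is needed because no explicit center achieving the LSDC density is known. So the probabilistic-method burden is on completeness, not soundness, which also changes how you would balance parameters. Your first step (graph-to-vectors encoding of biclique) is consonant in spirit with the paper's chain through $\gapbsmd$ and $\gapLDS$, though you would still need to handle the field-size reduction from $\mathbb{F}_q$ with $q = \Theta(n)$ down to $\F$ (the paper does this via color coding plus a coefficient-restriction trick), a point your sketch leaves implicit.
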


Notice that our above result rules out FPT approximation algorithms with \emph{any} constant approximation ratio for $k$-\MDP. In contrast, we can only prove FPT inapproximability with \emph{some} constant ratio for $k$-\SVP\ in $\ell_p$ norm for $p > 1$. These are stated more precisely below.

\begin{theorem}[Informal; see Theorem~\ref{thm:SVPmain}]
For any $p > 1$, there exists a constant $\gamma_p > 1$ such that given input $(\bA, k) \in \Z^{n \times m} \times \N$, it is \W[1]-hard (under randomized reductions) to distinguish between
\begin{itemize}
\item the $\ell_p$ norm of the shortest vector of the lattice generated by $\bA$ is $\leqs k$, and,
\item the $\ell_p$ norm of the shortest vector of the lattice generated by $\bA$ is $> \gamma_p \cdot k$.
\end{itemize} 
\end{theorem}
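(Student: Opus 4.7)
My plan is to adapt the classical non-parameterized chain of $\NP$-hardness reductions for \SVP\ due to Ajtai~\cite{Ajt96}, Micciancio~\cite{Mic00}, Khot~\cite{Khot05}, and Haviv--Regev~\cite{HR07} to the parameterized setting. The source of hardness is the W[1]-hardness of approximating $k$-\biclique\ from Lin~\cite{Lin15}. The reduction will have two stages: first I would parameter-preservingly reduce a gap version of $k$-\biclique\ (or, equivalently, the $\gapvec$ instance produced on the way to the \MDP\ theorem above) to a gap version of the Nearest Vector Problem ($\snvp$) in $\ell_p$; and second I would augment this \CVP\ instance with a locally dense gadget lattice to convert a near-vector gap into a shortest-vector gap.

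For stage one I would use the same combinatorial encoding as in the \MDP\ theorem, lifted from $\F_2$ to $\Z$ via Construction~A: a biclique witness becomes a $\{0,\pm 1\}$ vector $\bx$ such that $\bA\bx$ is close to a target $\bv$ in $\ell_p$, with $\|\bA\bx-\bv\|_p^p \approx k$ in YES instances and at least $\gamma k$ in NO instances. Because $p>1$, a modest scaling of the Construction-A generators forces any lattice point with a coordinate outside $\{0,\pm 1\}$ to be too far from $\bv$ to be competitive, so the loss in the gap is only what is paid in the Biclique-to-\MDP\ step. For stage two I would use Khot's augmented-lattice template: build a gadget $(L^*,\bt^*,r^*)$ with $L^*\subseteq \Z^{n^*}$ such that (i) exponentially many points of $L^*$ lie within $\ell_p$ distance $r^*$ of $\bt^*$, but (ii) every nonzero vector of $L^*$ has $\ell_p$ norm at least $\gamma_p r^*$ for some $\gamma_p>1$ depending only on $p$. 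Gluing $(\bA,\bv)$ and $(L^*,\bt^*)$ via a block-diagonal basis together with an auxiliary coordinate that forces the two blocks to share a common $\pm 1$ offset turns the combined near-vector into an honest nonzero lattice vector; in YES instances its $\ell_p$ norm is $\approx (k+r^{*p})^{1/p}$, whereas in NO instances every nonzero lattice vector has $\ell_p$ norm at least $\min\bigl(\gamma_p r^*,\,(\gamma k+r^{*p})^{1/p}\bigr)$, and balancing $r^*$ against $k$ yields the desired constant gap.

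The main obstacle is constructing the locally dense gadget $L^*$ simultaneously for every $p>1$ while keeping the target radius $r^*$ bounded by a function of $k$ alone, so that the final \SVP\ threshold stays FPT in $k$. In the $\NP$-hardness setting this gadget is built using BCH-like codes or tensor powers of a carefully designed lattice, and the construction is typically randomized; I expect that an $\ell_p$ variant of the Haviv--Regev tensoring, or of Khot's BCH-based gadget, can be made to work here, and randomization is acceptable since our reduction is allowed to be randomized. The delicate points are verifying the $\ell_p$ geometry of the gadget (which is routine for $p=2$ but requires genuine work for general $p>1$), ensuring that only a constant number of amplification steps are used so that $k$ does not blow up super-polynomially, and controlling the quality of $\gamma_p$, which one naturally expects to degrade as $p \to 1$ and which explains why the statement gives only \emph{some} constant gap rather than \emph{any} constant gap as in the \MDP\ case.
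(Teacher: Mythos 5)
Your overall plan — go through an inhomogeneous nearest-vector problem in $\ell_p$, then glue a locally dense gadget lattice and a random constraint on top, à la Khot — is the same skeleton the paper uses. But your stage one, as described, has a concrete obstruction that the paper goes out of its way to avoid, and this is where the two arguments genuinely diverge.

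You propose to obtain the $\snvp$ instance by Construction~A from the $\F_2$ $\gapvec$ instance that feeds the $\mdp$ result. The trouble is that Khot's gadget step does not merely need ``$\|\bB\bx-\by\|_p^p>\eta t$ for all $\bx$'' in the NO case; it needs the much stronger property that $\|\bB\bx - w\,\by\|_p^p>\eta t$ for \emph{every} nonzero integer multiplier $w$, because the lattice vector that a NO instance has to rule out is $\bB_{\rm int}(\bx_1\circ\bx_2\circ x)=(2\bB\bx_1-2x\by)\circ(\cdots)$ with $x\in\Z$ arbitrary (see Claim~\ref{claim:annoying-vectors}). Over $\F_2$ a residue trick lets you control $w$ odd, and coordinates of the target vector of the form $\vone_k$ let you penalize $|w|\geqs 2$ by $|w|^p k \geqs 2^p k$. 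That only buys you $\eta < 2^{p-1}$. But the constant you need for Khot's gadget to produce a gap, namely the condition $\tfrac12+\tfrac1{2^p}+\tfrac{2^p+1}{\eta}<1$ of Lemma~\ref{lem:snvp-to-svp}, forces $\eta>\frac{2^p+1}{\,1/2-1/2^p\,}$, which is far above $2^{p-1}$ and in fact blows up as $p\to 1^+$. So the $\F_2$/Construction-A route cannot supply a strong enough $\snvp$ gap for the full range $p>1$; it would at best give the theorem for $p$ bounded away from $1$.

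The paper's fix — and the genuinely new ingredient that your plan is missing — is to prove $\W[1]$-hardness of $\gapvec$ over an \emph{arbitrary} prime field $\mathbb{F}_q$ (Theorem~\ref{thm:MLDmain}, with the extra coefficient-partitioning argument in Section~\ref{sec:full-result}), and then to reduce $\gapvec_{2\eta,q}$ with $q>2\eta$ to $\snvp_{\eta,p}$ (Theorem~\ref{thm:CVPmain}). There, any multiplier $w$ with $|w|\geqs q$ is killed by the $\vone_k$ block, and any $w$ with $0<|w|<q$ is inverted modulo $q$ to reduce to the $\gapvec$ NO condition; this needs $q$ prime and \emph{larger than} the desired NVP gap, which is exactly the room $\F_2$ lacks. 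Once $\snvp_{\eta,p}$ is in hand with arbitrarily large $\eta$, Khot's BCH-lattice gadget, intermediate lattice, and random homogeneous constraint go through essentially verbatim (Lemma~\ref{lem:snvp-to-svp}), which matches your stage two. One further caveat worth flagging: you say ``the loss in the gap is only what is paid in the Biclique-to-$\mdp$ step,'' but unlike $\mdp$ there is no gap amplification available here for $p\ne 2$, so the final $\svp$ gap $\gamma_p=\bigl(\tfrac12+\tfrac{2^p+1}{\eta}+\tfrac1{2^p}\bigr)^{-1}$ is inherently limited and, as you correctly anticipate, degrades as $p\to 1$.
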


We remark that our results do not yield hardness for \SVP\ in the $\ell_1$ norm and this remains an interesting open question. Section~\ref{sec:open} contains discussion on this problem. We also note that, for Theorem~\ref{thm:SVPmain} and onwards, we are only concerned with $p \ne \infty$; this is because, for $p = \infty$, the problem is \NP-hard to approximate even when $k = 1$~\cite{VEB}!

\subsubsection{Nearest Codeword Problem and Nearest Vector Problem}
\label{sec:ncp-cvp}

As we shall see in Section~\ref{sec:overview}, our proof proceeds by first showing FPT hardness of approximation of the non-homogeneous variants of $k$-\MDP\ and $k$-\SVP\ called the \emph{$k$-Nearest Codeword Problem} ($k$-\NCP) and the \emph{$k$-Nearest Vector Problem } ($k$-\CVP) respectively. For both $k$-\NCP\ and $k$-\CVP, we are given a target vector $\by$ (in $\F^n$ and $\Z^n$, respectively) in addition to $(\bA, k)$, and the goal is to find whether there is any $\bx$ (in $\F^m$ and $\Z^m$, respectively) such that the (Hamming and $\ell_p$, respectively) norm of $\bA\bx - \by$ is at most $k$. Note that their homogeneous counterparts, namely $k$-\MDP\ and $k$-\SVP, explicitly require the coefficient vector $\bx$ to be non-zero, and hence they cannot be interpreted as special cases of $k$-\NCP and $k$-\CVP~respectively.

As an intermediate step of our proof, we show that the $k$-\NCP\ and $k$-\CVP\ problems are hard to approximate\footnote{While our $k$-\MDP\ result only applies for $\F$, it is not hard to see that our intermediate reduction for $k$-\NCP\ actually applies for every finite field $\mathbb{F}_q$ too.}
(see Theorem~\ref{thm:MLDmain} and Theorem~\ref{thm:CVPmain} respectively). This should be compared to~Downey et al.~\cite{DFVW99}, in which the authors show that both problems are $\W[1]$-hard to solve exactly. Therefore our inapproximability result significantly improves on their work to rule out any $\polylog(k)$ factor FPT-approximation algorithm (assuming $\W[1]\neq \FPT$) and are also the first inapproximability results for these problems.

We end this section by remarking that the computational complexity of both (non-parameterized) \NCP\ and \CVP\ are also thoroughly studied (see e.g.~\cite{Mic01,DKRS03,Ste93,ABSS97,GMSS99} in addition to the references for \MDP\ and \SVP), and indeed the inapproximability results of these two problems form the basis of hardness of approximation for \MDP\ and \SVP. We would like to emphasize that while \W[1]-hardness results were known for $k$-\NCP\ and $k$-\CVP, it does not seem easy to transfer them to \W[1]-hardness results for $k$-\MDP\ and $k$-\SVP; we really need parameterized \emph{inapproximability} results for $k$-\NCP\ and $k$-\CVP\ to be able to transfer them to (slightly weaker) inapproximability results for $k$-\MDP\ and $k$-\SVP. There are other parameterized problems that resisted all efforts at proving hardness so far, and we believe that it may be the case for these problems as well that building a chain of inapproximability results is more feasible than building a chain of W[1]-hardness results.

\subsection{Organization of the paper}

In the next section, we give an overview of our reductions and proofs. After that, in Section~\ref{sec:prelim}, we define additional notation and preliminaries needed to fully formalize our proofs. 
In Section~\ref{sec:lin-dep}, we show the inapproximability of {\em $k$-Linear Dependent Set} ($k$-\LDS), a problem naturally arising from linear algebra, that would be used as the base step for all future inapproximability results in this paper. 
In Section~\ref{sec:csp-to-gapvec}  we show the inapproximability of $k$-\NCP. Next, in Section~\ref{sec:main-reduction}, we establish the constant inapproximability of $k$-\MDP. Section~\ref{sec:svp} provides the  inapproximability of $k$-\CVP\ and $k$-\SVP. Finally, in Section~\ref{sec:open}, we conclude with a few open questions and research directions.

\section{Proof Overview}
\label{sec:overview}

In the non-parameterized setting, all the aforementioned inapproximability results for both \MDP\ and \SVP\ are shown in two steps: first, one proves the inapproximability of their inhomogeneous counterparts (i.e. \NCP\ and \CVP), and then reduces them to \MDP\ and \SVP. We follow this general outline. That is, we first show, that both $k$-\NCP\ and $k$-\CVP\ are \W[1]-hard to approximate. Then, we reduce $k$-\NCP\ and $k$-\CVP\ to $k$-\MDP\ and $k$-\SVP\ respectively. In this second step, we employ an adaptation of Dumer \etal's reduction~\cite{DMS03} for $k$-\MDP\ and Khot's reduction~\cite{Khot05} for $k$-\SVP. While the latter reduction works almost immediately in the parameterized regime, there are several technical challenges in adapting Dumer \etal's reduction to our setting. The remainder of this section is devoted to presenting all of our reductions and to highlight such technical challenges and changes in comparison with the non-parameterized setting.

As mentioned before, the starting point of all the hardness results in this paper is the $\W[1]$-hardness of approximating the $k$-\biclique\ problem. In Subsection~\ref{sec:LDSoverview}, we show a gap-retaining reduction from the gap $k$-\biclique\ problem to gap $k$-Linear Dependent Set (referred to hereafter as $k$-\LDS), an intermediate problem that we introduce which might be of independent interest.
We show a gap-retaining reduction from gap $k$-\LDS\ to gap $k$-\NCP\ in  Subsection~\ref{sec:NCPoverview}, and then a randomized reduction from gap $k$-\NCP\ to $k$-\MDP\ in  Subsection~\ref{sec:MDPoverview}. Finally, in Subsection~\ref{sec:SVPoverview}, we show a gap-retaining reduction from gap $k$-\LDS\ to gap $k$-\CVP, and then a randomized reduction from gap $k$-\CVP\ to $k$-\SVP. 

In the next subsection, we first give an overview of Dumer \etal's reduction~\cite{DMS03} and highlight the challenges in extending their reduction to the parameterized setting, following which we give a sketch of the various steps involved in the actual reduction to $k$-\MDP. 

\subsection{The Dumer-Micciancio-Sudan reduction}

We start this subsection by describing the Dumer \etal's (henceforth DMS) reduction~\cite{DMS03}. The starting point of the DMS reduction is the \NP-hardness of approximating \NCP\ to any constant factor \cite{ABSS97}. Let us recall that in \NCP\ we are given a matrix $\bA\in\F^{n\times m}$, an integer $k$, and a target vector $\by\in\F^n$, and the goal is to determine whether there is any $\bx\in\F^m$ such that $\|\bA\bx - \by\|_0$ is at most $k$.  Arora et al. \cite{ABSS97} shows that for any constant $\gamma\geqs 1$, it is \NP-hard to distinguish the case when there exists $\bx$ such that $\|\bA\bx - \by\|_0\leqs k$ from the case when for all $\bx$ we have that $\|\bA\bx - \by\|_0> \gamma k$.

Dumer et al.\ introduce the notion of ``locally dense codes'' to enable a gadget reduction from $\NCP$ to $\MDP$.
Informally, a locally dense code is a linear code $\bL$ with minimum distance $d$
admitting a ball $\cB(\bs, r)$ centered at $\bs$ of radius\footnote{Note that for the ball to contain more than a single codeword, we must have $r\geqs \nicefrac{d}{2}$.} $r < d$ and containing a large (exponential in the dimension) number of codewords. Moreover, for the gadget reduction to \MDP\ to go through, we require not only the knowledge of the code, but also the center $\bs$ and  a linear transformation $\bT$ used to index the codewords in $\cB(\bs, r)$, i.e., $\bT$ maps $\cB(\bs, r)\cap \bL$ onto a smaller subspace. Given an instance $(\bA,\by,k)$ of \NCP, and a  locally dense code $(\bL,\bT,\bs)$ whose parameters (such as dimension and distance) we will fix later, Dumer et al.\ build the following matrix:
\begin{align}
	\bB =
	\begin{bmatrix}
		\bA \bT \bL & - \by \\
		\vdots & \vdots \\
		\bA \bT \bL & - \by \\
		\bL & - \bs \\
		\vdots & \vdots \\
		\bL & - \bs
	\end{bmatrix}
	\begin{tikzpicture}[baseline=8.5ex]
	\draw [decorate,decoration={brace,amplitude=10pt}]
	(-1.5,1.3) -- (-1.5,0) node [black,midway,xshift=35pt]{$b$ copies};
	\draw [decorate,decoration={brace,amplitude=10pt}]
	(-1.5,2.85) -- (-1.5,1.55) node [black,midway,xshift=35pt]{$a$ copies};
	\end{tikzpicture}
	,
	\label{eq:LDCGadget}
\end{align}

where $a,b$ are some appropriately chosen positive integers. If there exists $\bx$ such that $\|\bA\bx-\by\|_0\leqs k$ then consider $\bz'$ such that $\bT\bL \bz' = \bx$ (we choose the parameters of $(\bL,\bT,\bs)$, in particular the dimensions of $\bL$ and $\bT$ such that all these computations are valid). Let $\bz=\bz'\circ 1$, and note that $\|\bB\bz\|_0=a\|\bA\bx-\by\|_0+b\|\bL\bz-\bs\|_0\leqs ak+br$. In other words, if $(\bA,\by,k)$ is a YES instance of \NCP\ then $(\bB,ak+br)$ is a YES instance of \MDP. On the other hand if we had that for all $\bx$, the norm of $\|\bA\bx-\by\|_0$ is more than $\gamma k$ for some constant\footnote{Note that in the described reduction, we need the inapproximability  of \NCP\ to a factor greater than two, even to just reduce to the \emph{exact} version of \MDP.} $\gamma>2$, then it is possible to show that for all $\bz$ we have that $\|\bB\bz\|_0>\gamma' (ak+br)$ for any $\gamma'<\frac{2\gamma}{2+\gamma}$. The proof is based on a case analysis of the last coordinate of $\bz$. If that coordinate is 0, then, since $\bL$ is a code of distance $d$, we have $\|\bB\bz\|_0 \geqs bd > \gamma'(ak+br)$; if that coordinate is 1, then the assumption that $(\bA,\by,k)$ is a NO instance of \NCP\ implies that $\|\bB\bz\|_0 > a\gamma k >\gamma'(ak+br)$. Note that this gives an inapproximability for \MDP\ of ratio $\gamma' < 2$; this gap is then further amplified by a simple tensoring procedure.

We note that Dumer et al.\ were not able to find a deterministic construction of locally dense code with all of the above described properties. Specifically, they gave an efficient deterministic construction of a code $\bL$, but only gave a randomized algorithm that finds a linear transformation $\bT$ and a center $\bs$ w.h.p. Therefore, their hardness result relies on the assumption that $\NP\neq\RP$, instead of the more standard $\NP\neq \P$ assumption. Later, Cheng and Wan \cite{CW12} and Micciancio \cite{Mic14} provided constructions for such (families of) locally dense codes with an explicit center, and thus showed the constant ratio inapproximability of \MDP\ under the assumption of $\NP\neq\P$.

Trying to follow the DMS reduction in order to show the parameterized intractability of $k$-\MDP, we face the following three immediate obstacles. First, there is no inapproximability result known for  $k$-\NCP, for any constant factor greater than 1. Note that to use the DMS reduction, we need the parameterized inapproximability of $k$-\NCP, for an approximation factor which is greater than two. Second, the construction of locally dense codes of Dumer et al. only works when the distance is linear in the block length (which is a function of the size of the input). However, we need codes whose distance are bounded above by a function of the parameter of the problem (and not dependent on the input size). This is because the DMS reduction converts an instance $(\bA,\by,k)$ of $k$-\NCP\ to an instance $(\bB,ak+br)$ of $(ak+br)$-\MDP, and for this reduction to be an FPT reduction, we need $ak+br$ to be a function only depending on $k$, i.e., $d$, the distance of the code $\bL$ (which is at most $2r$), must be a function only of $k$. Third, recall that the DMS reduction needs to identify the vectors in the ball $\cB(\bs, r) \cap \bL$ with all the potential solutions of $k$-\NCP. Notice that the number of vectors in the ball is at most $(nm)^{O(r)}$ but the number of potential solutions of $k$-\NCP\ is exponential in $m$ (i.e. all $\bx \in \F^m$). However, this is impossible since $r \leqs d$ is bounded above by a function of $k$!

We overcome the first obstacle by proving the inapproximability of $k$-\NCP\ upto poly-logarithmic factors under $\W[1]\neq \FPT$ (see Subsection~\ref{sec:LDSoverview}). Note that in order to follow the DMS reduction, it suffices to just show the inapproximability of $k$-\NCP\ for some constant factor greater than 2; nonetheless the hardness of approximating $k$-\NCP\ up to poly-logarithmic factors is of independent interest. 

We overcome the third obstacle by introducing an intermediate problem in the DMS reduction, which we call the \emph{sparse nearest codeword problem}. The sparse nearest codeword problem is a promise problem which differs from $k$-\NCP\ in the following way: the objective here considers the distance of the target vector $\by$ to the nearest codeword $\bA\bx$ as well as the Hamming weight of the coefficient vector $\bx$ which realizes the nearest codeword. We show the inapproximability of the sparse nearest codeword problem (See Subsection~\ref{sec:NCPoverview}).

Finally, we overcome the second obstacle by introducing a variant of locally dense codes, which we call \emph{locally suffix dense codes}. Roughly speaking, we show that any systematic code which nears the sphere-packing bound (aka Hamming bound) in the high rate regime is a locally suffix dense code. Then we follow the DMS reduction with the new ingredient of locally suffix dense codes (replacing locally dense codes) to reduce the sparse nearest codeword problem to $k$-\MDP.

The full reduction goes through several intermediate steps, which we will describe in more detail in the coming subsections. The high-level summary of these steps is also provided in Figure~\ref{fig:overviewMDC}. Throughout this section, for any gap problem, if we do not specify the gap in the subscript, then it implies that the gap can be any arbitrary constant (or even super constant).
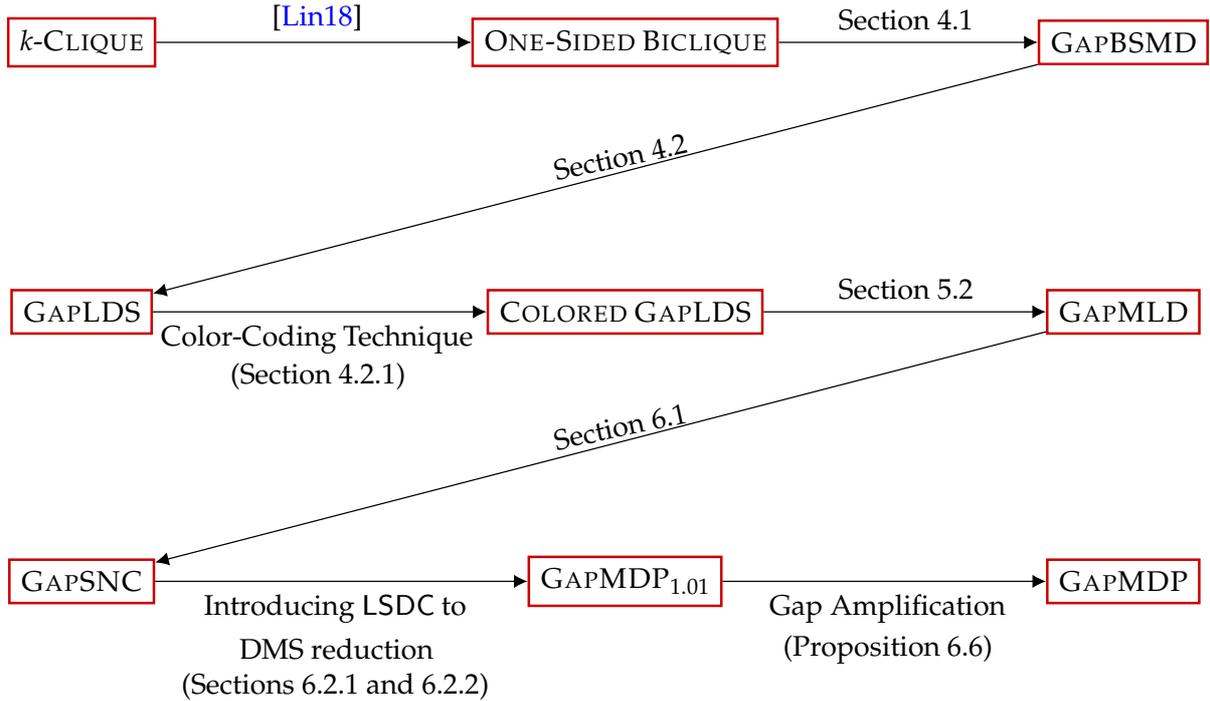
\begin{figure}[h!]
	\centering
	\resizebox{\textwidth}{!}{\begin{tikzpicture}

\node (gapLDS) [draw=red!80!black,thick] at (1.5, 0) {\footnotesize\gapLDS};

\node (colLDS) [draw=red!80!black,thick] at (7.5, 0) {\footnotesize\textsc{Colored} \gapLDS};
\node (gapmld) [draw=red!80!black,thick] at (13, 0) {\footnotesize\gapvec};

\node (kclique) [draw=red!80!black,thick] at (1.5, 3) {\footnotesize$k$-\textsc{Clique}};

\node (biclique) [draw=red!80!black,thick] at (7.5, 3) {\footnotesize\textsc{One-Sided Biclique}};
\node (gapBSMD) [draw=red!80!black,thick] at (13, 3) {\footnotesize\gapbsmd};

\node (gapsnc) [draw=red!80!black,thick] at (1.5, -3) {\footnotesize\sncp};

\node (gapmdc) [draw=red!80!black,thick] at (7.5, -3) {\footnotesize\mdp$_{1.01}$};
\node (ampgapmdc) [draw=red!80!black,thick] at (13, -3) {\footnotesize\mdp};

\draw [-{Latex[length=1.5mm, width=1.5mm]}] (kclique) -- (biclique);
\draw [-{Latex[length=1.5mm, width=1.5mm]}] (biclique) -- (gapBSMD);

\draw [-{Latex[length=1.5mm, width=1.5mm]}] (gapBSMD) -- (gapLDS);
\draw [-{Latex[length=1.5mm, width=1.5mm]}] (gapLDS) -- (colLDS);
\draw [-{Latex[length=1.5mm, width=1.5mm]}] (colLDS) -- (gapmld);

\draw [-{Latex[length=1.5mm, width=1.5mm]}] (gapmld) -- (gapsnc);
\draw [-{Latex[length=1.5mm, width=1.5mm]}] (gapsnc) -- (gapmdc);
\draw [-{Latex[length=1.5mm, width=1.5mm]}] (gapmdc) -- (ampgapmdc);

\node [above, align=center] at (4.1, 3) {\footnotesize  \cite{Lin15}};
\node [above, align=center] at (10.6, 3) {\footnotesize Section~\ref{sec:biclique-to-bsmd}};

\node [above, align=center,rotate=12] at (7.5, 1.48) {\footnotesize Section~\ref{sec:bsmd-to-LDS}};

\node [above, align=center] at (4.1, -0.6) {\footnotesize Color-Coding Technique};
\node [above, align=center] at (4.11, -1) {\footnotesize (Section~\ref{sec:colorLDS})};

\node [above, align=center] at (10.6, 0) {\footnotesize Section~\ref{sec:full-result}};

\node [above, align=center,rotate=12] at (7.5, -1.52) {\footnotesize Section~\ref{sec:mld-to-sncp}};

\node [above, align=center] at (4.3, -3.6) {\footnotesize Introducing \LSDC\ to};
\node [above, align=center] at (4.31, -4) {\footnotesize DMS reduction};
\node [above, align=center] at (4.3, -4.45) {\footnotesize (Sections~\ref{sec:dense-code} and \ref{sec:main-red})};

\node [above, align=center] at (10.4,-3.6) {\footnotesize Gap Amplification};
\node [above, align=center] at (10.4, -4.05) {\footnotesize (Proposition~\ref{prop:gap-amplification})};

\end{tikzpicture}}
	\caption{The figure provides an overview of our reduction from the canonical \W[1]-complete  $k$-Clique problem to the parameterized Minimum Distance problem. Our starting point is the gap one-sided biclique problem which is now known to be \W[1]-hard from Lin's work \cite{Lin15}. Based on the hardness of approximating the one-sided biclique problem, we  obtain the constant inapproximability of a different graph problem, namely the bipartite subgraph with minimum degree problem (\gapbsmd); see Section~\ref{sec:biclique-to-bsmd} for details.
		Next, we reduce \gapbsmd\ to the gap linear dependent set problem ($\gapLDS$) in Section~\ref{sec:bsmd-to-LDS}, and then use standard color-coding techniques in Section~\ref{sec:colorLDS} to obtain the constant inapproximability of a colored version of \gapLDS~over fields of non-constant size. 
In Section~\ref{sec:full-result}, we reduce the aforementioned colored version of \gapLDS\ to the \gapvec~problem over $\F$, and thus rule out constant approximation parameterized algorithms for \NCP.
		Via a simple reduction from \gapvec, in Section~\ref{sec:mld-to-sncp} we obtain the constant parameterized inapproximability of \sncp. In Section~\ref{sec:dense-code}, we formally introduce locally suffix dense codes and show how to efficiently (but probabilistically) construct them. These codes are then used in Section~\ref{sec:main-red} to obtain the parameterized innapproximability of $\mdp_{1.01}$. The final step is a known gap amplification by tensoring (Proposition~\ref{prop:gap-amplification}).
	} \label{fig:overviewMDC}
\end{figure}

\subsection{Parameterized Inapproximability of $k$-\LDS}\label{sec:LDSoverview}

To prove the inapproximability of \MDP\ we first consider its dual problem \LDS. Given a set $\bA$ of $n$ vectors over a finite field $\mathbb{F}_q$ and an integer $k$, the goal of $k$-\LDS\ problem is to decide if there are $k$ vectors in $\bA$ that are linearly dependent. The gap version of this problem (\gapLDS) is to distinguish the case when  there are $k$ vectors in  $\bA$ that are linearly dependent from the case when any $\gamma k$ ($\gamma\ge 1$) vectors in $\bA$ are linearly independent.
As briefly touched upon in the introduction, $k$-\LDS\ is closely related to $k$-\MDP:  one might think of $\bA$ as a matrix  in $\mathbb{F}_q^{n\times m}$ and the goal of $k$-\LDS\  is to find a vector  $\by\in\mathbb{F}_q^m$ with $\| \by \|_0\le k$ and $\bA\by=\mathbf{0}$, then $k$-\LDS\ is a yes-instance if and only if $\underset{\mathbf{0}\neq \bx\in\mathbb{F}_q^{n'}}{\min}\|\bA^{\perp}\bx\|_0\le k$, where $\bA^{\perp}\in\mathbb{F}_q^{m\times n'}$ is a  matrix with maximum number  of linearly independent column vectors such that $\bA\bA^{\perp}=\mathbf{0}$.
Note that the parameterized inapproximability of $k$-\MDP\  follows by the parameterized intractability of \gapLDS\ over the binary field. 

However, we cannot prove the hardness of \gapLDS\ over the binary field directly. Instead, we tackle this problem in three steps. Our first step is to show the parameterized intractability of \gapLDS\ over large fields by giving a reduction from the \textsc{One-Sided Biclique} problem to  \gapLDS. 

It will be more convenient to view the inapproximability result of \textsc{One-Sided Biclique} from~\cite{Lin15} as a hardness of the following problem which we call \textsc{Bipartite Subgraph with Minimum Degree (BSMD)}: given a bipartite graph $G$ and positive integers $s, h$ with $s \leq h$, find smallest (in terms of edges) non-empty subgraph of $G$ such that every left vertex of the subgraph has degree at least $h$ and every right vertex has degree at least $s$. Here the parameter is $s + h$. The gap version of \textsc{BSMD}, called $\gapbsmd_{\gamma}$, is to distinguish between (i) the YES case in which $G$ contains a complete bipartite graph with $s$ vertices on the left and $h$ on the right (which satisfies the property with $hs$ edges) and (ii) the NO case in which every desired subgraph must have at least $\gamma \cdot hs$ edges.

It is not hard to see that Lin's reduction, with appropriate parameter setting, gives \W[1]-hardness of $\gapbsmd_\gamma$ for any constant $\gamma$. In what follows we sketch the reduction from \gapbsmd \ to \gapLDS.
Given an instance $(G=(L,R,E), s, h)$ of \gapbsmd, we choose a large finite field $\mathbb{F}_q$ so that the vertices of $G$ can be treated as elements of $\mathbb{F}_q$. Then we construct a function $\iota : L\cup R\to \mathbb{F}_q^{h-1}$ such that:
\begin{itemize}
\item[(L1)]the images of any $s-1$ vertices in $L$ under $\iota$ are linearly independent;
\item[(L2)] the images of any $s$ vertices in $L$ under $\iota$ are linearly dependent.
\end{itemize}
Similarly,
\begin{itemize} 
\item[(R1)] the images of any $h-1$ vertices in $R$ under $\iota$ are linearly independent;
\item[(R2)] while the   images of any $h$ vertices in $R$ under $\iota$ are linearly dependent.
\end{itemize}
We point out that one can construct functions satisfying the above properties by mapping the vertices (now identified with field elements) to the columns of a Vandermonde matrix of appropriate dimensions padded with zeros. Finally  we  construct a vector  $\mathbf{w}_e\in\mathbb{F}_q^{q(h-1)}$ for every edge in $e\in E$  and then let $\{\mathbf{w}_e : e\in E\}$ be our target instance of \gapLDS. To define $w_e$, firstly, we partition each vector in $\mathbb{F}_q^{q(h-1)}$ into $q$ blocks. Each vertex in $G$ has its unique corresponding block. Each block has $h-1$ elements. Suppose $e=\{u,v\}$ where $u\in L$ and $v\in R$. We set the $u$-th block of the vector $\mathbf{w}_{e}$ equal to $\iota(v)$,  the the $v$-th block of the vector $\mathbf{w}_{e}$ equal to $\iota(u)$ and all the other blocks of $\mathbf{w}_{e}$ equal to $\mathbf{0}$. Note that the $u$-th block is equal to $\iota(v)$ (not to $\iota(u)$!) and the $v$-th block is equal to $\iota(u)$.

Suppose that  $u_1,\ldots,u_s\in L$ and $v_1,\ldots,v_h\in R$ form a complete bipartite subgraph in $G$. We will show that the $sh$-sized set $W=\{\mathbf{w}_{u_i,v_j} : i\in[s],j\in[h]\}$ is linearly dependent. It is not hard to see that for all $i\in[s]$, the restriction of $W$ to the $u_i$-th block is a set of $h$ vectors $\{\iota(v_1),\ldots,\iota(v_h)\}$. By the property (R2) of $\iota$, these vectors are linearly dependent, i.e., there are $b_1,\ldots,b_h\in\mathbb{F}_q$ such that $\sum_{j\in[h]}b_j\iota(v_j)=\mathbf{0}$.
Similarly, we can see that for all $j\in [h]$, the restriction of $W$ to the $v_j$-th block is a set of $s$ linearly dependent vectors $\{\iota(u_1),\ldots,\iota(u_s)\}$ and $\sum_{i\in[s]}a_i\iota(u_i)=\mathbf{0}$ for some $a_1,\ldots,a_s\in\mathbb{F}_q$.
It is easy to check that $\sum_{i\in[s],j\in[h]}a_ib_jw_{u_i,v_j}=\mathbf{0}$ and $a_ib_j$ ($i\in[s]$,$j\in[h]$) are not all zero.

On the other hand, if $G$ is a NO instance of $\gapbsmd_\gamma$, we will show that any linearly dependent set must have at least $\gamma \cdot hs$ vectors. Observe that every vector in the \gapLDS\ instance is corresponding to an edge in the graph $G$. Suppose $W$ is a set of linearly dependent vectors. We consider the graph $H_W$ in $G$ induced by the edges corresponding to vectors in $W$. We can argue that every vertex on the left side of $H_W$ must have at least $h$ neighbors and every vertex on the right side of $H_W$ must have at least $s$ neighbors, using properties (R1) and (L1) respectively. From the definition of the NO instance of $\gapbsmd_\gamma$, we can immediately conclude that $|W| \geqs \gamma \cdot hs$.

\subsection{Parameterized Inapproximability  of $k$-\NCP}\label{sec:NCPoverview}

In the second step, we prove the inapproximability of $k$-\NCP\ using the hardness of \gapLDS. Note that this is the step in which we reduce the field size to two, i.e., the hardness for \gapLDS\ described above is for a large field ($\mathbb{F}_q$ where $q = \Theta(n)$) but the $k$-\NCP\ problem is for $\F$.

The reduction is simpler to state if we use the dual (equivalent) formulation of $k$-\NCP\ called Maximum Likelihood Decoding ($k$-\kvec). The gap version of the problem, denoted by $\gapvec_\gamma$, can be formulated as follows: Given a matrix $\bA\in\F^{n\times m}$, a vector $\by\in\F^n$ and a positive integer $k\in\mathbb{N}$, the goal of $\gapvec_\gamma$ problem is to distinguish the case when there exists a nonzero vector $\bx\in\F^m$ with Hamming weight at most $k$ such that $\bA\bx=\by$ from the case when for all  $\bx\in\F^m$ with Hamming weight at most $\gamma k$, $\bA\bx\neq\by$.

\bigskip

\noindent\textbf{Reducing \gapLDS\ to $\gapvec$.}
We  present a   reduction from    \gapLDS\ to \gapvec. For ease of presentation, we think of the input of \gapvec\ as a set $\mathcal W$ of vectors (i.e. column vectors of $\bA$) in $\F^n$, the goal is to distinguish the case when there exist $k$  vectors  whose sum is $\by$ from the case when the sum of any  nonempty subset of  vectors in $\mathcal W$ of size   at most $\gamma k$ is not equal to $\by$.

We start with the hardness of $\gapLDS_\gamma$ where the input vectors are $\mathbb{F}_{2^d}$-vectors, for $d = \Theta(\log n)$. To reduce the field size, we transform vectors from $\mathbb{F}_{2^d}^m$ into $\mathbb{F}_{2}^{dm}$ using a linear bijection $f$ between  $\mathbb{F}_{2^d}^m$ and $\mathbb{F}_{2}^{dm}$. Observe that, even if $\bw_1,\ldots,\bw_k$ are linearly dependent vectors in $\mathbb{F}_{2^d}^m$, the sum of their images under $f$ is not necessarily zero. This is because we need coefficients $a_1,\ldots,a_k\in\mathbb{F}_{2^d}\setminus\{0\}$ so that $\sum_{i\in[k]}a_i\bw_i=\mathbf{0}$ and hence
$\sum_{i\in[k]}f(a_i\bw_i)=\mathbf{0}$, while $\sum_{i\in[k]}f(\bw_i)=\mathbf{0}$ may not hold.

With these in mind, we will try to construct an instance $\mathcal W'$ of \gapvec\ such that for all $a\in\mathbb{F}_{2^d}$ and $\bw\in\mathcal W$,  $f(a\bw)$ has a corresponding vector in $\mathcal W'$. And if $\mathcal W$ has $k$ linearly dependent vectors $\sum_{i\in[k]}a_i\bw_i=\mathbf{0}$, then the sum of vectors  corresponding to $f(a_1\bw_1),\ldots,f(a_k\bw_k)$ is equal to $\by$.

We need some mechanism to force the solution of \gapvec\ to select vectors corresponding to at least $k$ distinct vectors $f(a_1\bw_1),\ldots,f(a_k\bw_k)$.  To that end, we use the color-coding technique to reduce \gapLDS\ to its colored version (see Section~\ref{sec:colorLDS} for details).  Thus, we can assume that the instance $\mathcal W$ of \gapLDS\ comes with a coloring  $c : \mathcal W\to [k]$ such that if $\mathcal W$ is a YES instance, then there are exactly $k$ vectors  in $\mathcal W$ with distinct colors under $c$ that are linearly dependent.

For $i\in[k]$, let $\be_i \in \F^k$ be the vector whose $i$-th coordinate is $1$  and the other coordinates are equal to $0$. It is natural to construct a reduction as follows: given an instance $\mathcal W$ of \gapLDS\ over $\mathbb{F}_{2^d}$ and a coloring function $c : \mathcal W\to [k]$,  output  
\[
\mathcal W'=\{\be_{c(\bw)} \circ f(a\bw) : \bw\in\mathcal W,a\in\mathbb{F}_{2^d}\setminus\{0\}\} 
\text{ and }
\by = \vone_{k} \circ \bzero_{md}
\] 
as the target instance of \gapvec, where $\circ$ stands for the concatenation of vectors.

It is easy to see that if $\mathcal W$ contains $k$ linearly dependent vectors $\sum_{i\in[k]}a_i\bw_i=\mathbf{0}$, then the sum of the vectors $\be_{c(\bw_1)} \circ f(a_1\bw_1),\dots,\be_{c(\bw_k)} \circ f(a_k\bw_k)$ is equal to $\mathbf{1}_{dm}\circ \mathbf{0}_{k}$.

On the other hand, if any $3k$ vectors of $\mathcal W$ are linearly independent, we will show that for any  $W\subseteq \mathcal W'$ such that $\sum_{\bx\in W} \bx = \by$, we have $|W| \ge 3k$. Let the elements of $W$ be $\be_{c(\bw_1)} \circ f(a_1\bw_1), \dots, \be_{c(\bw_{k'})} \circ f(a_{k'}\bw_{k'})$, and suppose for the sake of contradiction that $k' < 3k$. By restricting the equation $\sum_{\bx\in W} \bx = \by$ onto the last $md$ coordinates, it follows that
\[
\sum_{i\in[k']}f(a_i\bw_i)=\mathbf{0},
\]
which  implies 
\[
\sum_{i\in[k']}a_i\bw_i=\mathbf{0}.
\]
At this moment, we cannot yet say that the set $\{\bw_1,\ldots,\bw_{k'}\}$ is linearly dependent (and therefore contradicts $k' < 3k$), because $\bw_1,\ldots,\bw_{k'}$ may contain duplicated elements. For example, it is possible that $k'=3$, $a_1+a_2+a_3=0$ and $\bw_1=\bw_2=\bw_3$ could be any nonzero vector. To get a contradiction by this way, we need to show that there is a vector $\bw$ which appears exactly once in $\bw_1,\ldots,\bw_{k'}$.

To see this, first observe that, since $k' < 3k$, there must be a color $j \in [k]$ that corresponds to at most two vectors from $\bw_1, \dots, \bw_{k'}$ (duplicated counted). However, if we restrict the equation $\sum_{\bx\in W} \bx = \by$ to only the $j$-th coordinate, we can see that the left hand side equals to the number of occurrences of color $j$ modulo 2, whereas the right hand side is one. This means that there is only a unique vector among $\bw_1, \dots, \bw_{k'}$ that is of the $j$-th color; this immediately implies that this vector occurs only once in $\bw_1, \dots, \bw_{k'}$. This in turns means that $\{\bw_1,\ldots,\bw_{k'}\}$ is linearly dependent and therefore $k' > 3k$, a contradiction.

Note that our argument only gives hardness of approximation with factor $3 - \varepsilon$ for any $\varepsilon > 0$. Nonetheless, this factor suffices for the subsequent steps. We can in fact also prove hardness for every constant factor, using a slight tweak of the above idea. Please see Section~\ref{sec:full-result} for more details

\bigskip

\noindent\textbf{Reducing $\gapvec$ to $\sncp$.}
Now we introduce the \emph{sparse nearest codeword problem} that we will use to prove the parameterized inapproximability of $k$-\MDP. 
We define the gap version of this problem, denoted by $\sncp_{\gamma}$ (for some constant $\gamma\geqs 1$) as follows: on input $(\bA',\by',k)$, distinguish between the YES case where there exists $\bx\in \F^m$ such that $\|\bA'\bx-\by'\|_0 + \|\bx\|_0 \leqs k$, and the NO case where for all $\bx$ (in the entire space), we have $\|\bA'\bx-\by'\|_0 + \|\bx\|_0 >\gamma k$. We highlight that the difference between $k$-\NCP\ and $\sncp_{\gamma}$ is that the objective also depends on the Hamming weight of the coefficient vector $\bx$. We sketch below the reduction from an instance $(\bA,\by,k)$ of $\gapvec_{\gamma}$ to an instance $(\bA',\by',k)$ of $\sncp_{\gamma}$. Given $\bA,\by$, let
$$
\bA'=\begin{bmatrix} 
    \bA  \\
    \vdots \\
    \bA\\
    \Id 
    \end{bmatrix}
    \begin{tikzpicture}[baseline=2.2ex]
\draw [decorate,decoration={brace,amplitude=10pt}]
(-1.5,1.4) -- (-1.5,0) node [black,midway,xshift=45pt] { $\gamma k + 1$ copies};
\end{tikzpicture},\ \ \\
\by'=\begin{bmatrix} 
    \by  \\
    \vdots \\
    \by\\
    \bzero 
    \end{bmatrix}
    \begin{tikzpicture}[baseline=2.2ex]
\draw [decorate,decoration={brace,amplitude=10pt}]
(-1.5,1.4) -- (-1.5,0) node [black,midway,xshift=45pt] { $\gamma k + 1$ copies};
\end{tikzpicture}
.$$

Notice that for any $\bx$ (in the entire space), we have
$$\|\bA'\bx - \by'\|_0 = (\gamma k + 1)\|\bA\bx-\by\|_0+\|\bx\|_0,$$ 
and thus both the completeness and soundness of the reduction easily follow.

\subsection{Parameterized Inapproximability of $k$-\MDP}\label{sec:MDPoverview}

Let us recall that in the \NCP\ we are given a matrix $\bA \in \F^{n\times q}$, an integer $k$, and a target vector $\by\in\F^n$, and the goal is to determine whether there is exists a vector $\bx\in\F^m$ such that $\|\bA\bx - \by\|_0 $ is at most $k$. A natural first idea for reducing an \NCP\ instance $(\bA\in \F^{n\times m},\by \in\F^{n})$ to \MDP\ would be to introduce the $n\times (m+1)$ matrix
\begin{align}
	\bB =
	\begin{bmatrix}
		\bA  & - \by \\
	\end{bmatrix};
\end{align}
then any solution $\bx\in \F^m$ of the \NCP\ instance with $\|\bA\bx-\by\|_0\le k$ would give a solution $\bx'=\bx\circ 1\in \F^{m+1}$ of the \MDP\ instance with $|\bB \bx'|\le k$. However, the problem is that if the \MDP\ instance has a solution $\bx'=\bx \circ 0$ (i.e., the last coordinate is zero), then $\|\bB \bx'\|_0\le k$ implies only $\|\bA\bx\|_0\le k$, but does not imply $\|\bA\bx-\by\|_0\le k$. Thus we need a way to force the last coordinate to 1 in the solution of the \MDP\ instance. We can try to use error correcting codes for this purpose. Let $\bL\in \mathbb{F}_2^{h\times m}$ be the generator matrix of an error correcting code with minimum distance $d$. Let us consider now the matrix
\begin{align}
	\bB =
	\begin{bmatrix}
		\bA  & - \by \\
		\bL & - \bs\\
	\end{bmatrix};
\end{align}
for some arbitrarily chosen vector $\bs\in \mathbb{F}_2^{h}$. Now for any nonzero $\bx'=\bx\circ 0$, we have $\|\bB\bx'\|_0=\|\bA\bx\|_0+\|\bL\bx\|_0\ge \|\bA\bx\|_0+d$, since $\bx$ is a nonzero vector and the code generated by $\bL$ has minimum distance $d$. Thus the second term gives a penalty of $d$ if the last coordinate of $\bx'$ is 0. However, the problem now is that if $\bx$  is a solution of the \NCP\ instance with $\|\bA\bx-\by\|_0\le k$, then defining $\bx'=\bx\circ 1$ gives $\|\bB\bx'\|_0=\|\bA\bx-\by\|_0+\|\bL\bx-\bs\|_0=k+\|\bL\bx-\bs\|_0$. We would need to argue that this second term $\|\bL\bx-\bs\|_0$ is small, much smaller than the penalty $d$ in the previous case. While in general, there is no reason why the chosen vector $\bs$ would be close to $\bL\bx$ for the hypothetical solution $\bx$. 
However, we can hope to increase the chances of finding such an $\bs$, if we could somehow enforce that there are many distinct choices of $\bx$ for which we would have $\|\bA\bx-\by\|_0\le k$. This can indeed by achieved by padding the matrix $\bA$ with additional dummy zero columns, and padding the corresponding solution $\bx$ with additional dummy coordinates. In particular, this ensures that for even a random choice of $\bs$ (sampled from an appropriate distribution) is close to $\bL\bx$ for at least one of the choices of $\bx$ with non-negligible probability. We formalize this intuition in the form of {\em Locally Suffix Dense Codes} described below.
\bigskip

\noindent\textbf{Locally Suffix Dense Codes.}
A locally suffix dense code (LSDC) is a linear code $\bL \in \F^{h \times m}$ of block length $h$ with minimum distance $d$ such that the following holds. For any choice of prefix $\bx \in \F^q$ and a randomly drawn suffix vector $\bs \overset{{\rm u.a.r}}\sim \F^{h-q}$ the vector $\bx\circ\bs$ is $r$ close to the code $\bL$ with non-negligible probability. In other words, for every choice of prefix vector $\bx$, the restriction of the code $\bL$ to the affine subspace $V_{\bx}:=\{\bx \} \times \F^{h-q}$ is {\em dense}. While one can think of the suffix vector $\bs$ as being analogous to the center in LDC, note that $\bs$ is merely a suffix which is used to extend the vector $\bx$. Therefore, due to systematicity of the code, the distance of the vector $\bx\circ\bs$ to the code $\bL$ depends only on the choice of $\bs$, which allows us to ensure that the parameters $r$ and $d$ can be chosen to functions of $k$, without explicitly depending on the block length $h$.

As in the case of Dumer et al.\, we too cannot find an explicit suffix $\bs$ for the LSDCs that we construct, but instead provide an efficiently samplable distribution such that, for any $\mathbf{x} \in \F^q$, the probability (over $\bs$ sampled from the distribution) that $  \mathcal{B}(\mathbf{x}\circ \bs,r) \cap \bL \neq \emptyset$ is non-negligible. This is what makes our reduction from \sncp$_{2.5}$ to \mdp$_{1.01}$ randomized. We will not elaborate more on this issue here, but focus on the (probabilistic) construction of such codes. For convenience, we will assume throughout this overview that $k$ is much smaller than $d$, i.e., $k = 0.001d$.

Recall that the sphere-packing bound (aka Hamming bound) states that a binary code of block length $h$ and distance $d$ can have at most $2^h/|\cB(\vzero, \lceil \frac{d - 1}{2} \rceil)|$ codewords; this is simply because the balls of radius $\lceil \frac{d - 1}{2} \rceil$ at the codewords do not intersect. Our main theorem regarding the existence of locally dense suffix code is that any systematic code that is ``near'' the sphere-packing bound is a locally dense suffix code with $r = \lceil \frac{d - 1}{2} \rceil$. Here ``near'' means that the number of codewords must be at least $2^h/|\cB(\vzero, \lceil \frac{d - 1}{2} \rceil)|$ divided by $f(d) \cdot \poly(h)$ for some function $f$ that depends only on $d$. (Equivalently, this means that the message length must be at least $h - (d/2 + O(1))\log h$.) The BCH code over binary alphabet is an example of a code satisfying such a condition.

While we will not sketch the proof of the existence theorem here, we note that the general idea is as follows. We choose $\bL$ in such a way that for every choice of $\bx \in \F^q$, the restriction of $\bL$ to the affine subspace $V_{\bx}$ is near the sphere packing bound. Then from the above discussion, it follows that for $\bs$ sampled uniformly from $\F^{h-q}$, the probability that $\cB(\bx\circ\bs, r) \cap \bL  \neq \emptyset$ is at least the probability that a random point in $\F^{h-q}$ is within distance $r = \lceil \frac{d - 1}{2} \rceil$ of some codeword from $V_\bx \cap \bL$. The latter is non-negligible from our choice of $\bL$ which ensures that the restriction of the code to any affine subspace $V_{\bx}$ nears the sphere-packing bound. 

Finally, we remark that our proof here is completely different from the DMS proof of existence of locally dense codes. Specifically, DMS uses a group-theoretic argument to show that, when a code exceeds the Gilbert-Varshamov bound, there must be a center $\bs$ such that $\cB(\bs, r)$ contains many codewords. Then, they pick a random linear map $\bT$ and show that w.h.p. $\bT(\cB(\bs, r) \cap \bL)$ is the entire space. Note that this second step does not use any structure of $\cB(\bs, r) \cap \bL$; their argument is simply that, for any sufficiently large subset $Y$, a random linear map $\bT$ maps $Y$ to an entire space w.h.p. However, such an argument fails for us, due to the fact that, in LSDC, we want to ensure that $\bL$ is dense (up to Hamming distance $r = O(k)$) in all the affine subspaces $\{V_\bx : \bx \in \F^{q}\}$, instead of exactly covering the whole space $\F^{h}$. Now if we insist on exactly covering all the affine subspaces using a linear map $\bT$, as in the DMS construction, we will then have $\bT(\cB(\bs, r)) \supseteq \F^h$. This would instead require $r$ to depend on $h$, whereas in our setting we want $r$ to depend only on the parameter $k$.

\bigskip

\noindent\textbf{Reducing $\sncp_{2.5}$ to  \mdp$_{1.01}$.} Equipped with the notion of locally suffix dense codes defined above, we now prove the hardness of $\mdp_{1.01}$.

We begin with an instance $(\bA,\by,k)$ with $\bA\in\mathbb{F}_2^{n\times q}$ of $\sncp_{2.5}$. Let $\bL \in \F^{h \times m}$ be a locally suffix dense code with distance  $d\approx 2.5 k$, where we can choose $h,m \leq {\rm poly}(q,d)$. We also choose a vector $\bs\in\mathbb{F}_2^{h}$ uniformly at random with the first $q$ coordinates equal to zero and construct the matrix
\begin{align*}
\bB =
\begin{bmatrix}
  \bA  & \bzero_{n \times (m-q)} & - \by \\
  \bL & 						 &- \bs\\
\end{bmatrix}.
\end{align*}

We shall show that with probability at least $p = p(k)$ {\footnote{Here the probability $p = p(k)$ depends only on the parameter $k$}}, we have that $(\bB,k+d/2)$ is an instance of $\mdp_{1.01}$. 

If $(\bA,\by,k)$ is a YES instance of $\sncp_{2.5}$, then there exists $\bx\in \cB(\mathbf 0,k)$ such that $\|\bA\bx-\by\|_0\leqs k$. Furthermore, from the guarantees of the locally suffix dense codes, with probability at least $p$ (over the choice of the vector $s$), we have $\|\bL\bx-\bs\|_0\le (d-1)/2$. Therefore, setting $\bz=\bx'\circ 1$, we get that
\[
\|\bB\bz\|_0 = \|\bA'\bx'-\by\|_0+ \|\bL\bx'-\bs\|_0\le k+ (d-1)/2.
\]
In other words,  if $(\bA,\by,k)$ is a YES instance of \NCP, then $(\bB,k+d/2)$ is a YES instance of $\MDP_{1.01}$. 

\bigskip

\noindent On the other hand, if we had that $\|\bA\bx-\by\|_0 + \|\bx\|_0> 2.5 k$ for all $\bx$, then for all non-zero $\bz\in\mathbb{F}_2^m$,
\[
\|\bB(\bz\circ 0)\|_0=\|\bA'\bz\|_0+\|\bL\bz\|_0\ge d,
\]
and
\[
\|\bB(\bz\circ 1)\|_0=\|\bA'\bz-\by\|_0+\|\bL\bz-\bs\|_0\ge 2.5 k.
\]
Since from our choice of parameters, we have $d\approx 2.5 k\ge 1.01(k+d/2)$, which implies that $(\bB,k+d/2)$ is a NO instance of $\MDP_{1.01}$.

\bigskip

\noindent\textbf{Gap Amplification for $\mdp_{1.01}$.} It is  well known that the distance of the tensor product of two linear codes is the product of the distances of the individual codes (see Proposition~\ref{prop:gap-amplification} for a formal statement). We can use this proposition to reduce $\mdp_{\gamma}$ to $\mdp_{\gamma^2}$ for any $\gamma\geqs 1$. In particular, we can obtain, for any constant $\gamma$, the intractability of $\mdp_\gamma$ starting from $\mdp_{1.01}$ by just recursively tensoring the input code $\lceil \log_{1.01} \gamma\rceil $ times.

\subsection{Parameterized Intractability of $k$-\SVP}\label{sec:SVPoverview}

We begin this subsection by briefly describing Khot's reduction. The starting point of Khot's reduction is the \NP-hardness of approximating \CVP\ in every $\ell_p$ norm to any constant factor \cite{ABSS97}. Let us recall that in \CVP\ in the $\ell_p$ norm, we are given a matrix $\bA \in \Z^{n\times m}$, an integer $k$, and a target vector $\by\in\Z^n$, and the goal is to determine whether there is any $\bx\in\Z^m$ such that\footnote{Previously, we use $\|\bA\bx - \by\|_p$ instead of $\|\bA\bx - \by\|_p^p$. However, from the fixed parameter perspective, these two versions are equivalent since the parameter $k$ is only raised to the $p$-th power, and $p$ is a constant in our setting.} $\|\bA\bx - \by\|_p^p$ is at most $k$.  The result of Arora et al. \cite{ABSS97} states that for any constant $\gamma\geqs 1$, it is \NP-hard to distinguish the case when there exists $\bx$ such that $\|\bA\bx - \by\|_p^p\leqs k$ from the case when for all (integral) $\bx$ we have that $\|\bA\bx - \by\|_p^p> \gamma k$. 
Khot's reduction proceeds in four steps. First, he constructs a gadget lattice called the  ``BCH Lattice'' using BCH Codes. Next, he reduces \CVP\ in the $\ell_p$ norm (where $p\in(1,\infty)$) to an instance of \SVP\ on an intermediate lattice by using the BCH Lattice.
This intermediate lattice has the following property. For any YES instance of \CVP\ the intermediate lattice contains multiple copies of the witness of the YES instance; For any NO instance of \CVP\ there are also many ``annoying vectors'' (but far less than the total number of YES instance witnesses)  which look like witnesses of a YES instance. However, since the annoying vectors are outnumbered, Khot reduces this intermediate lattice to a proper \SVP\ instance, by randomly picking a sub-lattice via a random homogeneous linear constraint on the coordinates of the lattice vectors (this annihilates all the annoying vectors while retaining at least one witness for the YES instance). Thus he obtains some constant factor hardness for \SVP. Finally, the gap is amplified via ``Augmented Tensor Product''. It is important to note that Khot's reduction is randomized, and thus his result of inapproximability of \SVP\ is based on $\NP\neq \RP$.

Trying to follow Khot's reduction, in order to show the parameterized intractability of $k$-\SVP, we face only one obstacle: there is no known parameterized inapproximability of $k$-\CVP\  for any constant factor greater than 1. Let us denote by $\snvp_{p,\eta}$ for any constant $\eta\geqs 1$ the gap version of $k$-\CVP\ in the $\ell_p$ norm. Recall that in $\snvp_{p,\eta}$ we are given a matrix $\bA \in \Z^{n\times m}$, a target vector $\by\in\Z^n$, and a parameter $k$, and we would like to distinguish the case when there exists $\bx \in \Z^m$ such that $\|\bA\bx - \by\|_p^p\leqs k$ from the case when for all $\bx \in \Z^m$ we have that $\|\bA\bx - \by\|_p^p> \eta k$.
As it turns out, our reduction from $k$-\LDS\ to $\gapvec$, can be translated to show the inapproximability of $\gapvec$ over any larger (but still constant) field in a straightforward manner. We then provide a simple for $\gapvec$ over large field to $\snvp_p$ that establishes $\W[1]$-hardness of the latter.

Once we have established the constant parameterized inapproximability of $\snvp_p$, we follow Khot's reduction, and everything goes through as it is to establish the inapproximability for some factor of the gap version of $k$-\SVP\ in the $\ell_p$ norm (where $p\in (1,\infty)$). We denote by $\svp_{p,\gamma}$ for some constant $\gamma(p)\geqs 1$ the the gap version of $k$-\SVP\ (in the $\ell_p$ norm) where we are given a matrix $\bB \in \Z^{n\times m}$ and a parameter $k \in \N$, and we would like to distinguish the case when there exists a non-zero $\bx \in \Z^m$ such that $\|\bB\bx\|_p^p\leqs k$ from the case when for all $\bx \in \Z^m \setminus \{\bzero\}$ we have that $\|\bB\bx\|_p^p> \gamma k$. 
Let $\gamma^*:=\frac{2^p}{2^{p-1}+1}$. Following Khot's reduction, we obtain the inapproximability of $\svp_{p,\gamma^*}$.

Summarizing, in Figure~\ref{fig:overviewSV}, we provide the proof outline of our \W[1]-hardness  result of $\svp_p$ to some constant approximation factor, for every $p\in (1,\infty)$.

\begin{figure}[h!]
    \centering
    \resizebox{\textwidth}{!}{\begin{tikzpicture}

\node (gapLDS) [draw=red!80!black,thick] at (1.5, 0) {\footnotesize \gapLDS};

\node (colLDS) [draw=red!80!black,thick] at (7.5, 0) {\footnotesize \textsc{Colored} \gapLDS};
\node (gapmld) [draw=red!80!black,thick] at (13, 0) {\footnotesize \gapvec};

\node (kclique) [draw=red!80!black,thick] at (1.5, 3) {\footnotesize $k$-\textsc{Clique}};

\node (biclique) [draw=red!80!black,thick] at (7.5, 3) {\footnotesize \textsc{One-Sided Biclique}};
\node (gapBSMD) [draw=red!80!black,thick] at (13, 3) {\footnotesize \gapbsmd};

\node (gapsnc) [draw=red!80!black,thick] at (4.25, -3) {\footnotesize \snvp$_p$};

\node (gapmdc) [draw=red!80!black,thick] at (10.55, -3) {\footnotesize \svp$_{p,\frac{2^p}{2^{p-1}+1} }$};

\node [above, align=center] at (7.2, -3.5) {\footnotesize Khot's Reduction};
\node [above, align=center] at (7.25, -3.9) {\footnotesize Lemma~\ref{lem:snvp-to-svp}};

\draw [-{Latex[length=1.5mm, width=1.5mm]}] (kclique) -- (biclique);
\draw [-{Latex[length=1.5mm, width=1.5mm]}] (biclique) -- (gapBSMD);

\draw [-{Latex[length=1.5mm, width=1.5mm]}] (gapBSMD) -- (gapLDS);
\draw [-{Latex[length=1.5mm, width=1.5mm]}] (gapLDS) -- (colLDS);
\draw [-{Latex[length=1.5mm, width=1.5mm]}] (colLDS) -- (gapmld);

\draw [-{Latex[length=1.5mm, width=1.5mm]}] (gapmld) -- (gapsnc);
\draw [-{Latex[length=1.5mm, width=1.5mm]}] (gapsnc) -- (gapmdc);

\node [above, align=center] at (4.1, 3) {\footnotesize \cite{Lin15}};
\node [above, align=center] at (10.6, 3) {\footnotesize Section~\ref{sec:biclique-to-bsmd}};

\node [above, align=center,rotate=12] at (7.5, 1.48) {\footnotesize Section~\ref{sec:bsmd-to-LDS}};

\node [above, align=center] at (4.1, -0.6) {\footnotesize Color-Coding Technique};
\node [above, align=center] at (4.11, -1) {\footnotesize (Section~\ref{sec:colorLDS})};

\node [above, align=center] at (10.6, 0) {\footnotesize Section~\ref{sec:full-result}};

\node [above, align=center,rotate=18] at (7.85, -1.85) {\footnotesize Section~\ref{sec:csp-to-snvp}};

\end{tikzpicture}}
    \caption{
The figure provides an overview of our reduction from the canonical \W[1]-complete  $k$-Clique problem to the parameterized Shortest Vector problem in the $\ell_p$ norm, where $p\in(1,\infty)$. The proof outline of the reduction from $k$-Clique to \gapvec\ (to rule out constant approximation parameterized algorithms for \NCP) is reiterated in the above figure.
		In Section~\ref{sec:csp-to-snvp}, we reduce \gapvec\ to \snvp\ and obtain the constant inapproximability of \CVP. Then, applying
	        Lemma~\ref{lem:snvp-to-svp} (i.e., Khot's reduction)  implies  the parameterized inapproximability of $\svp_{p,\frac{2^p}{2^{p-1}+1}}$. } \label{fig:overviewSV}
\end{figure}
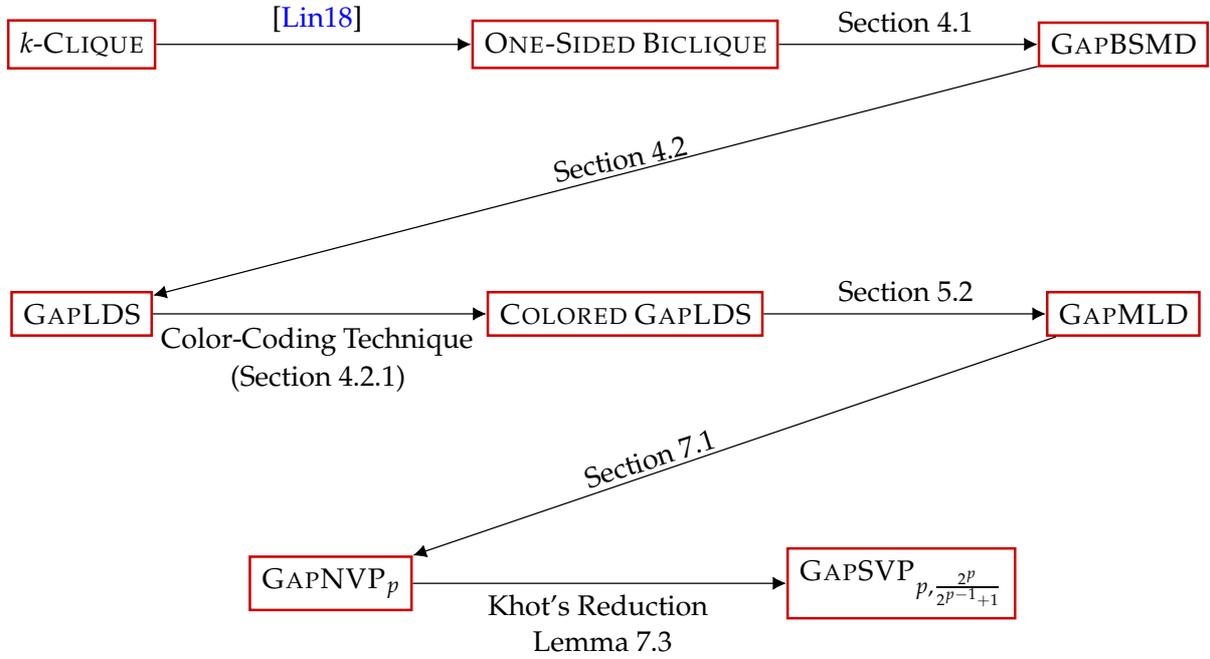


\section{Preliminaries} \label{sec:prelim}

We use the following notation throughout the paper.

\noindent\textbf{Notation.} We use boldface (e.g. $\bx, \bA$ or $\bzero$) to stress that the objects are vectors or matrices. When we refer to a vector $\bx$, we assume that it is a column vector. Moreover, since subscripts will often be used for other purposes, we instead use the notation $\bx[i]$ for $i \in \mathbb{N}$ to denote the value of the $i$-th coordinate of the vector. For matrices, we use $\bA[i]$ to denote its $i$-th column vector.

For $p \in \mathbb{N}$, we use $\vone_p$ (respectively, $\vzero_p$) to denote the all ones (respectively, all zeros) vector of length $p$.
We sometimes drop the subscript if the dimension is clear from the context. 
For $p,q \in \mathbb{N}$, we use $\vzero_{p\times q}$ to denote the all zeroes matrix of $p$ rows and $q$ columns. We use Id$_q$ to denote the identity matrix of $q$ rows and $q$ columns.

For any vector $\bx\in\mathbb{R}^d$, the $\ell_p$ norm of $\bx$
is defined as
$\ell_p(\bx) = \|\bx\|_p = \left(\sum_{i=1}^d|\bx[i]|^p\right)^{1/p}$.
Thus, $\ell_{\infty}(\bx) = \|\bx\|_{\infty} = \max_{i\in[d]}\{|\bx_i|\}$.
The $\ell_0$ norm of $\bx$ is defined as
$\ell_0(\bx) = \|\bx\|_0=|\{\bx[i]\neq 0: i\in [d]\}|$, i.e.,
the number of non-zero entries of $\bx$. We note that the $\ell_0$ norm is also referred to as the Hamming norm. For $a\in\mathbb N$, $t\in\mathbb{N}\cup \{0\}$, and $\bs\in\{0,1\}^a$, we use $\mathcal{B}_a(\bs,t)$ to denote the Hamming ball of radius $t$ centered at $\bs$, i.e., $\mathcal{B}_a(\bs,t)=\{\bx\in\{0,1\}^a\mid \|\bs-\bx\|_0\leqs t\}$. Finally, given two vectors $\bx$ and $\by$, we use $\bx \circ \by$ to denote the concatenation of vectors $\bx$ and $\by$. 

We sometimes use $\dotcup$ to emphasize that the sets are disjoint; for instance, we may write $G = (L \dotcup R, E)$ for bipartite graphs to indicate that $L, R$ are disjoint.

\subsection{Parameterized Promise Problems and (Randomized) FPT Reductions}
In this subsection, we briefly describe the various kinds of fixed-parameter reductions that are used in this paper. We start by defining the notion of promise problems in the fixed-parameter world, which is naturally analogues to promise problems in the NP world (see e.g.~\cite{Gol06}).

\begin{definition}
A parameterized promise problem $\Pi$ is a pair of parameterized languages $(\Pi_{YES}, \Pi_{NO})$ such that $\Pi_{YES} \cap \Pi_{NO} = \emptyset$.
\end{definition}

Next, we formalize the notion of algorithms for these parameterized promise problems:

\begin{definition}
A deterministic algorithm $\cA$ is said to be an \emph{FPT algorithm for $\Pi$} if the following holds:
\begin{itemize}
\item On any input $(x, k)$, $\cA$ runs in time $f(k)|x|^c$ for some computable function $f$ and constant $c$.
\item (YES) For all $(x, k) \in \Pi_{YES}$, $\cA(x, k) = 1$.
\item (NO) For all $(x, k) \in \Pi_{NO}$, $\cA(x, k) = 0$.
\end{itemize}
\end{definition}

\begin{definition}
A Monte Carlo algorithm $\cA$ is said to be a \emph{randomized FPT algorithm for $\Pi$} if the following holds:
\begin{itemize}
\item $\cA$ runs in time $f(k)|x|^c$ for some computable function $f$ and constant $c$ (on every randomness).
\item (YES) For all $(x, k) \in \Pi_{YES}$, $\Pr[\cA(x, k) = 1] \geqs 2/3$.
\item (NO) For all $(x, k) \in \Pi_{NO}$, $\Pr[\cA(x, k) = 0] \geqs 2/3$.
\end{itemize}
\end{definition}

Finally, we define deterministic and randomized reductions between these problems.

\begin{definition}
A (deterministic) FPT reduction from a parameterized promise problem $\Pi$ to a parameterized promise problem $\Pi'$ is a (deterministic) procedure that transforms $(x, k)$ to $(x', k')$ that satisfies the following:
\begin{itemize}
\item The procedure runs in $f(k) |x|^c$ for some computable function $f$ and constant $c$.
\item There exists a computable function $g$ such that $k' \leqs g(k)$ for every input $(x, k)$.
\item For all $(x, k) \in \Pi_{YES}$, $(x', k') \in \Pi_{YES}'$.
\item For all $(x, k) \in \Pi_{NO}$, $(x', k') \in \Pi_{NO}'$.
\end{itemize}
\end{definition}

\begin{definition}					\label{def:rand-fpt-red}
A randomized (one sided error) FPT reduction from a parameterized promise problem $\Pi$ to a parameterized promise problem $\Pi'$ is a randomized procedure that transforms $(x, k)$ to $(x', k')$ that satisfies the following:
\begin{itemize}
\item The procedure runs in $f(k) |x|^c$ for some computable function $f$ and constant $c$ (on every randomness).
\item There exists a computable function $g$ such that $k' \leqs g(k)$ for every input $(x, k)$.
\item For all $(x, k) \in \Pi_{YES}$, $\Pr[(x', k') \in \Pi_{YES}'] \geqs 1/(f'(k)|x|^{c'})$ for some computable function $f'$ and constant $c'$.
\item For all $(x, k) \in \Pi_{NO}$, $\Pr[(x', k') \in \Pi_{NO}'] = 1$.
\end{itemize}
\end{definition}

Note that the above definition corresponds to the notion of \emph{Reverse Unfaithful Random (RUR) reductions} in the classical world~\cite{J90}. The only difference (besides the allowed FPT running time) is that the above definition allows the probability that the YES case gets map to the YES case to be as small as $1/(f'(k)\poly(|x|))$, whereas in the RUR reductions this can only be $1/\poly(|x|)$. The reason is that, as we will see in Lemma~\ref{lem:red-intract} below, FPT algorithms can afford to repeat the reduction $f'(k)\poly(|x|)$ times, whereas polynomial time algorithms can only repeat $\poly(|x|)$ times.

We also consider randomized two-sided error FPT reductions, which are defined as follows. 

\begin{definition}
	A randomized \emph{two sided error} FPT reduction from a parameterized promise problem $\Pi$ to a parameterized promise problem $\Pi'$ is a randomized procedure that transforms $(x, k)$ to $(x', k')$ that satisfies the following:
	\begin{itemize}
		\item The procedure runs in $f(k) |x|^c$ for some computable function $f$ and constant $c$ (on every randomness).
		\item There exists a computable function $g$ such that $k' \leqs g(k)$ for every input $(x, k)$.
		\item For all $(x, k) \in \Pi_{YES}$, $\Pr[(x', k') \in \Pi_{YES}'] \geqs 2/3$.
		\item For all $(x, k) \in \Pi_{NO}$, $\Pr[(x', k') \in \Pi_{NO}']  \geqs 2/3$.
	\end{itemize}
\end{definition}

Note that this is not a generalization of the standard randomized FPT reduction (as defined in Definition \ref{def:rand-fpt-red}), since the definition requires the success probabilities for the YES and NO cases to be constants independent of the parameter. In both cases, using standard techniques randomized FPT reductions, can be used to transform randomized FPT algorithms for $\Pi'$ to randomized FPT algorithm for $\Pi$, as stated by the following lemma:

\begin{lemma}
	Suppose there exists a randomized (one sided/ two sided) error FPT reduction from a parameterized promise problem $\Pi$ to a parameterized promise problem $\Pi'$. If there exists a randomized FPT algorithm $\mathcal{A}$ for $\Pi'$, there there also exists a randomized FPT algorithm for $\Pi$.
	\label{lem:red-intract}
\end{lemma}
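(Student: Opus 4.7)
I would handle the two variants separately. The two-sided error case reduces to a routine composition followed by majority amplification, while the one-sided error case is the more interesting one since the reduction's probability of mapping a YES-instance to a YES-instance can be as small as $1/(f'(k)|x|^{c'})$, so simply composing the reduction with $\mathcal{A}$ does not produce success probability anywhere near $2/3$.

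For the two-sided error case, I would first amplify $\mathcal{A}$ by $O(1)$ independent repetitions with a majority vote, obtaining an algorithm $\mathcal{A}'$ for $\Pi'$ whose error is at most $1/10$. The composed procedure---apply the reduction once to $(x,k)$ and then run $\mathcal{A}'$ on its output---succeeds with probability at least $(2/3)(9/10) = 3/5$ in both the YES and NO cases, and since $3/5 > 1/2$, a constant number of fresh independent repetitions of this composed procedure combined by a majority vote boost the success probability above $2/3$ by Chernoff. The running time is FPT throughout, because the reduction runs in time $f(k)|x|^c$, the transformed parameter is at most $g(k)$, so each call to $\mathcal{A}$ runs in time at most $f_{\mathcal{A}}(g(k)) \cdot (f(k)|x|^c)^{O(1)}$, and all amplifications multiply these by constants.

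For the one-sided error case, let $N := f'(k)|x|^{c'}$, the reciprocal of the lower bound on the reduction's YES-to-YES probability. I would first amplify $\mathcal{A}$ by $\Theta(\log N)$ repetitions with majority into an algorithm $\mathcal{A}'$ with error at most $1/(100N)$, then run the reduction $N' := 3N$ times independently on $(x,k)$ to obtain instances $(x'_1,k'_1),\dots,(x'_{N'},k'_{N'})$, apply $\mathcal{A}'$ to each, and output YES iff at least one of the $\mathcal{A}'$-invocations outputs YES. In the NO case every $(x'_i,k'_i)$ lies in $\Pi'_{NO}$ deterministically, so by a union bound over the $N'$ calls to $\mathcal{A}'$ the final output is NO with probability at least $1 - N'/(100N) = 97/100$. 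In the YES case, the probability that at least one reduction run produces an instance in $\Pi'_{YES}$ is at least $1 - (1-1/N)^{3N} \geq 1 - e^{-3}$, and conditioned on that event, $\mathcal{A}'$ flags such an instance correctly with probability at least $1 - 1/(100N)$, so the overall success probability is comfortably above $2/3$.

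The main delicacy is balancing the two amplifications in the one-sided case: the reduction must be run $\Theta(N)$ times to hit a YES-instance in the YES case, which in turn forces us to push $\mathcal{A}$'s error down to $o(1/N)$ so that the union bound over these $N$ calls still correctly rejects in the NO case. Fortunately $O(\log N)$ repetitions of $\mathcal{A}$ suffice, so the overall running time is $O(N \log N) \cdot f_{\mathcal{A}}(g(k)) \cdot (f(k)|x|^c)^{O(1)}$, which is of the form $h(k) \cdot |x|^{O(1)}$ for some computable $h$, as required for an FPT algorithm.
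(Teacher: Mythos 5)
Your proposal is correct and follows essentially the same approach as the paper's: amplify $\mathcal{A}$ by $\Theta(\log N)$ repetitions with a majority vote to drive its error below $1/\poly(N)$, run the reduction $\Theta(N)$ times, and take the OR, using the one-sided guarantee plus a union bound in the NO case and a hitting argument in the YES case (the paper packages ``reduce once then run amplified $\mathcal{A}$'' as a subroutine invoked $\Theta(N)$ times, which is the same algorithm with slightly different constants). The two-sided case you dispatch by composition and constant-round majority amplification, again matching the paper's footnoted treatment.
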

\begin{proof}
	We prove this for one sided error reductions, the other case follows using similar arguments. Suppose there exists a randomized one sided error reduction from $\Pi$ to $\Pi'$. 
	Let $f'(\cdot),c'$ be as in Definition \ref{def:rand-fpt-red}. We consider the following subroutine. Given instance $(x,k)$ of promise problem $\Pi$, we apply the randomized reduction on $(x,k)$ to get instance $(x',k')$ of promise problem $\Pi'$. 
	We run $\mathcal{A}$ on $(x',k')$ repeatedly $100\log (f'(k)|x|^c)$ times, and output the majority of the outcomes. 
	
	If $(x,k)$ is a YES instance, then with probability at least $1/(f'(k)|x|^{c'})$, $(x',k')$ is also a YES instance for $\Pi'$. Using Chernoff bound, conditioned on $(x',k')$ being a YES instance, the majority of the outcomes is YES with probability at least $1  - e^{-10\log(f'(k)|x|^{c'})}$. Therefore using union bound, the output of the above algorithm is YES with probability at least $1/(f'(k)|x|^{c'}) - e^{-10\log(f'(k)|x|^{c'})} \geqs 1/2(f'(k)|x|^{c'})$. Similarly, if $(x,k)$ is a NO instance, then the subroutine outputs YES with probability at most $ e^{-10\log(f'(k)|x|^{c'})}$.
	
	Equipped with the above subroutine, our algorithm is simply the following: given $(x,k)$, it runs the subroutine $10f'(k)|x|^{c'}$ times. If at least one of the outcomes is YES, then the algorithm outputs YES, otherwise it outputs NO. Again we can analyze this using elementary probability. If $(x,k)$ is a YES instance, then the algorithm outputs NO only if outcomes of all the trials is NO. Therefore, the algorithm outputs YES with probability at least $1 - ( 1 - 1/2(f'(k)|x|^{c'}))^{10f'(k)|x|^{c'}} \geqs 0.9$. Conversely, if $(x,k)$ is a NO instance, then by union bound, the algorithm outputs NO with probability at least $1 - 10f'(k)|x|^{c'}  e^{-10\log(f'(k)|x|^{c'})} \geqs 0.9$. Finally, if $\mathcal{A}$ is FPT, then the running time of the proposed algorithm is also FPT. Hence the claim follows{\footnote{For the case of $2$-sided error, we change the final step of the algorithm as follows; we invoke the subroutine $O(\log 1/\delta)$-times (where $\delta$ is a constant) and again output the majority of the outcomes. The guarantees again follow by a Chernoff bound argument. }}.
\end{proof}	

Since the conclusion of the above proposition holds for both types of randomized reductions, we will not be distinguishing between the two types in the rest of the paper. 

\subsection{Bipartite Subgraph with Minimum Degrees}

As stated in the proof overview, it will be convenient to view Lin's hardness of \textsc{Biclique} in terms of hardness of approximating Bipartite Subgraph with Minimum Degree, where the goal, given a bipartite graph $G$, is to find a non-empty subgraph $H$ of $G$ such that every left vertex in $H$ has degree at least $h$ and every right vertex of $H$ has degree at least $s$. The parameter here is $s + h$.

The gap version that we will use is to distinguish between the YES case where there is such a subgraph with $hs$ edges, i.e., a complete bipartite subgraph with $s$ left vertices and $h$ right vertices, and the NO case where every such subgraph $H$ must contains more than $\gamma \cdot hs$ edges (for $\gamma \geqs 1$). This is defined more precisely below.

\begin{framed}
$\gamma$-Gap Bipartite Subgraph with Minimum Degree Problem ($\gapbsmd_{\gamma}$)

{\bf Input: } A bipartite graph $G= (L\dotcup R, E)$ with $n$ vertices, $s,h\in\mathbb{N}$

{\bf Parameter: } $s + h$

{\bf Question: } Distinguish between the following two cases:
\begin{itemize}
\item (YES) There is a complete bipartite subgraph of $G$ with $s$ vertices in $L$ and $h$ vertices in $R$.
\item (NO) For any non-empty subgraph $H$ of $G$ such that every left vertex of $H$ has degree at least $h$ and every right vertex of $H$ has degree at least $s$, $H$ contains at least $\gamma \cdot (sh)$ edges.
\end{itemize}
\end{framed}

\subsection{Linear Dependent Set Problems}
We next introduce the parameterized Linear Dependent Problem. In this problem, we are given $\mathbb{F}_q$-vectors $\bw_1, \dots, \bw_n$ and the goal is to find a smallest number of vectors that are linearly dependent. It should be stressed here that the field $\mathbb{F}_q$ is part of the input (i.e. $q$ will be of the order of $n$ in our proofs); this is indeed the main difference between this problem and the Minimum Distance Problem which is in fact equivalent to the Linear Dependent Problem for a fixed $q = 2$.

\begin{framed}
$\gamma$-Gap Linear Dependent Set Problem ($\gapLDS_{\gamma}$)

{\bf Input: } A field $\mathbb{F}_q$, a set $\mathcal{W} \subseteq \mathbb{F}_q^m$ and a positive integer $k\in\mathbb{N}$.

{\bf Parameter: } $k$

{\bf Question: } Distinguish between the following two cases:
\begin{itemize}
\item (YES) there exist $k$ distinct vectors
$\bw_1, \dots, \bw_k \in \cW$ and $a_1, \dots, a_k \in \mathbb{F}_q \setminus \{0\}$ such that $\sum_{i \in [k]} a_i \bw_i = \bzero$ (which implies that $\bw_1, \dots, \bw_k$ are linearly dependent)
\item (NO) there are no $\gamma \cdot k$ vectors in $\mathcal W$ that are linearly dependent
\end{itemize}
\end{framed}

Notice here that the guarantee in the YES case is slightly stronger than ``there exist $k$ vectors that are linearly dependent'', as we also require the coefficients to be non-zero. (This would be automatically true if, for instance, any $k - 1$ vectors are linearly dependent.) We remark that this does not significantly change the complexity of the problem, as our hardness applies to both versions; however, it will be more convenient in subsequent steps to have such an additional guarantee.

It will also be convenient to work with a colored version of $\gapLDS$ which we introduce below. 

\begin{framed}
$\gamma$-Gap Colored Linear Dependent Set Problem ($\gapLDS_{\gamma}^{\textsf{col}}$)

{\bf Input: } A field $\mathbb{F}_q$, a set $\mathcal{W} \subseteq \mathbb{F}_q^m$, a positive integer $k\in\mathbb{N}$ and a coloring $c: \cW \to [k]$

{\bf Parameter: } $k$

{\bf Question: } Distinguish between the following two cases:
\begin{itemize}
\item (YES) there exist $k$ vectors
$\bw_1, \dots, \bw_k \in \cW$ of distinct colors (i.e. $c(\{\bw_1, \dots, \bw_k\}) = [k]$) and $a_1, \dots, a_k \in \mathbb{F}_q \setminus \{0\}$ such that $\sum_{i \in [k]} a_i \bw_i = \bzero$
\item (NO) there are no $\gamma \cdot k$ vectors in $\mathcal W$ that are linearly dependent
\end{itemize}
\end{framed}
We point out that in we require the vectors to have distinct colors only in the YES case; in the NO case, we assume that there are no $\gamma \cdot k$ linearly dependent vectors of arbitrary colors.

\subsection{Minimum Distance Problem}
In this subsection, we define the fixed-parameter variant of the minimum distance problem and other relevant parameterized problems. We actually define them as gap problems -- as later in the paper, we show the constant inapproximability of these problems. 

For every $\gamma\geqs 1$, we define the $\gamma$-gap minimum distance problem\footnote{In the parameterized complexity literature, this problem is referred to as the $k$-Even set problem \cite{DFVW99} and the input to the problem is (equivalently) given through the parity-check matrix, instead of the generator matrix as described in this paper.} as follows:

\begin{framed}
$\gamma$-Gap Minimum Distance Problem ($\mdp_{\gamma}$)

{\bf Input: } A matrix $\bA \in \mathbb{F}_2^{n \times m}$ and a positive integer $k \in \mathbb{N}$

{\bf Parameter: } $k$

{\bf Question: } Distinguish between the following two cases:
\begin{itemize}
\item (YES) there exists $\bx \in \mathbb{F}_2^m \setminus \{\bzero\}$ such that $\|\bA\bx\|_0 \leqs k$
\item (NO) for all $\bx \in \mathbb{F}_2^m \setminus \{\bzero\}$, $\|\bA\bx\|_0 > \gamma \cdot k$
\end{itemize}
\end{framed}

Next, for every $\gamma\geqs 1$, we define the $\gamma$-gap maximum likelihood decoding problem\footnote{The maximum likelihood decoding problem is also equivalently known in the literature as the nearest codeword problem.} as follows:

\begin{framed}
$\gamma$-Gap Maximum Likelihood Decoding Problem ($\gapvec_{\gamma}$)

{\bf Input: } A matrix $\bA \in \mathbb{F}_2^{n \times m}$, a vector $\by \in \mathbb{F}_2^n$ and a positive integer $k \in \mathbb{N}$

{\bf Parameter: } $k$

{\bf Question: } Distinguish between the following two cases:
\begin{itemize}
\item (YES) there exists $\bx \in \cB_{m}(\mathbf{0}, k)$ such that $\bA\bx = \by$
\item (NO) for all $\bx \in \cB_{m}(\mathbf{0}, \gamma k)$, $\bA\bx \ne \by$
\end{itemize}
\end{framed}

For brevity, we shall denote the exact version (i.e., $\gapvec_1$) of the problem as $\kvec$.

It should be noted that the Odd Set problem discussed in the introduction is closely related to \gapvec; in particular, the only different is that, in \odds, $\by$ is not part of the input but is always fixed as $\vone$, the all-ones vector. Indeed, it is not hard to see that our parameterized hardness of approximation for \gapvec\ also transfers to that of \gapodds. This is formulated in Appendix~\ref{app:oddset}.

We also define the \gapvec\ problem over larger (constant) field $\mathbb{F}_p$ below; this version of the problem will be used in proving hardness of Nearest Vector Problem. In this version, we have an additional requirement that, in the YES case, the solution $\bx$ must be a $\{0, 1\}$-vector. (Note that this is automatically the case for \gapvec\ over $\F$.)

\begin{framed}
$\gamma$-Gap Maximum Likelihood Decoding Problem over $\mathbb{F}_p$ ($\gapvec_{\gamma, p}$)

{\bf Input: } A matrix $\bA \in \mathbb{F}_p^{n \times m}$, a vector $\by \in \mathbb{F}_p^n$ and a positive integer $k \in \mathbb{N}$

{\bf Parameter: } $k$

{\bf Question: } Distinguish between the following two cases:
\begin{itemize}
\item (YES) there exists $\bx \in \{0, 1\}^m$ with $\|\bx\|_0 \leqs k$ such that $\bA\bx = \by$
\item (NO) for all $\bx \in \mathbb{F}_q^m$ such that $\|\bx\|_0 \leqs \gamma k$, $\bA\bx \ne \by$
\end{itemize}
\end{framed}

Finally, we introduce a ``sparse'' version of the \gapvec\ problem called the sparse nearest codeword problem, and later in the paper, we show a reduction from \gapvec\ to this problem, followed by a reduction from this problem to \mdp. As its name suggest, the sparse nearest codeword problem priorities not only the Hamming distance of the codeword $\bA\bx$ to the target vector $\by$ but also the ``sparsity'' (i.e. Hamming weight) of $\bx$. Formally, for every $\gamma\geqs 1$, we define the $\gamma$-gap sparsest nearest codeword problem as follows:

\begin{framed}
$\gamma$-Gap Sparse Nearest Codeword Problem ($\sncp_{\gamma}$)

{\bf Input: } A matrix $\bA \in \mathbb{F}_2^{n \times m}$, a vector $\by \in \mathbb{F}_2^n$ and a positive integer $k \in \mathbb{N}$

{\bf Parameter: } $k$

{\bf Question: } Distinguish between the following two cases:
\begin{itemize}
\item (YES) there exists $\bx \in \mathbb{F}_2^m$ such that $\|\bA\bx - \by\|_0 + \|\bx\|_0 \leqs k$
\item (NO) for all $\bx \in \mathbb{F}_2^m$, $\|\bA\bx - \by\|_0 + \|\bx\|_0 > \gamma \cdot k$
\end{itemize}
\end{framed}

\subsection{Shortest Vector Problem and Nearest Vector Problem}

In this subsection, we define the fixed-parameter variants of the shortest vector and nearest vector problems. As in the previous subsection, we define them as gap problems, for the same reason that later in the paper, we show the constant inapproximability of these two problems. 

Fix $p\in\mathbb{R}_{\geqs 1}$. For every $\gamma\geqs 1$, we define the $\gamma$-gap shortest vector problem in the $\ell_p$-norm{\footnote{Note that we define $\snvp$ and $\svp$ problems in terms of $\ell^p_p$, whereas traditionally, it is defined in terms of $\ell_p$. However, it is sufficient for us to work with the $\ell^p_p$ variant, since an $\alpha$-factor inapproximability in $\ell^p_p$ translates to an $\alpha^{1/p}$-factor inapproximabillity in the $\ell_p$ norm, for any $\alpha \geqs 1$}} as follows:
\begin{framed}
	$\gamma$-Gap Shortest Vector Problem ($\svp_{p,\gamma}$)
	
	{\bf Input: } A matrix $\bA \in \mathbb{Z}^{n \times m}$ and a positive integer $k \in \mathbb{N}$
	
	{\bf Parameter: } $k$
	
	{\bf Question: } Distinguish between the following two cases:
	\begin{itemize}
		\item (YES) there exists $\bx \in \mathbb{Z}^{m} \setminus \{\bzero\}$ such that $\|\bA\bx\|^p_p \leqs k$
		\item (NO) for all $\bx \in \mathbb{Z}^m \setminus \{\bzero\}$, $\|\bA\bx\|^p_p > \gamma \cdot k$
	\end{itemize}
\end{framed}

For every $\gamma\geqs 1$, we define the $\gamma$-gap nearest vector problem in the $\ell_p$-norm as follows:

\begin{framed}
	$\gamma$-Gap Nearest Vector Problem ($\snvp_{p,\gamma}$)
	
	{\bf Input: } A matrix $\bA \in \mathbb{Z}^{n \times m}$, vector $\by \in \mathbb{Z}^n$ and a positive integer $k \in \mathbb{N}$
	
	{\bf Parameter: } $k$
	
	{\bf Question: } Distinguish between the following two cases:
	\begin{itemize}
		\item (YES) there exists $\bx \in \Z^{m}$ such that $\|\bA\bx - \by\|^p_p \leqs k$
		\item (NO) for all $\bx \in \Z^m$, $\|\bA\bx - \by\|^p_p > \gamma \cdot k$
	\end{itemize}
\end{framed}

\subsection{Error-Correcting Codes}
An error correcting code $C$ over alphabet $\Sigma$ is a function $C: \Sigma^m \to \Sigma^h$ where $m$ and $h$ are positive integers which are referred to as the {\em message length} (aka \emph{dimension}) and {\em block length} of $C$ respectively. Intuitively, $C$ encodes an original message of length $m$ to an encoded message of length $h$.
The {\em distance} of a code, denoted by $d(C)$, is defined as $\underset{x \ne y \in \Sigma^m}{\min} \|C(x)- C(y)\|_0$, i.e., the number of coordinates on which $C(x)$ and $C(y)$ disagree.
We also define the systematicity of a code as follows: Given $s \in \mathbb N$, a code $C:\Sigma^m\to \Sigma^{h}$ is {\em $s$-systematic} if there exists a size-$s$ subset of $[h]$, which for convenience we identify with $[s]$, such that for every $x \in \Sigma^{s}$ there exists $w \in \Sigma^m$ in which $x = C(w)\mid_{[s]}$.
We use the shorthand $[h,m,d]_{|\Sigma|}$ to denote a code of message length $m$, block length $h$, and distance $d$. 

Additionally, we will need the following existence and efficient construction of BCH codes for every message length and distance parameter.

\begin{theorem}[BCH Code~\cite{H59,BR60}] \label{thm:bch}
For any choice of $h, d \in \mathbb{N}$ such that $h + 1$ is a power of two and that $d \leqs h$, there exists a linear code over $\F$ with block length $h$, message length $h - \left\lceil\frac{d-1}{2}\right\rceil\cdot\log (h + 1)$ and distance $d$. Moreover, the generator matrix of this code can be computed in $\poly(h)$ time. 
\end{theorem}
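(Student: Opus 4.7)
My plan is to invoke the classical construction of narrow-sense binary BCH codes, which is standard textbook material, and verify that the parameters match. Let $m = \log(h+1)$, so $h+1 = 2^m$. Work in the field $\mathbb{F}_{2^m}$, fix a primitive element $\alpha$, and consider the cyclic code over $\F$ of length $h = 2^m - 1$ whose generator polynomial $g(x) \in \mathbb{F}_2[x]$ is defined as the least common multiple of the minimal polynomials of $\alpha, \alpha^2, \ldots, \alpha^{d-1}$ over $\F$. The code $C$ is the set of polynomials in $\F[x]/(x^h - 1)$ divisible by $g(x)$, viewed as vectors in $\F^h$.

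First I would bound the degree of $g(x)$. Since the Frobenius automorphism $z \mapsto z^2$ fixes $\F$, the minimal polynomial of $\alpha^i$ equals the minimal polynomial of $\alpha^{2i \bmod h}$. Hence among the indices $\{1,2,\ldots,d-1\}$, it suffices to take minimal polynomials for the odd indices, of which there are exactly $\lceil (d-1)/2 \rceil$. Each such minimal polynomial has degree at most $m = \log(h+1)$, because $\mathbb{F}_{2^m}$ is a degree-$m$ extension of $\F$. Therefore $\deg g \leq \lceil (d-1)/2 \rceil \cdot \log(h+1)$, and the dimension of $C$ is $h - \deg g \geq h - \lceil (d-1)/2 \rceil \cdot \log(h+1)$. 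If the dimension is strictly larger than required, I simply drop generators to obtain a linear subcode of the exact desired dimension; this cannot decrease the minimum distance.

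Next I would invoke the BCH bound to conclude that the distance is at least $d$. The standard proof proceeds by taking any nonzero codeword $c(x)$, whose Fourier transform vanishes at the $d-1$ consecutive points $\alpha, \alpha^2, \ldots, \alpha^{d-1}$; if $c$ had weight at most $d-1$, then a Vandermonde determinant argument on the support coordinates would force $c$ to be identically zero, a contradiction. Since the dimension may exceed the target, I am free to pass to a subcode as noted above, so the distance is still at least $d$.

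Finally, I would verify the polynomial-time construction. The field $\mathbb{F}_{2^m}$ can be represented by finding an irreducible polynomial of degree $m$ over $\F$ in $\poly(m) = \poly(\log h)$ time, and a primitive element can be found by brute search over $\mathbb{F}_{2^m}$ in $\poly(h)$ time. The minimal polynomial of each $\alpha^i$ is computed by finding the conjugacy class $\{\alpha^i, \alpha^{2i}, \alpha^{4i}, \ldots\}$ under Frobenius and expanding the corresponding product, which runs in $\poly(h)$ time. Taking the LCM over the $\lceil (d-1)/2 \rceil \leq h$ minimal polynomials, and then reading off the generator matrix of $C$ from the cyclic shifts of $g(x)$, are both $\poly(h)$ operations. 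No step is a real obstacle; the only mildly nontrivial ingredient is the Vandermonde-style proof of the BCH bound, which is classical.
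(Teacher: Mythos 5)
The paper states this result as a citation to the classical works of Hocquenghem and Bose--Ray-Chaudhuri and does not supply its own proof, so there is no in-paper argument to compare against. Your proof is the standard textbook construction of narrow-sense binary BCH codes and is correct: the degree bound via cyclotomic cosets (odd representatives among $\{1,\ldots,d-1\}$, each minimal polynomial of degree at most $\log(h+1)$) gives the claimed dimension lower bound, the BCH bound via the Vandermonde argument gives designed distance $d$, dropping generators to hit the exact target dimension is legitimate since passing to a subcode cannot decrease minimum distance, and every step of the construction (building $\mathbb{F}_{2^m}$, finding a primitive element, computing minimal polynomials and their LCM, writing out the generator matrix of the cyclic code) is comfortably $\poly(h)$. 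One cosmetic remark: the theorem's ``distance $d$'' should be read as ``distance at least $d$'' (the usual $[h,k,d]$ convention, and the only thing used downstream in the paper), which is exactly what the BCH bound gives; the true minimum distance can exceed $d$, and that is fine.
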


Finally, we define the tensor product of codes which will be used later in the paper. Consider two linear codes $C_1 \subseteq \F^m$ (generated by $\mathbf{G}_1 \in \F^{m \times m'}$) and ${C}_2 \subseteq \F^n$  (generated by $\mathbf{G}_2 \in \F^{n \times n'}$). Then the tensor product of the two codes ${C}_1 \otimes {C}_2 \subseteq \F^{m \times n}$ is defined as 
\begin{equation*}
{C}_1 \otimes {C}_2 = \{\mathbf{G}_1\mathbf{X}\mathbf{G}^\top_2  | \mathbf{X} \in \F^{m' \times n '}\}.
\end{equation*}
We will only need two properties of tensor product codes. First, the generator matrix of the tensor products of two linear codes $C_1, C_2$ with generator matrices $\mathbf{G}_1, \mathbf{G}_2$ can be computed in polynomial time in the size of $\mathbf{G}_1, \mathbf{G}_2$. Second, the distance of ${C}_1 \otimes {C}_2$ is exactly the product of the distances of the two codes, i.e., $$d({C}_1 \otimes {C}_2) = d(C_1) d(C_2).$$


\section{Parameterized Inapproximability of Linear Dependent Set}
\label{sec:lin-dep}
In this section, we show that the Linear Dependent Set problem has no constant factor FPT approximation algorithm unless $\W[1]=\FPT$. More formally, we prove the following:

\begin{theorem}\label{thm:ldsmain}
For every $\gamma \geqs 1$, $\gapLDS_{\gamma}$ and $\gapLDS_\gamma^\col$ are \W[1]-hard.
\end{theorem}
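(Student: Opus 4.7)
My plan is to prove Theorem~\ref{thm:ldsmain} by a two-stage reduction: first I will establish \W[1]-hardness of $\gapLDS_\gamma$ by a gap-preserving reduction from $\gapbsmd_\gamma$, and then derive \W[1]-hardness of the colored version $\gapLDS_\gamma^\col$ via a standard color-coding argument. The starting point is the \W[1]-hardness of $\gapbsmd_\gamma$ for every constant $\gamma$, which one obtains with minor parameter adjustments from Lin's \W[1]-hardness of approximating \textsc{One-Sided Biclique}~\cite{Lin15}; I will treat this as a black box (its derivation appears in the cited sections of the overview).

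Given an instance $(G=(L\dotcup R, E), s, h)$ of $\gapbsmd_\gamma$ with parameter $s+h$, the plan is to choose a prime power $q$ with $q \geqs |L| + |R|$ (so that $L \cup R$ embeds injectively into $\mathbb{F}_q$) and to construct a map $\iota:L\cup R \to \mathbb{F}_q^{h-1}$ using two Vandermonde-type blocks padded with zeros: for $u \in L$ identified with $\alpha_u \in \mathbb{F}_q$, set $\iota(u) = (1, \alpha_u, \alpha_u^2, \dots, \alpha_u^{s-2}, 0, \dots, 0)$, and for $v \in R$ set $\iota(v) = (1, \alpha_v, \alpha_v^2, \dots, \alpha_v^{h-2})$. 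Non-vanishing of Vandermonde determinants (combined with the fact that $L$-images live in an $(s-1)$-dimensional subspace while $R$-images span an $(h-1)$-dimensional space) yields exactly the desired properties (L1)--(L2) and (R1)--(R2). Then for each edge $e=\{u,v\}\in E$ with $u\in L$, $v\in R$, I build a vector $\bw_e\in\mathbb{F}_q^{q(h-1)}$ partitioned into $q$ blocks of length $h-1$ indexed by vertices: the $u$-th block equals $\iota(v)$, the $v$-th block equals $\iota(u)$, and every other block is $\bzero$. The target \gapLDS\ instance is $\mathcal{W}=\{\bw_e : e\in E\}$ with parameter $k = sh$, which is a function of $s+h$ as required for an FPT reduction.

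For completeness, if $G$ contains a biclique with $U=\{u_1,\dots,u_s\}\subseteq L$ and $V=\{v_1,\dots,v_h\}\subseteq R$, then by (R2) there exist $b_1,\dots,b_h\in\mathbb{F}_q\setminus\{0\}$ with $\sum_j b_j\iota(v_j)=\bzero$, and by (L2) there exist $a_1,\dots,a_s\in\mathbb{F}_q\setminus\{0\}$ with $\sum_i a_i\iota(u_i)=\bzero$. A direct computation on each block then shows $\sum_{i,j} a_ib_j\,\bw_{u_i,v_j}=\bzero$, giving $k=sh$ linearly dependent vectors with all coefficients non-zero. For soundness, suppose $W\subseteq\mathcal{W}$ is linearly dependent, and let $H_W$ be the subgraph on the corresponding edges. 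Restricting any dependence to the $u$-th block, (R1) forces $u$ to have at least $h$ neighbors in $H_W$; symmetrically (L1) forces each $v\in R$ to have at least $s$ neighbors. The NO case of $\gapbsmd_\gamma$ then yields $|W|\geqs\gamma sh=\gamma k$, as desired. I do not anticipate serious obstacles here beyond bookkeeping; the main subtlety is ensuring the non-zero coefficient requirement in the YES case of $\gapLDS_\gamma$, which the construction above satisfies by design, and choosing $q$ large enough while still polynomial in the input so that the whole reduction remains FPT.

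For the colored variant $\gapLDS_\gamma^\col$, the plan is to apply the standard color-coding technique~(cf.\ Section~\ref{sec:colorLDS} in the paper): pick a (perfect-hash, or random) coloring $c:\mathcal{W}\to[k]$ and argue that in the YES case, with probability at least $k^{-O(k)}$, the $k$ distinct vectors witnessing linear dependence receive $k$ distinct colors, while in the NO case no set of $\gamma k$ vectors is linearly dependent regardless of coloring. Derandomizing via a family of perfect hash functions turns this into a deterministic FPT reduction that preserves the gap, yielding \W[1]-hardness of $\gapLDS_\gamma^\col$. The only care needed is that the success probability drops only by a function of $k$, which is compatible with Definition~\ref{def:rand-fpt-red} and hence with Lemma~\ref{lem:red-intract}.
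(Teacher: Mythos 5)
Your proposal mirrors the paper's own proof of Theorem~\ref{thm:ldsmain}: you reduce $\gapbsmd_\gamma$ to $\gapLDS_\gamma$ via the same Vandermonde-based edge-vector gadget (one block per vertex of $G$, with the $u$-th block of $\bw_{\{u,v\}}$ carrying $\iota(v)$ and the $v$-th block carrying $\iota(u)$), argue the YES case by multiplying the two dependence coefficients and the NO case by restricting to a single vertex block, and then pass to $\gapLDS_\gamma^\col$ via a perfect hash family. The only pieces you defer without detail are the derivation of \W[1]-hardness of $\gapbsmd_\gamma$ itself --- which the paper proves by reinterpreting Lin's \textsc{One-Sided Biclique} theorem together with the counting argument in Claim~\ref{cl:bi} and the parameter choice $h=(k+6)!(\gamma k^2)^{k^2}$ in Theorem~\ref{thm:bsmd} --- and the block-shifting trick of Lemma~\ref{lem:lds-coloring} that merges the $2^{O(k)}\mathrm{poly}(n)$ perfect-hash instances into a single many-one $\gapLDS_\gamma^\col$ instance (a plain perfect hash family only gives a Turing reduction); both of these are standard and your outline is otherwise the same as, and consistent with, the paper's.
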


The proof consists of two steps. First, we will reformulate Lin's reduction for the \text{Biclique} problem in terms of hardness of \gapbsmd. Then, we reduce \gapbsmd to our target problem \gapLDS.

\subsection{Translating \textsc{One-Sided Biclique} to \gapbsmd}\label{sec:biclique-to-bsmd}

In the first step of our proof, we will show that $\gapbsmd$ is \W[1]-hard to approximate to within any constant factor, as stated more precisely below.

\begin{theorem} \label{thm:bsmd}
For every $\gamma \geqs 1$, $\gapbsmd_\gamma$ is \W[1]-hard.
\end{theorem}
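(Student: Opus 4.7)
The plan is to derive $\gapbsmd_\gamma$-hardness from Lin's \W[1]-hardness of approximating the one-sided \textsc{Biclique} problem, used in the following strong gap form: for every computable $g:\N\to\N$, it is \W[1]-hard (parameter $s+h$) to distinguish bipartite graphs $G$ containing $K_{s,h}$ from those containing no $K_{s',h}$ with $s'\ge \lceil s/g(h)\rceil$, with $s$ allowed to be as large as any computable function of $h$. Given a target ratio $\gamma\ge 1$, I would set $g(h):=\binom{\gamma h}{h}$ and hand the resulting instance $(G,s,h)$ to the $\gapbsmd_\gamma$ problem unchanged, keeping the same parameter $s+h$.

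Completeness is immediate: a $K_{s,h}$ in $G$ is a subgraph with exactly $sh$ edges whose left vertices all have degree $h$ and right vertices all have degree $s$, which is a valid YES certificate for $\gapbsmd_\gamma$. For soundness, suppose for contradiction that there is a non-empty $H\subseteq G$ in which every left vertex has degree $\ge h$, every right vertex has degree $\ge s$, and $|E(H)|\le \gamma sh$. A double count gives $|R(H)|\le |E(H)|/s\le \gamma h$, while $|L(H)|\ge s$ because any right vertex of $H$ has $\ge s$ distinct left neighbors (and the degree lower bounds together with non-emptiness force both sides of $H$ to be non-empty). Now count pairs $(u,T)$ with $u\in L(H)$ and $T$ an $h$-subset of $N_H(u)\subseteq R(H)$: each $u\in L(H)$ contributes at least one such pair, so there are $\ge s$ pairs in total, while $T$ ranges over at most $\binom{|R(H)|}{h}\le \binom{\gamma h}{h}=g(h)$ values. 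Pigeonhole therefore yields some $h$-subset $T^{\star}\subseteq R(H)$ lying in $N_H(u)$ for at least $\lceil s/g(h)\rceil$ vertices $u\in L(H)$; these vertices together with $T^{\star}$ form a $K_{\lceil s/g(h)\rceil,\,h}$ in $G$, contradicting Lin's NO case.

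The main obstacle I expect is justifying the strong asymmetric gap form of Lin's theorem: namely, gap $g(h)$ on the $s$-side of the biclique, with $h$ kept as the small side and $s$ allowed to grow as any computable function of $h$. Inspection of Lin's construction should confirm that its built-in gap amplification naturally supplies arbitrary FPT-computable gaps once parameters are set asymmetrically; if this is not immediate, I would preface the pigeonhole argument above with a short parameter-tuning step that rescales the two sides of the biclique instance (for instance, inflating $s$ via a padding or cloning gadget) so that the NO threshold $\lceil s/g(h)\rceil$ falls within the regime covered by Lin's baseline hardness. Once this asymmetric form is in hand, the counting argument above makes the rest routine.
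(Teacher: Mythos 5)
The pigeonhole is oriented the wrong way, and the "strong gap form" you invoke does not follow from Lin's theorem. Your count over pairs $(u,T)$ with $T$ an $h$-subset of $R(H)$ produces a biclique $K_{s',h}$ with $s'=\lceil s/\binom{\gamma h}{h}\rceil$, which for $\gamma>1$ is strictly smaller than $s$ (and typically $1$, since $\binom{\gamma h}{h}\geq 2^{h}$ while $s=\binom{k}{2}$). But Lin's NO guarantee is one-sided: it only says that every set of \emph{exactly} $s$ left vertices has at most $(k+1)!$ common neighbors, i.e., $G$ contains no $K_{s,(k+1)!+1}$. It says nothing about $K_{s',h}$ with $s'<s$ left vertices — for instance a single left vertex of degree $\geq h$ already gives a $K_{1,h}$, so such bicliques are unavoidable and yield no contradiction. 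The cloning idea you sketch does not rescue this: duplicating a left vertex $c$ times creates $c$ vertices with identical neighborhoods, so in the NO instance the common neighborhood of $cs$ cloned vertices can still be the (unbounded) neighborhood of a single original vertex, and the NO bound is not inherited in the form you need.

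The fix is to run the count with the roles of the two sides swapped, which is precisely what the paper's Claim~\ref{cl:bi} does. Count pairs $(v,S)$ with $v\in R(H)$ and $S$ an $s$-subset of $N_H(v)\subseteq L(H)$: there are at least $|R(H)|\geq h$ such pairs, and from $|E(H)|\leq\gamma sh$ one gets $|L(H)|\leq\gamma s$, so there are at most $\binom{\gamma s}{s}$ candidate sets $S$. Pigeonhole then yields an $s$-subset of the left with at least $h/\binom{\gamma s}{s}$ common right neighbors, and choosing $h$ so that $h/\binom{\gamma s}{s}>(k+1)!$ — which is legitimate because $h$ may be any computable function of $k$, as long as it stays below $\lceil n^{6/(k+1)}\rceil$ — contradicts Lin's NO bound directly. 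Your completeness argument and the double-count $|R(H)|\leq\gamma h$, $|L(H)|\geq s$ are correct; only the final pigeonhole needs to be redone over $s$-subsets of $L(H)$ rather than $h$-subsets of $R(H)$.
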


Our result relies crucially on the recent \W[1]-hardness of approximation result for the \textsc{One-Sided Biclique} problem by Lin~\cite{Lin15}. Recall that, in \textsc{One-Sided Biclique}, we are given a bipartite graph $G$ and an integer $s$ and the goal is to find $s$ left vertices with maximum number of common neighbors. The following theorem is the main result of Lin~\cite{Lin15} for \textsc{One-Sided Biclique}.

\begin{theorem}[{\cite[Theorem~1.3]{Lin15}}]\label{thm:gapbiclique}
There is a polynomial time algorithm $\mathbb A$ such that, given a graph
$G$ with $n$ vertices and $k\in \mathbb N$ with $\lceil n^{\frac{6}{k+6}}\rceil> (k+6)!$ and $6\mid k+1$, it outputs a bipartite graph $G' = (A \dotcup B, E)$ and $s = \binom{k}{2}$ satisfying:
\begin{enumerate}
\item (YES) If $G$ contains a $k$-clique, then there are $s$ vertices in $A$ with at least $\lceil {n^{\frac{6}{k+1}}}\rceil$ common neighbors in $B$;
\item (NO) If\ $G$ does not contain a $k$-clique, any $s$ vertices in $A$ have at most
    $(k+1)!$ common neighbors in $B$.
\end{enumerate}
\end{theorem}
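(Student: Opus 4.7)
My plan is to construct an FPT reduction from a canonical \W[1]-hard problem --- colored $k$-Clique on a $k$-partite graph $H = (U_1 \dotcup \cdots \dotcup U_k, F)$ with $|U_i| = n$ --- to the promise bipartite graph problem stated in the theorem. I would first choose a prime power $q$ with $\lceil n^{6/(k+1)}\rceil \leq q \leq 2n^{6/(k+1)}$, which can be found in polynomial time by Bertrand's postulate and explicit prime enumeration. The hypotheses $6 \mid k+1$ and $\lceil n^{6/(k+6)}\rceil > (k+6)!$ will be used to leave numerical slack: the former simplifies the arithmetic of the exponents used in the encoding, while the latter guarantees that $q$ comfortably dominates the $(k+1)!$ threshold appearing on the NO side.

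With $q$ fixed, I would embed each $U_i$ via a Reed--Solomon-style scheme, identifying each vertex $u \in U_i$ with a low-degree univariate polynomial over $\mathbb{F}_q$ (equivalently, with an $O(k)$-tuple of $\mathbb{F}_q$-elements). The left side of the output graph would be $A = \binom{[k]}{2} \times F_{ij}$, where $F_{ij}$ denotes the edges of $H$ between $U_i$ and $U_j$; a left vertex is a ``pair slot together with a proposed edge.'' The right side $B$ consists of short ``consistency witnesses'' of the form $(\alpha, \mathrm{aux})$ with $\alpha \in \mathbb{F}_q$ and $\mathrm{aux}$ drawn from a polynomial-size auxiliary space. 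I would connect a left vertex for edge $\{u,v\}\in U_i\times U_j$ to a right vertex $(\alpha, \mathrm{aux})$ iff the polynomials encoding $u$ and $v$ are compatible with $\mathrm{aux}$ at the point $\alpha$, so that $s$ left vertices share a common right neighbor exactly when the $s$ proposed edges can all be simultaneously interpolated at $\alpha$ by the global witness $\mathrm{aux}$.

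For completeness, a colorful $k$-clique $v_1,\ldots,v_k$ in $H$ would yield, for every one of the $q$ possible values of $\alpha$, a uniquely determined $\mathrm{aux}$ that makes $(\alpha,\mathrm{aux})$ a common neighbor of all $s=\binom{k}{2}$ left vertices labeled by the clique-edges; this would produce the required $\lceil n^{6/(k+1)}\rceil$ common neighbors. For soundness, suppose $s$ left vertices share more than $(k+1)!$ common right neighbors, corresponding to $s$ edges $(e_{ij})_{i<j}$ of $H$. The consistency constraints would then force a nontrivial low-degree polynomial identity over $\mathbb{F}_q$; a Schwartz--Zippel-style count would bound the number of compatible witnesses unless the $s$ edges all issue from a single $k$-tuple $(v_1,\ldots,v_k)$, in which case those $v_i$ would form a $k$-clique of $H$ and contradict the NO assumption.

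I expect the main obstacle to be the precise calibration in this last step: the encoding degree, the size of $\mathrm{aux}$, and the choice of $q$ must be tuned so that the number of ``fake'' witnesses is at most $(k+1)!$ while the number of ``real'' witnesses is at least $\lceil n^{6/(k+1)}\rceil$, all while keeping the reduction polynomial-time. This three-way balancing is what pins down the specific exponent $6/(k+1)$ and explains the technical hypothesis $\lceil n^{6/(k+6)}\rceil > (k+6)!$; unraveling it carefully is the algebraic-combinatorial heart of the argument.
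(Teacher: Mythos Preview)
The paper does not contain a proof of this theorem: it is quoted verbatim as \cite[Theorem~1.3]{Lin15} and used as a black box in the proof of Theorem~\ref{thm:bsmd}. So there is no ``paper's own proof'' to compare your proposal against; the comparison you are implicitly being asked to make is with Lin's original argument, not with anything in this manuscript.

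As for the proposal itself: your high-level architecture --- left vertices indexed by pair-slots carrying candidate edges, right vertices as algebraic ``consistency witnesses'' over a field of size $\Theta(n^{6/(k+1)})$, completeness by counting evaluation points, soundness by a Schwartz--Zippel-type bound --- is indeed the shape of Lin's construction. However, what you have written is a plan rather than a proof. The step you flag as ``the main obstacle'' (the three-way calibration of encoding degree, auxiliary space, and field size that simultaneously yields $\geq \lceil n^{6/(k+1)}\rceil$ on the YES side and $\leq (k+1)!$ on the NO side) is not a detail to be filled in later; it \emph{is} the theorem. In particular, your soundness sketch (``a nontrivial low-degree polynomial identity \dots unless the $s$ edges all issue from a single $k$-tuple'') elides the actual combinatorial lemma Lin proves, which is where the specific bound $(k+1)!$ and the divisibility hypothesis $6\mid k+1$ come from. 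Until that argument is written out, the proposal does not establish the NO case.
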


Another ingredient of our reduction is a simple observation regarding the size of bipartite graphs with prescribed minimum degrees, conditioned on the fact that any small subset of left vertices have small number of neighbors. This is stated below.

\begin{claim}\label{cl:bi}
For any $s,\ell,h\in\mathbb{N}$, let $(X\cup Y,E_W)$ be a non-empty bipartite graph such that
\begin{enumerate}[(i)]
\item every vertex in $X$ has at least $h$ neighbors,
\item every vertex in $Y$ has at least $s$ neighbors, and,
\item every $s$-vertex set of $X$ has at most $\ell$ common neighbors.
\end{enumerate}
Furthermore, the parameters $h,\ell$ and $s$ satisfy $h/\ell \ge \gamma^s s^s$. Then, $|E_W|\ge(h/\ell)^{1/s} \ge \gamma \cdot hs$.
\end{claim}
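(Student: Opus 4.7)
The plan is a double-counting argument to lower bound $|X|$, followed by applying condition (i) to convert this into a lower bound on $|E_W|$. Write $a = |X|$, $b = |Y|$, and $m = |E_W|$. By (i), $m \geqs ah$, so it suffices to prove $a \geqs (h/\ell)^{1/s}$.

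I will count triples $(x, T, y)$, where $x \in X$, $T \subseteq X$ is an $s$-subset containing $x$, and $y \in Y$ is a common neighbor of $T$, in two ways. For the lower bound, fix $x \in X$; by (i), $|N(x)| \geqs h$, and for each such $y \in N(x)$, condition (ii) guarantees at least $s-1$ other neighbors of $y$ in $X$, so one can extend $\{x\}$ to an $s$-subset $T \ni x$ that has $y$ as a common neighbor. This produces at least $h$ triples with first coordinate $x$, and summing over $x$ yields at least $ah$ triples overall. For the upper bound, every pair $(T, y)$ of an $s$-subset $T$ and a common neighbor $y$ is counted exactly $s$ times (once per element of $T$), so by (iii) the total number of triples is
\[
s \cdot \sum_{T \subseteq X,\, |T| = s} \Big|\bigcap_{x' \in T} N(x')\Big| \;\leqs\; s \binom{a}{s} \ell \;=\; a\binom{a-1}{s-1}\ell.
\]
Comparing the two bounds yields $\binom{a-1}{s-1} \geqs h/\ell$, and the crude estimate $\binom{a-1}{s-1} \leqs \binom{a}{s} \leqs a^s$ then gives $a \geqs (h/\ell)^{1/s}$, as desired.

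Plugging this back in, $|E_W| = m \geqs ah \geqs h(h/\ell)^{1/s}$. The hypothesis $h/\ell \geqs \gamma^s s^s$ rearranges to $(h/\ell)^{1/s} \geqs \gamma s$, so $|E_W| \geqs \gamma \cdot hs$, which is the desired conclusion. I do not foresee any real obstacle: the only point requiring care is setting up the count so that the degree lower bound (i) enters the lower estimate and the codegree upper bound (iii) enters the upper estimate, which is exactly what distinguishing one coordinate $x$ of the $s$-subset accomplishes.
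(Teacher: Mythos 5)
Your proof is correct, and it is essentially the paper's argument phrased as a double count: the paper counts pairs $(T, y)$ with $T$ an $s$-subset of $X$ and $y$ a common neighbor of $T$, getting at least $|Y| \geqs h$ such pairs from (ii) and (i), and at most $\binom{|X|}{s}\ell$ from (iii); your version adds a distinguished coordinate $x \in T$ and cancels the extra factor of $|X|$ at the end, but the resulting inequality $\binom{a-1}{s-1} \geqs h/\ell$ is equivalent for the purpose of the crude estimate by $a^s$. Both then finish identically by combining $|E_W| \geqs h|X|$ with the hypothesis $h/\ell \geqs \gamma^s s^s$.
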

\begin{proof}[Proof of Claim~\ref{cl:bi}]
Consider any vertex $u\in X$. By (i), $u$ has at least $h$ neighbors in $Y$, so $|Y|\ge h$.
By (ii), for every $v\in Y$, $v$ has at least $s$ neighbors in $X$. If $\binom{|X|}{s}\ell<|Y|$, then there must exist a $s$-vertex set in $X$ which has more than $\ell$ common neighbors in $Y$. Thus, we must have
\[
|X|^s\ge \binom{|X|}{s}\ge \frac{|Y|}{\ell}\ge \frac{h}{\ell}.
\]
By (i) and our choice of parameters $h,\ell,s$, we can conclude that $|E_W|\ge h|X|\ge(h/\ell)^{1/s} \cdot h \ge \gamma \cdot hs$, as desired.
\end{proof}

With Theorem~\ref{thm:gapbiclique} and Claim~\ref{cl:bi} in place, we can prove Theorem~\ref{thm:bsmd} simply by using the reduction from Theorem~\ref{thm:gapbiclique} and choosing an appropriate value of $h$; the guarantee in the NO case would then follow from Claim~\ref{cl:bi}.

\begin{proof}[Proof of Theorem~\ref{thm:bsmd}]
We reduce from the \textsc{$k$-Clique} problem which is well-known to be \W[1]-complete. Let $(G, k)$ be an instance of \textsc{$k$-Clique} and $n$ be the number of vertices in $G$. Without loss of generality, we can assume that $6\mid k+1$ and $\lceil n^{\frac{6}{k+6}}\rceil > (k+6)! \cdot (\gamma \cdot k^2)^{k^2}$. Using the reduction in Theorem~\ref{thm:gapbiclique}, we can produce $(G', s = \binom{k}{2})$ in polynomial time with the guarantees as in the theorem. We then set $h = (k+6)! \cdot (\gamma \cdot k^2)^{k^2}$ and let $(H, s, h)$ be our instance of $\gapbsmd_\gamma$. We will next show that this is indeed a valid reduction from {$k$-Clique} to $\gapbsmd_\gamma$.

{\bf (YES Case)} Suppose that $G$ contains a $k$-clique. Then, Theorem~\ref{thm:gapbiclique} guarantees that $G'$ contains a complete bipartite subgraph with $s$ left vertices and $h$ right vertices as desired.

{\bf (NO Case)} Suppose that $G$ does not contain a $k$-clique. Now, consider any non-empty subgraph $H$ of $G'$ such that every left vertex of $H$ has at least $h$ neighbors and every right vertex of $H$ contains at least $s$ neighbors, i.e., $H$ satisfies condition (i) and (ii) in Claim~\ref{cl:bi}. Furthermore, since $G$ does not contain a $k$-clique,  guarantees that every $s$ vertices in $A$ contains at most $\ell = (k + 1)!$ common neighbors. It can be easily verified that our setting of parameters $h,\ell$ and $s$ satisfies the inequality $h/\ell \ge \gamma^ss^s$. Hence, by applying Claim~\ref{cl:bi} on $H$, the number of edges in $H$ must be at least
$\gamma \cdot (hs)$. This means that $(H, s, h)$ is a NO instance of $\gapbsmd_\gamma$ as desired.
\end{proof}

\subsection{Reducing \gapbsmd\ to \gapLDS}\label{sec:bsmd-to-LDS}

We now move on to the next step of our proof, which is the reduction from \gapbsmd\ to \gapLDS.

Since the reduction itself will be used in the subsequent proofs (with different parameter selections), we also state it separately below. We remark that the reduction as stated below goes from $\gapbsmd_\gamma$ to the uncolored version of the problem ($\gapLDS_\gamma$); we will state how to go from here to the colored version later on.

\begin{theorem} \label{thm:lds-uncolored-red}
Let $\gamma \geqs 1$ be any constant. There is a polynomial time algorithm that, given an instance $(G, s, h)$ of $\gapbsmd_\gamma$ where $G$ contains $n$ vertices and any prime power $q > n$, produces an instance $(\cW \subseteq \mathbb{F}_q^m, k = hs)$ of $\gapLDS_\gamma$ such that
\begin{itemize}
\item (YES) If $(G, s, h)$ is a YES instance of $\gapbsmd_\gamma$, then $(\cW, k)$ is a YES instance of $\gapLDS_\gamma$.
\item (NO) If $(G, s, h)$ is a NO instance of $\gapbsmd_\gamma$, then $(\cW, k)$ is a NO instance of $\gapLDS_\gamma$.
\end{itemize}
\end{theorem}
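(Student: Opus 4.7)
The plan is to formalize the gadget reduction sketched in Section~\ref{sec:LDSoverview}. First I would identify $L \dotcup R$ with distinct elements of $\mathbb{F}_q$, which is possible since $q > n$. Then I would define two Vandermonde-style embeddings $\iota_L : L \to \mathbb{F}_q^{h-1}$ and $\iota_R : R \to \mathbb{F}_q^{h-1}$ by $\iota_R(v) = (1, v, v^2, \dots, v^{h-2})$ and $\iota_L(u) = (1, u, u^2, \dots, u^{s-2}, 0, \dots, 0)$ (padded with zeros to length $h - 1$). Standard Vandermonde determinant arguments then give properties (L1)–(L2) and (R1)–(R2); moreover, in the tight cases (any $s$ left vertices, resp.\ any $h$ right vertices), the kernel is one-dimensional and Cramer's rule forces every coefficient of the unique dependence to be non-zero, a fact crucial for completeness.

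Next, for each edge $e = \{u, v\}$ with $u \in L$ and $v \in R$, I would construct $\bw_e \in \mathbb{F}_q^{n(h-1)}$ by partitioning coordinates into $n$ blocks indexed by vertices of $G$, setting the $u$-th block to $\iota_R(v)$, the $v$-th block to $\iota_L(u)$, and all other blocks to $\bzero$. Taking $m = n(h-1)$, $k = hs$, and $\cW = \{\bw_e : e \in E\}$ produces the claimed $\gapLDS_\gamma$ instance in polynomial time.

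For completeness, given a $K_{s,h}$ in $G$ with left side $u_1, \dots, u_s$ and right side $v_1, \dots, v_h$, I would invoke the all-nonzero Vandermonde dependencies $\sum_i a_i \iota_L(u_i) = \bzero$ and $\sum_j b_j \iota_R(v_j) = \bzero$ and combine them using coefficients $a_i b_j$ on $\bw_{u_i, v_j}$: the $u_i$-th block of the resulting sum becomes $a_i \sum_j b_j \iota_R(v_j) = \bzero$, the $v_j$-th block becomes $b_j \sum_i a_i \iota_L(u_i) = \bzero$, and all other blocks are untouched. Since each $a_i b_j$ is non-zero, these $sh = k$ distinct vectors witness the stronger YES condition of $\gapLDS_\gamma$. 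For soundness, suppose $W \subseteq \cW$ is linearly dependent with $|W| \leq \gamma k$; pass to a minimal linearly dependent subset $W' \subseteq W$, whose coefficients $(c_e)_{e \in W'}$ are all non-zero, and let $H_{W'}$ be the subgraph of $G$ with edge set $W'$. For every $u \in L$ appearing in $H_{W'}$, restricting the relation $\sum_e c_e \bw_e = \bzero$ to the $u$-th block yields a non-trivial dependence $\sum_{v :\, (u,v) \in W'} c_{u,v} \iota_R(v) = \bzero$ which, by (R1), needs at least $h$ terms, so $\deg_{H_{W'}}(u) \geq h$; symmetrically, (L1) forces $\deg_{H_{W'}}(v) \geq s$ for each right vertex $v$ in $H_{W'}$. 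Since $H_{W'}$ is non-empty, the NO condition of $\gapbsmd_\gamma$ gives $|W'| \geq \gamma hs = \gamma k$, contradicting $|W'| \leq |W| \leq \gamma k$ (any boundary slack is absorbed by reducing from $\gapbsmd_{\gamma'}$ for a slightly larger $\gamma' > \gamma$, which is still \W[1]-hard by Theorem~\ref{thm:bsmd}).

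The main technical point to get right is the coordinated choice of the two Vandermonde embeddings: $\iota_L$ must be padded so that it lives in the same ambient space $\mathbb{F}_q^{h-1}$ as $\iota_R$ (so that blocks align in $\bw_e$), yet it must simultaneously have rank exactly $s-1$ (so that (L1) yields the right-vertex degree bound in soundness) and admit the unique all-nonzero Vandermonde dependence on any $s$ vertices (so that completeness produces a combined dependence with all coefficients $a_i b_j$ non-zero). Once this design is in place, both the YES and NO analyses are essentially block-by-block linear algebra.
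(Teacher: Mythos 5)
Your proposal matches the paper's proof essentially step by step: the same Vandermonde-style embeddings (the paper places both $\iota(R)$ and $\iota(L)$ in $\mathbb{F}_q^{s+h}$ and uses $q$ blocks of that size, while you use ambient dimension $h-1$ and $n$ blocks, but this is immaterial), the same block construction of $\bw_e$ with each endpoint's block carrying the embedding of the \emph{other} endpoint, the same completeness witness using coefficients $a_ib_j$, and the same soundness argument where restricting a minimal dependence to a vertex's block combined with the independence of any $h-1$ right (resp.\ $s-1$ left) embeddings forces the required degree lower bounds and hence at least $\gamma hs$ edges. In fact you are slightly more careful than the paper in two spots: you explicitly pass to a minimal dependent subset to guarantee all-nonzero coefficients before restricting to blocks (the paper leaves this implicit), and you flag the $\geq \gamma k$ vs.\ $> \gamma k$ boundary slack (which the paper's proof does not address) together with the standard remedy of starting from $\gapbsmd_{\gamma'}$ for $\gamma'$ slightly above $\gamma$.
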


\begin{proof}
Assume that an instance  $(G = (L \dot\cup R, E), s, h)$ of $\gapbsmd_\gamma$ and a prime power $q > |L| + |R|$ are given. Before we construct $\cW$, let us first define additional notation. We identify vertices in $L \dot\cup R$ with distinct elements of $\mathbb{F}_q$. Let $B:= s + h$ and let $\iota : L\cup R\to \mathbb{F}_q^B$ be defined as follows.
\begin{itemize}
\item for each $v\in R$, $\iota(v):=(1,v,\ldots ,v^{h-2}) \circ \textbf{0}_{B-h+1}$,
\item for each $u\in L$, $\iota(u):=(1,u,\ldots ,u^{s-2}) \circ \textbf{0}_{B-s+1}$.
\end{itemize}

By a well-known property of Vandermonde matrices, any $h-1$ vectors in $\iota(R)$ are linearly independent and any $h$ vectors from $\iota(R)$ are linearly dependent.
To summarize, we have
\begin{itemize}
\item[(R1)] For all $I\in\binom{R}{h}$,  the vectors $\{\iota(v) : v\in I\}$ are linearly dependent.
\item[(R2)] For all  $I\in\binom{R}{h-1}$, the vectors $\{\iota(v) : v\in I\}$ are linearly independent.
\end{itemize}

Similarly, we also have
\begin{itemize}
\item[(L1)] For all $I\in\binom{L}{s}$,  the vectors $\{\iota(u) : u\in I\}$ are linearly dependent.
\item[(L2)] For all  $I\in\binom{L}{s-1}$, the vectors $\{\iota(u) : u\in I\}$ are linearly independent.
\end{itemize}

Let $m=qB$ and consider vectors from $\mathbb{F}_q^m=\mathbb{F}_q^{qB}$, which can be seen as the concatenation of $q$ blocks, each of $B$ coordinates.
For $x\in \mathbb{F}_q^m$, we use the notation $\bx^{(i)}$ to refer to the $i$-block, i.e. the $B$-dimensional vector given by coordinates $(i-1)B+1, (i-1)B+2, \dots, iB$.

\bigskip

\noindent\textbf{Construction of $(\cW,k)$.} First, we let $k = hs$. Then, for each $(u, v) = e \in E$ (where $u \in L, v \in R$), we introduce a vector $\bw_e\in \mathbb{F}_q^{qB}$ such that

\begin{itemize}
\item[(W1)] for all $i\in [q]\setminus\{v,u\}$, $\bw_e^{(i)}=\textbf{0}_B$,
\item[(W2)] $\bw_e^{(v)}=\iota(u)$,
\item[(W3)] $\bw_e^{(u)}=\iota(v)$.
\end{itemize}

That is, we can imagine $w_e$ as being partitioned $q$ blocks of $B$ coordinates, with the representation of $u$ appearing in the $v$-th block and the representation of $v$ appearing in the $u$-th block. Note the use of $u$ and $v$ in the definition: the $v$-th block on its own describes both $v$ (by its position) and $u$ (by its content), and similarly the $u$-th block also describes both endpoints of $e$. We then let
\[
\cW:=\{w_e : e\in E\}.
\]

Obviously, $(\cW, k)$ can be computed in polynomial time. We next argue its correctness.

\textbf{(YES case)} Suppose $(G, s, h)$ is a YES instance of $\gapbsmd_\gamma$. There exist a set $X\in\binom{L}{s}$ and a set $Y\in\binom{R}{h}$ such that for all
$u\in X$ and $v\in Y$, $(u, v) \in E$. By (R1) and (L1), there exists $b_u \in \mathbb{F}_q$ for each $u \in X$ and $b_v \in \mathbb{F}_q$ for each $v \in Y$ such that
\[
\sum_{u \in X} b_u \iota(u)=\textbf{0}_B
\text{ and }
\sum_{v \in Y} b_v \iota(v)=\textbf{0}_B.
\]
By (R2) and (L2), we deduce that, for all $u \in X$ and $v \in Y$, $b_u \neq 0$ and $b_v \neq 0$. We now claim that $\{\bw_{(u, v)}\}_{u \in X, v \in Y}$ is the set of desired vectors, with the coefficient of $\bw_{(u, v)}$ being $b_ub_v \ne 0$. In other words, we are left to show that
\[
\sum_{u \in X, v \in Y} b_u b_v \bw_{(u, v)} = \bzero_m.
\]
To see that this is true, let $\bw = \sum_{i\in[s],j\in[h]} b_u b_v \bw_{\{u_i, v_j\}}$. It is easy to check that
\begin{itemize}
\item by (W1), for every $z\in [q] \setminus (X\cup Y)$, $w^{(z)}=\textbf{0}_B$,
\item by (W2), for every $v \in Y$, $w^{(v)}=\sum_{u \in X} b_u b_v\iota(u) = b_v \sum_{u \in X} b_u \iota(u) = \textbf{0}_B$,
\item by (W3), for every $u \in X$, $w^{(u_i)}=\sum_{j\in [h]}a_ib_j\iota(v_j)=a_i\sum_{j\in [h]}b_j\iota(v_j)=\textbf{0}_B$.
\end{itemize}
Hence, we have completed the proof for the YES case.

\textbf{(NO case)} Suppose $(G, s, h)$ is a NO instance of $\gapbsmd_\gamma$. Let $W\subseteq\mathcal W$ be a set of  vectors that are linearly dependent. We define two vertex sets and their edge set as follows. Let
 \[
 X:=\{u\in L : \text{there exists $v\in R$ such that $w_{(u, v)}\in W$}\},
 \]
 \[
 Y:=\{v\in R : \text{there exists $u\in L$ such that $w_{(u, v)}\in W$}\},
 \]
 and
 \[
E_W:=\{ e \in E:  \text{$w_e \in W$ }\}.
 \]
Note that $X$ and $Y$ are not empty because $W$ is non-empty. By (R2) and (W3), for every $u\in X$, there exist at least $h$ vertices in $Y$ that are adjacent to $u$, i.e. $|N(u)\cap Y|\ge h$. Similarly, by (L2) and (W2), for every $v\in Y$, we have $|N(v)\cap X|\ge s$. Hence, by the guarantee in the NO case of $\gapbsmd_\gamma$, we can conclude that $\gamma \cdot sh \leqs |E_W| = |W|$ as desired.
\end{proof}

\subsubsection{Reducing Uncolored LDS to Colored LDS}\label{sec:colorLDS}

In this section, we show a simple reduction from the uncolored version of LDS to the colored version of LDS. As is usual in such a reduction, we will need the definition of perfect hash families and an efficient construction stated below.

\begin{definition}
An $(n, k)$-perfect hash family is a collection $\cF$ of functions from $[n]$ to $[k]$ such that, for every subset $S \subseteq [n]$ of size $k$, there exists $f \in \cF$ that maps every $S$ to distinct elements in $[k]$, i.e., $f(S) = [k]$.
\end{definition}

\begin{theorem}[\cite{NSS95}] \label{thm:perf-hash}
There exists an algorithm that, for any $n, k \in \mathbb{N}$, constructs an $(n, k)$-perfect hash family in time $2^{O(k)} poly(n)$.
\end{theorem}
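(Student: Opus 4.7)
The plan is a two-stage composition via an intermediate range of size $k^2$. First I would build a family $\mathcal{G}$ of functions $[n] \to [k^2]$, of size $\poly(n, k)$, that is injective on every $k$-subset of $[n]$. Second, I would build a perfect hash family $\mathcal{H}$ from $[k^2]$ to $[k]$ of size $2^{O(k)}$. The composition $\{h \circ g : g \in \mathcal{G},\ h \in \mathcal{H}\}$ then has size $2^{O(k)}\poly(n)$ and is $(n,k)$-perfect: any $k$-subset $S \subseteq [n]$ is mapped injectively by some $g \in \mathcal{G}$ to a $k$-subset $g(S) \subseteq [k^2]$, which in turn is hashed bijectively onto $[k]$ by some $h \in \mathcal{H}$, making $h \circ g$ injective on $S$.

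For $\mathcal{G}$, a uniformly random function $[n] \to [k^2]$ is injective on a fixed $k$-subset with probability $\prod_{i=0}^{k-1}(1 - i/k^2) \geq 1/e$, by a birthday-type calculation. To derandomize while retaining polynomial dependence on $n$, I would invoke an explicit $\epsilon$-almost $k$-wise independent family of hash functions into $[k^2]$, as constructed in \cite{NSS95} from small-bias spaces. Such a family has seed length $O(k \log k + \log \log n + \log(1/\epsilon))$; choosing $\epsilon = 1/(2e)$ keeps the injectivity probability on any fixed $k$-subset at least $1/(2e)$. A subsequent layer of derandomization --- for instance the method of conditional probabilities applied to a random subfamily of $O(k \log n)$ functions --- yields the desired $\mathcal{G}$ within the claimed size bound.

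For $\mathcal{H}$, since the domain $[k^2]$ depends only on $k$, the problem decouples from $n$ and becomes a purely $k$-dependent combinatorial task. The probabilistic method shows that $O(e^k k \log k)$ uniformly random functions $[k^2] \to [k]$ form a perfect hash family with positive probability, and an explicit $2^{O(k)}$-size construction follows either from $k$-wise independent hashing over a now-small universe (polynomials of degree $< k$ over $\mathbb{F}_{k^2}$, brute-force verified against all $\binom{k^2}{k} = 2^{O(k \log k)}$ $k$-subsets) or from one level of the splitter construction of \cite{NSS95} built on algebraic codes.

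The main obstacle is the derandomization in stage $\mathcal{G}$: balancing polynomial dependence on $n$ against a $2^{O(k)}$-type dependence on $k$, while guaranteeing injectivity on \emph{every} $k$-subset simultaneously. A naive $k$-wise independent family of functions $[n] \to [k^2]$ requires seed length $\Omega(k \log n)$ and hence family size $n^{\Omega(k)}$, which is not FPT. The almost-$k$-wise machinery of \cite{NSS95} is precisely what brings the seed length down to $O(k \log k + \log \log n)$, so that after composition with $\mathcal{H}$ the total size fits in $2^{O(k)} \poly(n)$ and the running time collapses correspondingly; once that machinery is accepted, stage $\mathcal{H}$ and the final composition are essentially bookkeeping.
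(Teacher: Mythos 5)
The paper states this result as a direct citation of \cite{NSS95} and offers no proof of its own, so there is nothing in the paper to compare against; I can only evaluate your sketch on its merits. Your two-stage decomposition (collapse $[n]$ to a universe of size $\poly(k)$, then perfectly hash that small universe) is the standard architecture behind every known construction, including that of \cite{NSS95}, and the final composition step is indeed routine. The trouble is that both stages, as written, stop short exactly where the difficulty lies.

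For stage $\mathcal{G}$: an $\epsilon$-almost $k$-wise independent family of functions $[n]\to[k^2]$ with constant $\epsilon$ has seed length $\Theta(k\log k + \log\log n)$, hence size $2^{\Theta(k\log k)}\polylog n$; composing this directly with a stage-$\mathcal{H}$ family of size $2^{O(k)}$ yields $2^{\Theta(k\log k)}\poly(n)$, not the $2^{O(k)}\poly(n)$ you claim. So the ``subsequent layer of derandomization'' down to $O(k\log n)$ functions is load-bearing, not optional. But the method of conditional probabilities, as you invoke it, must evaluate a pessimistic estimator over the $\binom{n}{k}$ bad events (one per $k$-subset) at each of $O(k\log n)$ steps, which costs $n^{\Theta(k)}$ time --- precisely the dependence the theorem is trying to beat, and your sketch gives no faster estimator. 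Circumventing this is the technical content of the $k$-restriction/splitter framework of \cite{NSS95}; it cannot be waved in as an afterthought. A separate, smaller issue: the brute-force option for $\mathcal{H}$ (degree-$<k$ polynomials over $\mathbb{F}_{k^2}$, checked against all $\binom{k^2}{k}$ subsets) runs in time $2^{\Theta(k\log k)}$, which already exceeds $2^{O(k)}\poly(n)$ once $k$ is around $\log n$; and the alternative you offer, ``one level of the splitter construction of \cite{NSS95},'' is the thing to be proved.

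What actually goes through in your sketch yields a $k^{O(k)}\poly(n)$-time construction, which is weaker than the stated bound (though still FPT, and hence adequate for the only use the paper makes of it in Lemma~\ref{lem:lds-coloring}). Incidentally, stage $\mathcal{G}$ admits a much simpler deterministic solution that bypasses almost-$k$-wise independence and all derandomization: take a prime $p=O(n)$ and the FKS family $\{x\mapsto ((ax\bmod p)\bmod ck^2): a\in[p-1]\}$ of size $O(n)$, which for a suitable constant $c$ is injective on every fixed $k$-subset for a constant fraction of $a$ by the standard second-moment count, and hence covers all $k$-subsets. That cleanly removes your first bottleneck; to reach $2^{O(k)}\poly(n)$ overall you would still need the actual splitter machinery of \cite{NSS95} for $\mathcal{H}$.
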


If we use perfect hash families to reduce $\gapLDS$ to $\gapLDS^{\col}$ in a straightforwad manner, we will end up with a Turing reduction, i.e., we will produce multiple instances of $\gapLDS^{\col}$. Our observation here is that these instances can be ``merged'' into a single instance, i.e., by shifting the vectors appropriately so that the coordinates of vectors from different instances are not overlap:

\begin{lemma} \label{lem:lds-coloring}
There exists an algorithm reduction that takes in $\cW \subseteq \mathbb{F}^{m}_q$ and an integer $k$, runs in $2^{O(k)} poly(m, |\cW|)$ time, and outputs $\cW' \subseteq \mathbb{F}^{m'}_q$ and a coloring $c: \cW' \to [k]$ such that
\begin{itemize}
\item (YES) if $(\cW, k)$ is a YES instace of $\gapLDS_\gamma$, then $(\cW', k, c)$ is a YES instace of $\gapLDS^{\col}_\gamma$;
\item (NO) if $(\cW, k)$ is a NO instace of $\gapLDS_\gamma$, then $(\cW', k, c)$ is a NO instace of $\gapLDS^{\col}_\gamma$;
\end{itemize}
\end{lemma}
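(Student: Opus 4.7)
The plan is to invoke the $(|\cW|,k)$-perfect hash family provided by Theorem~\ref{thm:perf-hash} and then merge what would otherwise be a Turing-type reduction into a single instance by placing shifted copies of $\cW$ into pairwise disjoint coordinate blocks, so that any linear dependency in the output decomposes block-by-block into one in $\cW$.

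Concretely, first I would fix an arbitrary identification of $\cW$ with $[|\cW|]$ and apply Theorem~\ref{thm:perf-hash} to compute an $(|\cW|, k)$-perfect hash family $\cF = \{f_1, \dots, f_t\}$ of size $t = 2^{O(k)} \poly(|\cW|)$. I then set $m' = tm$ and, for each $i \in [t]$ and each $\bw \in \cW$, introduce a vector $\bw^{(i)} \in \mathbb{F}_q^{m'}$ whose $i$-th block of $m$ consecutive coordinates equals $\bw$ and whose remaining coordinates are $0$. Finally let $\cW' = \{\bw^{(i)} : i \in [t],\, \bw \in \cW\}$ and define $c(\bw^{(i)}) := f_i(\bw)$. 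The total running time is dominated by building $\cF$ and writing down $t \cdot |\cW|$ vectors of length $m'$, which fits in $2^{O(k)} \poly(m, |\cW|)$.

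For the YES case, take $k$ distinct $\bw_1, \dots, \bw_k \in \cW$ and nonzero $a_1, \dots, a_k \in \mathbb{F}_q$ with $\sum_j a_j \bw_j = \bzero$. By the perfect-hash property there exists $i^\ast \in [t]$ such that $f_{i^\ast}$ is injective on $\{\bw_1, \dots, \bw_k\}$. The lifted vectors $\bw_1^{(i^\ast)}, \dots, \bw_k^{(i^\ast)}$ lie in a common block, receive $k$ distinct colors under $c$, and satisfy $\sum_j a_j \bw_j^{(i^\ast)} = \bzero$ with the same nonzero coefficients, giving a valid YES witness for $\gapLDS^{\col}_\gamma$. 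For the NO case I argue the contrapositive: suppose $\bv_1, \dots, \bv_\ell \in \cW'$ are distinct, linearly dependent, with $\ell \leq \gamma k$. Since vectors from different blocks $\cW_i$ and $\cW_{i'}$ have pairwise disjoint supports, the relation $\sum_j b_j \bv_j = \bzero$ restricts to a zero sum within each block separately. Because the coefficients are not all zero, some block $i$ carries a nontrivial sub-relation; pulling that block back to $\mathbb{F}_q^m$ via the bijection $\bw \mapsto \bw^{(i)}$ on $\cW_i$ yields a nontrivial linear dependency among at most $\ell \leq \gamma k$ distinct vectors of $\cW$, contradicting the NO assumption on $(\cW, k)$.

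I do not expect any serious obstacle: the only subtlety worth checking carefully is that the block-wise projection truly produces \emph{distinct} elements of $\cW$, which follows because within a single block $\cW_i$ the assignment $\bw \mapsto \bw^{(i)}$ is injective, so distinct $\bv_j$'s coming from the same block pull back to distinct $\bw$'s. Beyond this routine bookkeeping point, the construction is a standard perfect-hash packaging trick and the parameter $k' = k$ is preserved exactly, so the output is a legitimate FPT reduction.
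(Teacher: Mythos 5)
Your proposal is correct and follows essentially the same route as the paper: the same perfect-hash-family construction, the same block-shifting trick, and the same projection argument in the NO case. The only cosmetic difference is how you guarantee a nontrivial block: the paper first passes to a \emph{minimal} linearly dependent subset of $\cW'$ (so every coefficient is nonzero, and projecting onto the block of any chosen element works), while you keep an arbitrary dependent set and instead select a block supporting a nonzero coefficient; the two are interchangeable and both close the argument cleanly.
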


\begin{proof}
Let $(\cW, k)$ be any instance of $\gapLDS_\gamma$, and let $n$ denote $|\cW|$. We use Theorem~\ref{thm:perf-hash} to construct an $(n, k)$-perfect hash family $\cF = \{f_1, \dots, f_R\}$ where $R = 2^{O(k)} poly(n)$. For every $\bw \in \cW$ and $j \in [R]$, we add a vector $\vzero_{m(j - 1)} \circ \bw \circ \vzero_{m(R - j)} \in \F^{mR}$ to $\cW'$ and color this vector by $f_j(\bw)$. Finally, $k$ remains the same as before.

It is obvious that the reduction runs in $2^{O(k)} poly(n)$ time. We now argue its correctness.

\textbf{(YES Case)} Suppose that $(\cW, k)$ is a YES instance of $\gapLDS_\gamma$, i.e., there exist $a_1, \dots, a_k \in \mathbb{F}_q \setminus \{0\}$ such that $a_1 \bw_1 + \cdots + a_k \bw_k = 0$. Since $\cF$ is a perfect hash family, there exists $j \in [R]$ such that $f_j(\{\bw_1, \dots, \bw_k\}) = [k]$. In this case, we have $\sum_{i \in [k]} a_i (\vzero_{m(j - 1)} \circ \bw_i \circ \vzero_{m(R - j)}) = \bzero$ and that the vectors $\vzero_{m(j - 1)} \circ \bw_1 \circ \vzero_{m(R - j)}, \dots, \vzero_{m(j - 1)} \circ \bw_k \circ \vzero_{m(R - j)}$ are of different colors. Hence, $(\cW', k, c)$ is a YES instance of $\gapLDS_\gamma^\col$.

\textbf{(NO Case)} Suppose that $(\cW, k)$ is a NO instance of $\gapLDS_\gamma$. Consider any $W' \subseteq \cW'$ such that the vectors in $W'$ are linearly dependent; we may pick such a set that is minimum, i.e., for every $\bw' \in W'$, there exists a coefficient $a_{\bw'}$ so that $\sum_{\bw' \in W'} a_{\bw'} \bw' = \vzero$.

Consider any element of $W'$; suppose that it is of the form $\vzero_{m(j - 1)} \circ \bw^* \circ \vzero_{m(R - j)}$ for some $j \in [R]$. Let $W$ be $\{\bw \in \cW : \vzero_{m(j - 1)} \circ \bw \circ \vzero_{m(R - j)} \in W'\}$. By restricting the equation $\sum_{\bw' \in W'} a_{\bw'} \bw' = \vzero$ only to the coordinates $m(j - 1) + 1, \dots, mj$, we can conclude that the vectors in $W$ are linearly dependent. Hence, we must have $|W'| \geqs |W| > \gamma k$; that is, $(\cW, k, c)$ is a NO instance of $\gapLDS_\gamma^\col$ as desired.
\end{proof}

Combining Theorem~\ref{thm:lds-uncolored-red} and Lemma~\ref{lem:lds-coloring}, we can get the following theorem, which implies the \W[1]-hardness of $\gapLDS_\gamma^\col$.

\begin{theorem} \label{thm:lds-red}
Let $\gamma \geqs 1$ be any constant. There is a polynomial time algorithm that, given an instance $(G, s, h)$ of $\gapbsmd_\gamma$ where $G$ contains $n$ vertices and any prime power $q > n$, produces an instance $(\cW \subseteq \mathbb{F}_q^m, k = hs, c)$ of $\gapLDS_\gamma$ such that
\begin{itemize}
\item (YES) If $(G, s, h)$ is a YES instance of $\gapbsmd_\gamma$, then $(\cW, k, c)$ is a YES instance of $\gapLDS_\gamma^\col$.
\item (NO) If $(G, s, h)$ is a NO instance of $\gapbsmd_\gamma$, then $(\cW, k, c)$ is a NO instance of $\gapLDS_\gamma^\col$.
\end{itemize}
\end{theorem}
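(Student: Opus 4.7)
The plan is to establish Theorem~\ref{thm:lds-red} as a direct composition of the two ingredients just proved: Theorem~\ref{thm:lds-uncolored-red} provides a reduction from $\gapbsmd_\gamma$ to the uncolored problem $\gapLDS_\gamma$, and Lemma~\ref{lem:lds-coloring} provides a reduction from $\gapLDS_\gamma$ to $\gapLDS_\gamma^\col$. Concretely, given an instance $(G,s,h)$ of $\gapbsmd_\gamma$ together with a prime power $q>n$, I would first apply Theorem~\ref{thm:lds-uncolored-red} to produce an instance $(\cW_0 \subseteq \mathbb{F}_q^m, k = hs)$ of $\gapLDS_\gamma$ in polynomial time, and then feed $(\cW_0,k)$ into the algorithm of Lemma~\ref{lem:lds-coloring} to obtain the colored instance $(\cW \subseteq \mathbb{F}_q^{m'}, k, c)$. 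Note that the coloring reduction does not change the parameter $k$, so the output parameter remains $hs$ as required.

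Correctness is then immediate from a case chase. In the YES case, Theorem~\ref{thm:lds-uncolored-red} guarantees that $(\cW_0,k)$ is a YES instance of $\gapLDS_\gamma$, whereupon Lemma~\ref{lem:lds-coloring} promotes it to a YES instance of $\gapLDS_\gamma^\col$. In the NO case, Theorem~\ref{thm:lds-uncolored-red} delivers a NO instance of $\gapLDS_\gamma$, which Lemma~\ref{lem:lds-coloring} preserves as a NO instance of $\gapLDS_\gamma^\col$. No additional combinatorial argument is needed since the two guarantees compose without any matching condition beyond the agreement of the parameter $k$.

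The only point that deserves a brief comment is the running time: Theorem~\ref{thm:lds-uncolored-red} runs in polynomial time, but the coloring step of Lemma~\ref{lem:lds-coloring} costs $2^{O(k)}\poly(m,|\cW_0|) = 2^{O(hs)}\poly(n)$, so the overall reduction is FPT in the parameter $s+h$ rather than strictly polynomial time; this is the natural reading of the theorem in the parameterized setting, and it is exactly what is needed to transfer the $\W[1]$-hardness of $\gapbsmd_\gamma$ (Theorem~\ref{thm:bsmd}) to $\gapLDS_\gamma^\col$. I do not foresee any real obstacle — the proof is a one-line composition followed by a trivial case analysis.
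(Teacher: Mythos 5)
Your proposal matches the paper exactly: the paper states Theorem~\ref{thm:lds-red} immediately after Lemma~\ref{lem:lds-coloring} with the one-line remark that it follows by combining Theorem~\ref{thm:lds-uncolored-red} with Lemma~\ref{lem:lds-coloring}, giving no further proof. Your caveat about the running time is also apt — the composed reduction runs in $2^{O(hs)}\poly(n)$ time because of the perfect hash family, so the phrase ``polynomial time'' in the theorem statement is a slight imprecision (it should be FPT in $s+h$), but this is exactly what the downstream $\W[1]$-hardness argument needs.
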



\section{Parameterized Inapproximability of Maximum Likelihood Decoding} \label{sec:csp-to-gapvec}

In this section, we will show the parameterized intractability of $\gapvec$ as stated below.

\begin{theorem}\label{thm:MLDmain}
	For every $\gamma \geqs 1$ and any prime number $p$, $\gapvec_{\gamma, p}$ is \W[1]-hard.
\end{theorem}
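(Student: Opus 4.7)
The plan is to reduce from the colored gap linear dependent set problem $\gapLDS_{\gamma'}^{\col}$ over the extension field $\mathbb{F}_q=\mathbb{F}_{p^d}$; the \W[1]-hardness of the latter for every constant $\gamma'$ follows from Theorem~\ref{thm:lds-red} provided we pick a prime power $q>|\cW|$. I would fix an $\mathbb{F}_p$-linear isomorphism $f:\mathbb{F}_q\to\mathbb{F}_p^d$, extend it coordinatewise to $\mathbb{F}_q^m\to\mathbb{F}_p^{dm}$, and given a colored LDS instance $(\cW,k,c)$ build the $\gapvec$ matrix $\bA$ over $\mathbb{F}_p$ with one column per pair $(\bw,\alpha)\in\cW\times\mathbb{F}_q^*$, equal to $\be_{c(\bw)}\circ f(\alpha\bw)\in\mathbb{F}_p^{k+dm}$; the target is $\by=\vone_k\circ\vzero_{dm}$ and the parameter stays $k'=k$. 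Completeness is transparent: a YES witness $\sum_j\alpha_j\bw_j=\vzero$ with distinct colors yields the $0/1$-vector supported on $\{(\bw_j,\alpha_j)\}_j$, which has Hamming weight $k$ and maps to $\by$.

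For soundness I would take any $\bx$ over $\mathbb{F}_p$ with $\|\bx\|_0\leqs\gamma k$ and $\bA\bx=\by$, list its support as $(\bw_i,\alpha_i)$ with coefficients $b_i\in\mathbb{F}_p\setminus\{0\}$, and group by $\bw$ via $B_\bw=\sum_{i:\bw_i=\bw}b_i\in\mathbb{F}_p$ and $C_\bw=\sum_{i:\bw_i=\bw}b_i\alpha_i\in\mathbb{F}_q$. The two blocks of $\bA\bx=\by$ read $\sum_\bw C_\bw\bw=\vzero$ and $\sum_{\bw:c(\bw)=j}B_\bw=1$ for each $j\in[k]$. If any $C_\bw\neq 0$ then $\{\bw:C_\bw\neq 0\}$ is a linearly dependent subfamily of $\cW$ of size at most $\gamma k$, contradicting the $\gapLDS_{\gamma'}^{\col}$ NO-case whenever $\gamma'>\gamma$. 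Otherwise every $C_\bw=0$, and since the $\alpha_i$'s are distinct nonzero elements of $\mathbb{F}_q$ with all $b_i\neq 0$, the identity $\sum_{i:\bw_i=\bw}b_i\alpha_i=0$ forces $|\{i:\bw_i=\bw\}|\geqs 2$ for every used $\bw$; combined with the color equations forcing at least $k$ distinct $\bw$'s in the support, this gives $\|\bx\|_0\geqs 2k$ and hence a contradiction for every $\gamma<2$.

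The main obstacle is lifting the range $\gamma<2$ to the full range $\gamma\geqs 1$ required by the theorem. I would handle this via the ``slight tweak'' promised in Subsection~\ref{sec:NCPoverview}: restrict the scalars $\alpha$ used in the columns to an $\mathbb{F}_p$-linearly independent subset $Q\subseteq\mathbb{F}_q^*$ of constant size $K=K(\gamma)$ (for example $Q=\{1,g,\ldots,g^{K-1}\}$ for a generator $g$ of $\mathbb{F}_q^*$, which is $\mathbb{F}_p$-linearly independent once $d\geqs K$). With such a $Q$, any $\mathbb{F}_p$-linear combination $\sum_i b_i\alpha_i$ with $b_i\in\mathbb{F}_p\setminus\{0\}$ and distinct $\alpha_i\in Q$ is nonzero, which kills the ``Case 2'' branch altogether and leaves only Case 1, giving a contradiction for every $\gamma<\gamma'$. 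The delicate point, which I expect to be the technical core of the full proof, is preserving $0/1$-completeness under this restriction: a YES-witness coefficient $\alpha_j\in\mathbb{F}_q^*$ need not lie in $Q$, and its $\mathbb{F}_p$-decomposition over $Q$ has coefficients in all of $\mathbb{F}_p$ rather than in $\{0,1\}$. I would absorb this by additionally including, for every $(\bw,\alpha,\beta)\in\cW\times Q\times(\mathbb{F}_p\setminus\{0\})$, a scaled column $\be_{c(\bw)}\circ\beta f(\alpha\bw)$, letting the $0/1$-indicator on at most $(p-1)Kk$ such columns realize any YES witness and scaling $k'$ accordingly; the same Case 1/Case 2 dichotomy on the enlarged column set continues to rule out any small-weight $\bx$, delivering the desired gap $\gamma$ for every constant $\gamma\geqs 1$.
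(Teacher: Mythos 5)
Your base reduction (columns $\be_{c(\bw)}\circ f(\alpha\bw)$ over $\cW\times\mathbb{F}_q^*$, target $\vone_k\circ\vzero_{dm}$) is the same as the paper's, and your soundness analysis is a valid (if slightly weaker: it yields $\gamma<2$ where the paper's oddness-of-$|X_i|$ argument in Theorem~\ref{thm:oddweak3} yields $\gamma<3$). The gap is in the ``slight tweak'' you propose to reach arbitrary~$\gamma$, and it is a genuine one.

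Your plan is to restrict the scalars to a constant-size $\mathbb{F}_p$-linearly independent set $Q\subseteq\mathbb{F}_q^*$ with $|Q|=K(\gamma)$, and to restore completeness by adding scaled columns $\be_{c(\bw)}\circ\beta f(\alpha\bw)$ for $\alpha\in Q$, $\beta\in\mathbb{F}_p\setminus\{0\}$. But the set of values realizable on the $F$-block for a single $\bw_j$ by a $0/1$-indicator on these columns is exactly $\operatorname{span}_{\mathbb{F}_p}(Q)$, a subspace of $\mathbb{F}_q$ of size $p^K$. Since $q=p^d$ with $d=\Theta(\log_p n)$ (forced by the requirement $q>|\cW|$ coming from the Vandermonde construction in Theorem~\ref{thm:lds-red}), and $K$ must stay a constant depending only on $\gamma$ for the reduction to remain FPT, $\operatorname{span}_{\mathbb{F}_p}(Q)$ is a vanishingly small proper subspace. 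A YES-witness coefficient $a_j\in\mathbb{F}_q^*$ from the $\gapLDS^{\col}$ instance has no reason to lie in it, so the sentence ``its $\mathbb{F}_p$-decomposition over $Q$'' presupposes a decomposition that generally does not exist; completeness fails. (A secondary issue: even when $a_j\in\operatorname{span}(Q)$, the unique representation over $Q$ fixes the number of columns you must use for $\bw_j$, and this count need not be $\equiv 1\pmod p$, so the $\be_{c(\bw_j)}$-block need not come out to $1$.) Taking $K=d$ would cure both problems but makes the new parameter $k'=(p-1)dk$ depend on $n$, breaking the FPT bound.

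The paper sidesteps this by replacing ``small linearly independent $Q$'' with a different family of subsets (Definition~\ref{def:ci}): $C_{(i,\alpha)}=\{a\in\mathbb{F}_q^*:f(a)[i]=\alpha\}$ for $i\in[d]$, $\alpha\in\mathbb{F}_p\setminus\{0\}$. These are large (size roughly $p^{d-1}$) and \emph{not} linearly independent, but they enjoy the weaker property that is actually needed for soundness: any $\mathbb{F}_p$-combination over a single $C_{(i,\alpha)}$ whose coefficients do not sum to zero is itself nonzero. Crucially they \emph{cover} $\mathbb{F}_q^*$, so every YES-witness coefficient lies in at least one of them, and the paper then takes a union over all $\ell=(d(p-1))^k$ assignments of one $C_{(i,\alpha)}$ per color (merged into one matrix via the shifting trick of Lemma~\ref{lem:lds-coloring}), with a short argument that $(dp)^k$ is FPT. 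If you want to keep the spirit of your proposal, that is the replacement for $Q$ you need: a covering family of sets each closed under the ``no vanishing nontrivial combination'' condition, rather than a single small independent set.
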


We will divide the section into two parts. In the first part, we will give a simpler proof that only yields a hardness of approximation with factor $3 - \varepsilon$ for any $\varepsilon > 0$, and we only focus on the case $p = 2$ for simplicity. We note that this already suffices for proving hardness for Even Set problem. (In fact, any inapproximability result with factor greater than two suffices; see Lemma~\ref{lem:sncp-to-mdp}.)

Next, in the second part, we add an additional step in the proof that allows us to prove hardness of approximation with any constant factor and every prime field. We note here that, while this additional step is not used in proving hardness of Even Set, the technique not only gives the better inapproximability factor for $\gapvec$ but is also crucial in proving hardness of the Shortest Vector Problem (see Appendix~\ref{sec:csp-to-snvp}). 

\subsection{$(3 - \varepsilon)$ Factor Inapproximability of Maximum Likelihood Decoding}\label{sec:simple-analysis}

In this subsection, we will show the inapproximability of \kvec\ over $\F$ for any constant factor less than three. More formally, we show the following:

\begin{theorem}\label{thm:oddweak3}
For any constant $\varepsilon > 0$, $\gapvec_{3 - \varepsilon}$ is \W[1]-hard.
\end{theorem}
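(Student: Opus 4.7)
The plan is to reduce from the colored linear dependent set problem $\gapLDS_\gamma^{\col}$ over a sufficiently large field of characteristic two, whose $\W[1]$-hardness for any constant $\gamma \geqs 1$ is Theorem~\ref{thm:lds-red}. I will start from an instance $(\mathcal{W} \subseteq \mathbb{F}_{2^d}^m, k, c)$ of $\gapLDS_{3 - \varepsilon}^{\col}$ with $d = \Theta(\log n)$ large enough that the theorem applies; without loss of generality every $\bw \in \mathcal{W}$ is non-zero, $|\mathcal{W}| \geqs \lceil (3-\varepsilon) k \rceil$, and $\varepsilon \in (0, 1]$.

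Fix an $\F$-linear bijection $f \colon \mathbb{F}_{2^d}^m \to \F^{dm}$ coming from a basis of $\mathbb{F}_{2^d}$ over $\F$, and write $\be_1, \dots, \be_k$ for the standard basis of $\F^k$. The reduction outputs
\[
  \mathcal{W}' = \bigl\{ \be_{c(\bw)} \circ f(a\bw) : \bw \in \mathcal{W},\, a \in \mathbb{F}_{2^d} \setminus \{0\} \bigr\} \subseteq \F^{k + dm}, \qquad \by = \vone_k \circ \vzero_{dm},
\]
as an instance of $\gapvec_{3 - \varepsilon}$ with parameter $k$; this is plainly an FPT reduction, so it suffices to check completeness and soundness.

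Completeness is immediate: if distinctly-coloured $\bw_1, \dots, \bw_k \in \mathcal{W}$ and non-zero $a_1, \dots, a_k \in \mathbb{F}_{2^d}$ satisfy $\sum_i a_i \bw_i = \bzero$, then the $k$ vectors $\be_{c(\bw_i)} \circ f(a_i \bw_i) \in \mathcal{W}'$ sum over $\F$ to $\vone_k \circ f(\bzero) = \by$ by $\F$-linearity of $f$ and since the colours realise a permutation of $[k]$.

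The main obstacle is soundness, because an $\F$-sum equalling $\by$ corresponds via $f^{-1}$ to a combination $\sum_i a_i \bw_i = \bzero$ in $\mathbb{F}_{2^d}^m$ whose summands may repeat $\bw_i$'s, and collapsing those repetitions over $\mathbb{F}_{2^d}$ could in principle kill every coefficient. The colour coordinates rescue us. Suppose $W \subseteq \mathcal{W}'$ is an $\F$-sum to $\by$ of size $k' \leqs (3 - \varepsilon) k$, enumerated as $\be_{c(\bw_i)} \circ f(a_i \bw_i)$ for $i \in [k']$. Restricting to the $j$-th of the first $k$ coordinates forces $|\{i : c(\bw_i) = j\}| \equiv 1 \pmod 2$ for every $j \in [k]$, so each colour occurs at least once in $W$. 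By pigeonhole some colour $j^*$ occurs at most $2$ times (else the total count would be at least $3k > k'$), and being odd it occurs exactly once; let $\bw^*$ and $a^*$ denote the associated vector and coefficient. Restricting the sum instead to the last $dm$ coordinates and applying $f^{-1}$ yields $\sum_i a_i \bw_i = \bzero$ in $\mathbb{F}_{2^d}^m$; setting $b_\bw := \sum_{i: \bw_i = \bw} a_i$ gives $\sum_{\bw} b_\bw \bw = \bzero$ with $b_{\bw^*} = a^* \ne 0$ since $\bw^*$ appears only once. Therefore $S'' := \{\bw \in \mathcal{W} : b_\bw \ne 0\}$ is a linearly dependent subset of $\mathcal{W}$ containing the non-zero vector $\bw^*$, so $|S''| \geqs 2$, while $|S''| \leqs k' \leqs (3-\varepsilon)k$. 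Padding $S''$ with arbitrary vectors of $\mathcal{W}$ up to size $\lceil (3 - \varepsilon) k \rceil$ preserves linear dependence, contradicting the NO case of $\gapLDS^{\col}_{3 - \varepsilon}$.
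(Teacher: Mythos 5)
Your proposal is correct and uses essentially the same reduction as the paper: the identical instance $(\mathcal W',\by)$ built from $\gapLDS^{\col}$ over $\mathbb{F}_{2^d}$, the parity argument on the first $k$ coordinates forcing each colour class to be odd, and the pigeonhole step producing a colour class of size exactly one and hence a vector with non-vanishing coefficient. The only cosmetic difference is that you reduce from $\gapLDS^{\col}_{3-\varepsilon}$ rather than $\gapLDS^{\col}_{3}$ (both are W[1]-hard), and the final ``padding'' step is redundant --- the set $S''$ of size at most $(3-\varepsilon)k$ already contradicts the NO condition directly.
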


\begin{proof}
We will reduce from $\gapbsmd_3$, which is \W[1]-hard due to Theorem~\ref{thm:bsmd}. Let $(G = (L\dot\cup R, E), s, h)$ be an instance of $\gapbsmd_3$. We first run the reduction in Theorem~\ref{thm:lds-red} with $q = 2^{\lceil \log(|L| + |R|) \rceil}$. This gives us an instance $(\cW \subseteq \mathbb{F}_{2^d}^m, k, c)$ of $\gapLDS_3^\col$. We use $n$ to denote $|\cW|$.

We now describe how we construct the instance $(\bA \in \mathbb{F}_2^{m' \times n'}, \by \in \mathbb{F}_2^{m'}, k)$ of $\gapvec_{3 - \varepsilon}$ where $m' = md + k$ and $n' = (2^d - 1)n$. First, the parameter $k$ remains the same from the $\gapbsmd_3^\col$. Second, $\by$ is the $m'$-dimensional vector whose first $k$ coordinates are ones and the remaining coordinates are zeros, i.e., $\by = \vone_k \circ \vzero_{md}$.

To define $\bA$, we need to introduce some notation. First, recall that the elements of the field $\mathbb{F}_{2^d}$ can be viewed as $d$-dimensional $\mathbb{F}_2$-vectors. In other words, there is a map $f: \mathbb{F}_{2^d} \to \mathbb{F}_2^d$ such that $f(x + y) = f(x) + f(y)$ for all $x, y \in \mathbb{F}_{2^d}$, and $f(x) = \bzero_d$ iff $x = 0$. We additionally define $F : \mathbb{F}_{2^d}^m\to \mathbb{F}_2^{md}$ by $F(\bv)=f(\bv[1])\circ \dots \circ  f(\bv[m])$. Again, we have $F(\bu + \bv) = F(\bu) + F(\bv)$ for all $\bu, \bv \in \mathbb{F}_{2^d}^m$, and $F(\bv) = \bzero_{md}$ iff $\bv = \bzero_m$.

Moreover, for every $i \in [k]$, let $\be_i$ be the $k$-dimensional vector with one at the $i$-th coordinate and zero elsewhere. We identify the column indices of $\bA$ by $\cW \times (\mathbb{F}_{2^d} \setminus \{0\})$. Then, we construct $\bA$ by letting its $(\bw, a)$-column be
\begin{align*}
\bA[(\bw, a)] := \be_{c(\bw)} \circ F(a \cdot \bw).
\end{align*}
This completes our reduction description. It is simple to verify that the reduction runs in polynomial time. We now move on to prove the correctness of the reduction.

\textbf{(YES Case)} Suppose that $(G, s, h)$ is a YES instance of $\gapbsmd_3$. From Theorem~\ref{thm:lds-red}, there exist $\bw_1, \dots, \bw_k \in \cW$ all of different colors and non-zero $a_1, \dots, a_k \in \mathbb{F}_{2^d} \setminus \{0\}$ such that $\sum_{i \in [k]} a_i \cdot \bw_i = \bzero$. Let $\bx \in \mathbb{F}_2^{n'}$ such that $\bx[(\bw_i, a_i)] = 1$ for all $i \in [k]$ and all other coordinates of $\bx$ are zero. Clearly, $\|\bx\|_0 = k$ and 
\begin{align*}
\bA\bx = \sum_{i \in [k]} \bA[(\bw_i, a_i)] = \sum_{i \in [k]} \be_{c(\bw)} \circ F\left(a_i \cdot \bw_i\right) = \vone_k \circ F\left(\sum_{i \in [k]} a_i \cdot \bw_i\right) = \vone_k \circ F(\bzero) = \vone_k \circ \bzero_{md} = \by,
\end{align*}
which means that $(\bA, \by, k)$ is indeed a YES instance.

\textbf{(NO Case)} Suppose that $(G, s, h)$ is a NO instance of $\gapbsmd_3$. From Theorem~\ref{thm:lds-red}, $(\cW, k, c)$ is a NO instance of $\gapLDS_3^\col$. Suppose for the sake of contradiction that $(\bA, \by, k)$ is not a NO instance of $\gapvec_{3 - \varepsilon}$. That is, there exists $\bx \in \mathbb{F}_2^{n'}$ such that $\bA \bx = \by$ and $\|\bx\|_0 \leqs (3 - \varepsilon) k < 3k$.

For every $i \in [k]$, let us define $X_i$ as
\begin{align*}
X_i := \{(\bw, a) \in \cW \times (\mathbb{F}_{2^d} \setminus \{0\}) : \bx[(\bw, a)] = 1\}.
\end{align*}
We can write $\bA\bx$ as
\begin{align*}
\bA\bx = \sum_{i \in [k]} \sum_{(\bw, a) \in X_i} \be_i \circ F(a \cdot \bw) = \left(\sum_{i \in [k]} |X_i|\be_i\right) \circ F\left(\sum_{i \in [k]} \sum_{(\bw, a) \in X_i} a \cdot \bw\right). 
\end{align*}

Since $\bA\bx = \by$, we must have $|X_i| \equiv 1 \Mod{2}$ for all $i \in [k]$ and 
\begin{align} \label{eq:sum-3inapprox}
\sum_{i \in [k]} \sum_{(\bw, a) \in X_i} a \cdot \bw = \bzero_m.
\end{align}

Moreover, observe that $\|\bx\|_0 = \sum_{i \in [k]} |X_i|$. Since $\|\bx\|_0 < 3k$ and $|X_i| \equiv 1 \Mod{2}$ for all $i \in [k]$, there must be $i^* \in [k]$ such that $|X_{i^*}| = 1$. Let $(\bw^*, a^*)$ be the unique element of $X_{i^*}$. Notice that $\bw^*$ appears only once in the left hand side of~\eqref{eq:sum-3inapprox} with coefficient $a^* \ne 0$; as a result, this is a non-empty linear combination of less than $3k$ vectors in $\cW$. Hence, there are less than $3k$ vectors in $\cW$ that are linearly dependent, which contradicts the fact that $(\cW, k, c)$ is a NO instance of $\gapLDS_3^\col$.

Thus, $(\cW, k, c)$ must be a NO instance of $\gapvec_{3 - \varepsilon}$ as desired.
\end{proof}

\subsection{Every Constant Factor Inapproximability of Maximum Likelihood Decoding}
\label{sec:full-result}

In this section, we will prove our main result of this section, i.e., Theorem~\ref{thm:MLDmain}. 

To demonstrate the main additional idea, let us recall why the proof in the previous section fails to give us the hardness of factor three. The reason is as follows: when $d > 2$, we can pick three non-zero elements $a, b, c \in \mathbb{F}_{2^d}$ whose sum is zero. We can then select any $\bw_1, \dots, \bw_k$ of different colors, and set $\bx[(\bw_i, a)], \bx[(\bw_i, b)], \bx[(\bw_i, c)]$ to be ones for all $i \in [k]$, and set the rest of coordinates of $\bx$ to be zero. Clearly, $\|\bx\|_0 = 3k$ and this gives
\begin{align*}
\bA\bx = \vone_k \circ F\left(\sum_{i \in [k]} (a + b + c) \bw_i\right) = \vone_k \circ \vzero_{md} = \by.
\end{align*}
That is, the fact that $a + b + c = 0$ allows us to zero out the coefficient of each $\bw_i$. Our fix to overcome this issue is rather straightforward. First, observe that we can write $\mathbb{F}_2^d \setminus \{0\} = C_1 \cup \dots \cup C_d$ such that no such ``problematic'' tuples $(a, b, c)$ appears in $C_i$, where $C_i$ is defined as $\{a \in \mathbb{F}_q \setminus \{0\} : f(a)[i] = 1\}$ (where $f$ is as defined in Theorem~\ref{thm:oddweak3}). In fact, this guarantees not only that any triplet in $C_i$ sums to non-zero, but also that any odd number of elements in $C_i$ sums to non-zero.

Now, the modification is very simple: instead of creating columns for $(\bw, a)$ for all $a \in \mathbb{F}_{2^d} \setminus \{0\}$, we will only create columns for $(\bw, a)$ for $a \in C_{g[c(\bw)]}$ where $g \in [d]^k$, i.e., we restrict the coefficients to only $C_{g[j]}$ for each color $j$. This helps us avoid ``problematic'' coefficients as described above. In particular, we can construct a instance $\bA_g$ for every choice of $c$. As in the reduction to $\gapLDS^\col$, we can merge the various instances corresponding to different choices of $g$ into a single instance using the shifting trick employed in the proof of Lemma \ref{lem:lds-coloring}.

For a general prime $p$, we can write $\mathbb{F}_{p^d}$ similarly as above into a union of subsets, such that each subset does not contain ``problematic'' tuples of elements, as stated below. Note that the definition of ``problematic'' is slightly more complicated for general $p$. Now, the tuple $(a_1, \dots, a_t) \in \mathbb{F}_{p^d}^t$ is ``problematic'' if we can find $b_{a_1}, \dots, b_{a_t} \in \mathbb{F}_p$ such that $b_{a_1} + \cdots + b_{a_t} \ne 0$ (over $\mathbb{F}_p$) but $b_{a_1} \cdot a_1 + \cdots b_{a_t} \cdot a_t = 0$ (over $\mathbb{F}_{p^d}$).  

\begin{definition} \label{def:ci}
For $q = p^d$ where $d \in \mathbb{N}$ and $p$ is a prime, let $f: \mathbb{F}_q \to \mathbb{F}_p^d$ be the isomorphism between $\mathbb{F}_q^{+}$ and the $\mathbb{F}_p$-vector space $\mathbb{F}_p^d$. For every $i \in [d]$ and $\alpha \in \mathbb{F}_p \setminus \{0\}$, we define $C_{(i, \alpha)} := \{a \in \mathbb{F}_q \setminus \{0\} : f(a)[i] = 1\}$. Observe that 
\begin{enumerate}[(i)]
\item $\mathbb{F}_q \setminus \{0\} = \bigcup_{i \in [d], \alpha \in \mathbb{F}_p \setminus \{0\}} C_{(i, \alpha)}$
\item for any $i \in [d], \alpha \in \mathbb{F}_p \setminus \{0\}$ and any $(b_a)_{a \in C_{(i, \alpha)}} \in (\mathbb{F}_p)^{C_{(i, \alpha)}}$ such that $\sum_{a \in C_{(i, \alpha)}} b_a \ne 0$, we have $\sum_{a \in C_{(i, \alpha)}} b_a \cdot a \ne 0$. \label{prop:}
\end{enumerate}
\end{definition}

With this definition, we can easily generalize the (sketched) reduction from $\F$ to $\mathbb{F}_p$. The properties of the reduction are summarized and proved below.

\begin{theorem} \label{thm:mld-strong}
Given an instance $(\cW \subseteq \mathbb{F}^m_{p^d}, k, c)$ of $\gapLDS_\gamma$ where $p$ is a prime, we can create an instance of $\gapvec_{\gamma, p}$ (with the same parameter $k$) in $O((dp)^k \cdot poly(|\cW|, m, p^d))$ time such that
\begin{itemize}
\item (YES) If $(\cW, k, c)$ is a YES instance of $\gapLDS_\gamma$, then the $\gapvec_{\gamma, p}$ is a YES instance.
\item (NO) If $(\cW, k, c)$ is a NO instance of $\gapLDS_\gamma$, then the $\gapvec_{\gamma, p}$ is a NO instance.
\end{itemize}
\end{theorem}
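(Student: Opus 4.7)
The plan is to follow the sketch preceding the statement: iterate over all ``color-to-subset guesses'' $g \in \mathcal{G} := ([d] \times (\mathbb{F}_p \setminus \{0\}))^k$, and merge the $|\mathcal{G}| = (d(p-1))^k$ resulting sub-instances into one $\gapvec_{\gamma, p}$ instance using the coordinate-shift idea of Lemma~\ref{lem:lds-coloring}. Concretely, extend the coordinate-wise $\mathbb{F}_p$-linear isomorphism $f$ of Definition~\ref{def:ci} to $F : \mathbb{F}_{p^d}^m \to \mathbb{F}_p^{md}$. I will index the rows of $\bA$ by $[k] \sqcup (\mathcal{G} \times [md])$, so there are $k$ shared ``color indicator'' rows plus one fresh $md$-block per guess $g$, and index the columns by triples $(g, \bw, a)$ with $a \in C_{g(c(\bw))}$. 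The $(g, \bw, a)$-column will be $\be_{c(\bw)}$ on top, $F(a \cdot \bw)$ placed inside the $g$-th lower block, and zero in every other $g'$-block. The target is $\by := \vone_k \circ \vzero_{md |\mathcal{G}|}$ and the parameter is $k$. Stored column-wise, the output has size $O((d(p-1))^k \cdot \poly(|\cW|, m, p^d))$, matching the claimed running time.

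Completeness should be immediate. Given a YES witness $\sum_{i \in [k]} a_i \bw_i = \vzero$ with distinct colors and $a_i \in \mathbb{F}_{p^d} \setminus \{0\}$, property~(i) of Definition~\ref{def:ci} lets me pick a single guess $g^*$ with $a_i \in C_{g^*(c(\bw_i))}$ for every $i$. Setting $\bx[(g^*, \bw_i, a_i)] = 1$ for $i \in [k]$ and zero elsewhere produces a $\{0, 1\}$-vector of Hamming weight $k$; the top $k$ rows evaluate to $\sum_i \be_{c(\bw_i)} = \vone_k$ (distinct colors), the $g^*$-block evaluates to $F(\sum_i a_i \bw_i) = \vzero$, and every other $g$-block stays zero because those columns are unused.

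For soundness, I take an arbitrary $\bx \in \mathbb{F}_p^{m'}$ with $\|\bx\|_0 \le \gamma k$ and $\bA \bx = \by$, set $b_{g, \bw, a} := \bx[(g, \bw, a)]$, and define $c_{g, \bw} := \sum_{a} b_{g, \bw, a} \cdot a \in \mathbb{F}_{p^d}$. Reading the $g$-block of the lower rows (and using that $F$ is an $\mathbb{F}_p$-linear bijection) yields $\sum_{\bw} c_{g, \bw} \bw = \vzero$ in $\mathbb{F}_{p^d}^m$ for every $g \in \mathcal{G}$. The bound $|\{\bw : c_{g, \bw} \ne 0\}| \le \|\bx\|_0 \le \gamma k$ is trivial from the support, so the only remaining step is to exhibit some $g^*$ for which this dependency is non-trivial; such a $g^*$ will then yield a non-empty linearly dependent set of size $\le \gamma k$ in $\cW$, contradicting the NO assumption on $(\cW, k, c)$.

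The main obstacle — and the only step where the specific structure of the sets $C_{(i, \alpha)}$ is essential — is precisely this non-triviality. The plan is to argue by contradiction: if $c_{g, \bw} = 0$ for all $(g, \bw)$, then since the $a$'s appearing in the sum defining $c_{g, \bw}$ all lie in the single subset $C_{g(c(\bw))}$, the contrapositive of property~(ii) of Definition~\ref{def:ci} forces $\sum_{a} b_{g, \bw, a} = 0$ in $\mathbb{F}_p$ for every $(g, \bw)$. On the other hand, the top $k$ shared rows of $\bA \bx = \by$ state that, for each color $i \in [k]$,
\[
1 \;=\; \sum_{g} \sum_{\bw : c(\bw) = i} \sum_{a} b_{g, \bw, a},
\]
which is inconsistent with the previous equalities. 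Hence some $c_{g^*, \bw^*} \ne 0$, providing the desired non-trivial dependency of $\le \gamma k$ vectors in $\cW$ and completing the reduction.
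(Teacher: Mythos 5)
Your construction and analysis match the paper's proof: the column-indexing by $(g, \bw, a)$ with $a$ restricted to $C_{g(c(\bw))}$, the block-disjoint placement of $F(a\cdot\bw)$ across guesses, the shared color-indicator rows, and the use of Definition~\ref{def:ci}(ii) (in contrapositive form) to extract a nonzero coefficient $c_{g^*,\bw^*}$ are all the same ingredients the paper uses. The only cosmetic difference is that you argue the soundness step by a global contradiction (``if every $c_{g,\bw}=0$ then every inner $\mathbb{F}_p$-sum vanishes, contradicting the color rows''), whereas the paper fixes color $i=1$ and directly extracts $g^*$ and $\bw^*$ from the nonvanishing sum; these are interchangeable.
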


\begin{proof}
Let $(\cW \subseteq \mathbb{F}^m_{p^d}, k, c)$ be an instance of $\gapLDS_\gamma$. Let $n = |\cW|$, and $f: \mathbb{F}_q \to \mathbb{F}_p^d$ be the isomorphism between $\mathbb{F}_q^{+}$ and the $\mathbb{F}_p$-vector space $\mathbb{F}_p^d$. Furthermore, let $F : \mathbb{F}_{p^d}^m\to \mathbb{F}_p^{md}$ be defined by $F(\bv)=f(\bv[1])\circ \dots \circ  f(\bv[m])$. We will also find it convenient to define $\ell = d^k(p-1)^k$, which is the total number of distinct choices of $g$.

For every $g \in [\ell]$, we construct a matrix  $\bA_g \in \mathbb{F}_2^{m' \times md}$ where $m' = k + \ell md$. As before, we index the columns of $\bA_g$ with the set $I_g = \cup_{\bw \in \cW} \{\bw\} \times C_{g[c(\bw)]}$. Here, for any $[\bw,a] \in I_g$, we let the corresponding column be 
\begin{equation}
\bA_g[\bw,a] :=\be_{c^{-1}(\bw)} \circ \left(\bzero_{md(i-1)} \circ F(a\cdot\bw) \circ \bzero_{md(\ell - i)}\right)
\end{equation}
Finally, we define the matrix $\bA = [\bA_g ]_{g \in [\ell]} \in \mathbbm{F}^{m' \times n'}_p$ to be the concatenation of all the $\bA_g$ matrices, where $n' = \ell n$. Note that the above construction ensures that for any distinct pair of $g,g' \in [\ell]$, the column supports of the sub-matrices $\bA_g$ and $\bA_{g'}$ do not intersect in the coordinates $[k+1,n']$. We also define the target vector $\by = \vone_{k} \circ \bzero_{md\ell}$. We set $(\bA,\by,k)$ to be the $\gapvec_{\gamma,p}$ instance output by the reduction. Clearly, the reduction runs in $O((dp)^k \cdot poly(n, m, p^d))$ time. We next argue its correctness.

\textbf{(YES Case)} Suppose there exist $\bw_1, \dots, \bw_k \in \cW$ all of different colors and non-zero $a_1, \dots, a_k \in \mathbb{F}_{p^d} \setminus \{0\}$ such that $\sum_{i \in [k]} a_i \cdot \bw_i = \bzero$. We claim that $(\bA, \by, k)$ is a YES instance of $\gapvec_{\gamma, p}$. To see this, consider $g^*$ where $a_i$ belongs to $C_{g^*[i]}$ for all $i \in [k]$. . Also let $\bx \in \mathbb{F}_p^{n'}$ be such that $\bx_{g^*}[(\bw_i, a_i)] = 1$ for all $i \in [k]$, where $\bx_{g^*}$ is the vector $\bx$ restricted to coordinates corresponding to the sub-matrix $\bA_{g^*}$. We set all other coordinates of $\bx$ to zero. (Note that the column $(\bw_i, a_i)$ exists in $\bA_{g^*}$ because $a_i \in C_{g^*[i]}$.) Clearly, $\bx$ is a $\{0, 1\}$-vector with $\|\bx\|_0 = k$ and  
\begin{align*}
\bA\bx = \bA_{g^*}\bx_{g^*} = \sum_{i \in [k]} \bA_{g^*}[(\bw_i, a_i)] = \sum_{i \in [k]} \be_{c(\bw_i)} \circ F\left(a_i \cdot \bw_i\right) = \vone_k \circ F\left(\sum_{i \in [k]} a_i \cdot \bw_i\right) = \vone_k \circ F(\bzero) = \by,
\end{align*}
which means that $(\bA, \by, k)$ is indeed a YES instance of $\gapvec_{\gamma, p}$.

\textbf{(NO Case)} Suppose that $(\cW, k, c)$ is a NO instance of $\gapLDS_\gamma$. Consider any $\bx \in \mathbb{F}_p^{n'}$ such that $\bA\bx = \by$. Recall that for any $g \in [\ell]$, $\bx_g$ is the sub-vector of $\bx$ which acts on the sub-matrix $\bA_g$. Let us rewrite $\bA\bx$ as follows:
\begin{align*}
\bA\bx = \left(\sum_{i \in [k]} \left(\sum_{g \in [\ell]}\sum_{\bw \in c^{-1}(i), a \in C_{g[i]}} \bx_g[(\bw, a)] \right) \be_i \right) \circ \bv_1 \circ \bv_2 \circ \cdots \circ \bv_\ell
\end{align*}
where for any $g \in [\ell]$, the vector $\bv_g$ is the sub-vector of $\bA_g\bx_g$ which can be formally expressed as     
\begin{align*}
\bv_g = F\left(\sum_{\bw \in \cW} \left(\sum_{a \in C_{g[c(\bw)]}} \bx_g[(\bw, a)] \cdot a\right) \cdot \bw\right)
\end{align*}
In other words, it is the block resulting from $\bA_{g}\bx_{g}$ in the coordinates $k+1,k+2,\ldots,m'$. Since $\bA\bx = \by$, we must have
\begin{align} \label{eq:sum-coeff}
\sum_{g \in [\ell]} \sum_{\bw \in c^{-1}(i), a \in C_{g[i]}} \bx_g[(\bw, a)] = 1 & & \forall i \in [d]
\end{align}
and for every $g \in [\ell]$,
\begin{align} \label{eq:sum-stronginapprox}
\sum_{\bw \in \cW} \left(\sum_{a \in C_{g[c(\bw)]}} \bx_g[(\bw, a)] \cdot a\right) \cdot \bw  = \bzero_{m}.
\end{align}
From~\eqref{eq:sum-coeff} with $i = 1$, there must be $g^* \in [\ell]$ such that 
\begin{align*}
	\sum_{\bw \in c^{-1}(i), a \in C_{g^*[1]}} \bx_{g^*}[(\bw, a)] \neq 0  
\end{align*}
which in turn implies that there exists $\bw^*$ of color 1 such that $\sum_{a \in C_{g^*[1]}} \bx_{g^*}[(\bw^*, a)] \ne 0$.  From this and observation (ii) in Definition~\ref{def:ci}, we have $\sum_{a \in C_{g^*[1]}} \bx_{g^*}[(\bw^*, a)] \cdot a \ne 0$.  
This means that the left hand side of~\eqref{eq:sum-stronginapprox} instantiated with $g^*$ is a non-zero linear combination of at most $\|\bx_{g^*}\|_0$ vectors from $\cW$. Since $(\cW, k, c)$ is a NO instance of $\gapLDS_\gamma$, we can conclude that $\|\bx_{g^*}\|_0$ (and consequently $\|\bx\|_0$) must be larger than $\gamma \cdot k$. Hence, $(\bA, \by, k)$ is a NO instance for $\gapvec_{\gamma, p}$.
\end{proof}

Finally, we note that the above theorem together with Theorems~\ref{thm:bsmd} and~\ref{thm:lds-red} imply the main result of this section (Theorem~\ref{thm:MLDmain}). In particular, by selecting $d = \lceil\log_p(|L| + |R|)\rceil$ in Theorem~\ref{thm:lds-red} and applying Theorem~\ref{thm:mld-strong} afterwards, we get a (Turing) reduction from $\gapbsmd_\gamma$ to $\gapvec_{\gamma, p}$ that runs in time $$O((pd)^k \cdot \poly(|L| + |R|)) \leqs O\left(\left((pd)^{\sqrt{pd}} + (k^2)^{k})\right) \cdot \poly(|L| + |R|)\right) = k^{O(k)} \cdot \poly(|L| + |R|),$$
which is FPT. From this and from \W[1]-hardness of $\gapbsmd_\gamma$ (Theorem~\ref{thm:bsmd}), we arrive at Theorem~\ref{thm:MLDmain}.


\section{Parameterized Intractability of Minimum Distance Problem}			\label{sec:main-reduction}

Next, we will prove our main theorem regarding parameterized intractability of \mdp:

\begin{theorem}\label{thm:MDPmain}
	$\mdp_\gamma$ for any $\gamma \geqs 1$ is \W[1]-hard under randomized reductions.
\end{theorem}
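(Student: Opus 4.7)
The plan is to follow the roadmap sketched in Subsections~\ref{sec:NCPoverview} and~\ref{sec:MDPoverview}: start from the parameterized inapproximability of $\gapvec$ over $\F$ (Theorem~\ref{thm:MLDmain} with $p=2$), pass through the sparse nearest codeword problem $\sncp$, then reduce to $\mdp_{1.01}$ using a new gadget called a locally suffix dense code (LSDC), and finally amplify via the tensor product. The overall composition yields a randomized FPT reduction from a \W[1]-hard problem to $\mdp_\gamma$ for every constant $\gamma \geqs 1$, establishing the theorem.

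First, I would formalize the easy reduction from $\gapvec_\gamma$ to $\sncp_\gamma$. Given $(\bA, \by, k)$, produce $(\bA', \by', k)$ where $\bA'$ stacks $\gamma k + 1$ copies of $\bA$ on top of the identity $\Id_m$, and $\by'$ stacks $\gamma k + 1$ copies of $\by$ on top of $\bzero_m$. For any $\bx$,
\[
\|\bA'\bx - \by'\|_0 \;=\; (\gamma k + 1)\|\bA\bx - \by\|_0 + \|\bx\|_0,
\]
so the YES/NO cases transfer with an (at most) factor-$\gamma$ gap. We need only $\gamma = 2.5$ for the next step.

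Next is the heart of the proof: the reduction from $\sncp_{2.5}$ to $\mdp_{1.01}$. I would (a) construct LSDCs and (b) plug them into a DMS-style gadget. For (a), we want a systematic linear code $\bL \in \F^{h \times m}$ with distance $d = \Theta(k)$ whose size saturates the sphere-packing bound up to a factor $f(d)\cdot\poly(h)$; BCH codes (Theorem~\ref{thm:bch}) with message length $h - \lceil (d-1)/2\rceil \log(h+1)$ satisfy this. The key local suffix density property would read: for any prefix $\bx \in \F^q$, if $\bs \in \F^{h-q}$ is drawn from an efficiently samplable distribution (essentially uniform), then $\cB_h(\bx \circ \bs, \lfloor (d-1)/2\rfloor)$ meets $\bL$ with probability at least some $p(k) > 0$ depending only on $k$. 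This would follow because the restriction of $\bL$ to the affine subspace $\{\bx\} \times \F^{h-q}$ itself nearly saturates the sphere-packing bound, so a uniform point in that subspace lies within distance $\lfloor (d-1)/2\rfloor$ of some codeword with non-negligible probability. For (b), given $(\bA, \by, k)$ of $\sncp_{2.5}$ with $\bA \in \F^{n \times q}$, I would output
\[
\bB \;=\; \begin{bmatrix} \bA & \bzero_{n\times(m-q)} & -\by \\ \bL & & -\bs \end{bmatrix}, \qquad k' = k + \lfloor (d-1)/2\rfloor,
\]
with $\bs$ sampled as above (first $q$ coordinates forced to $0$). For completeness, a $\sncp$ witness $\bx$ lifts to $\bz = \bx'\circ 1$ where $\bx'$ extends $\bx$ via the LSDC suffix certificate; then $\|\bB\bz\|_0 \leqs k + \lfloor(d-1)/2\rfloor$ with probability $p(k)$. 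For soundness, case split on the last coordinate of $\bz$: if $0$, the $\bL$-block contributes $\geqs d$ since $\bz \ne \bzero$ and $\bL$ has distance $d$; if $1$, the top block contributes $> 2.5k$ by the $\sncp$ NO guarantee. Choosing $d \approx 2.5k$ ensures both exceed $1.01\,k'$. Since the success probability $p(k)$ depends only on the parameter, this meets Definition~\ref{def:rand-fpt-red}.

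Finally, gap amplification is routine: tensoring the code generated by $\bB$ with itself $\lceil \log_{1.01}\gamma\rceil$ times multiplies distances (Proposition~\ref{prop:gap-amplification}), yielding an $\mdp_\gamma$ instance whose target value is still a function of $k$ alone, so the reduction remains FPT. The main obstacle I anticipate is part (a) of the middle step: previous LDC constructions (notably DMS) couple the ball radius $r$ to the block length $h$, which would blow up $k'$ beyond any function of $k$ and destroy the FPT guarantee. Our fix exploits systematicity together with the near-Hamming-bound condition, and in particular avoids the DMS trick of using a random linear map $\bT$ to collapse $\cB(\bs,r)\cap \bL$ onto a full subspace; that trick forces $r$ to grow with $h$, whereas restriction to affine subspaces keeps $r = \lfloor(d-1)/2\rfloor$ a pure function of $k$.
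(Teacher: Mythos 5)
Your proposal follows essentially the same route as the paper: $\gapvec \to \sncp \to \mdp_{1.01} \to \mdp_\gamma$, with the same BCH-based Locally Suffix Dense Code construction, the same DMS-style gadget matrix $\bB$ with the $\bL$-block and randomly sampled suffix $\bs$, the same case analysis on the last coordinate of $\bz$, and the same tensoring amplification. One small bookkeeping slip: in your $\gapvec_\gamma \to \sncp$ step you stack copies of $\bA$ \emph{and} append an identity block, but since the $\sncp$ objective already charges $\|\bx\|_0$ separately, this double-counts $\|\bx\|_0$ and makes a YES instance cost up to $2k$ rather than $k$ at your stated target $k' = k$; the paper's Theorem~\ref{thm:SNCPmain} omits the identity block precisely so that the gap transfers exactly (the overview in Section~\ref{sec:NCPoverview} has the same redundancy, which the formal proof quietly drops).
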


This again proceeds in two steps. First, we give a simple reduction from \gapvec\ to \sncp\ in Section~\ref{sec:mld-to-sncp}. Then, we reduce the latter to \mdp\ in Section~\ref{sec:sncp-to-mdp}.

\subsection{Parameterized Inapproximability of Sparse Nearest Codeword Problem} \label{sec:mld-to-sncp}

We start with a simple approximation-preserving reduction from \gapvec\ to \sncp.

\begin{theorem}\label{thm:SNCPmain}
	$\sncp_\gamma$ for any $\gamma \geqs 1$ is \W[1]-hard under randomized reductions.
\end{theorem}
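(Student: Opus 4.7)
The plan is to follow the overview in Section~\ref{sec:NCPoverview} and give a direct approximation-preserving FPT reduction from $\gapvec_\gamma$ (shown \W[1]-hard under randomized reductions in Theorem~\ref{thm:MLDmain}) to $\sncp_\gamma$. Given an instance $(\bA, \by, k)$ of $\gapvec_\gamma$, I will produce the $\sncp_\gamma$ instance $(\bA', \by', k)$ whose matrix and target are obtained by vertically stacking $t := \gamma k + 1$ identical copies of $\bA$ and of $\by$, respectively. The parameter is kept at $k$, no randomness is used, and the whole construction runs in polynomial time, so this will be a legal (in fact, deterministic) FPT reduction.

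The entire correctness argument rides on the simple identity $\|\bA' \bx - \by'\|_0 = t \cdot \|\bA\bx - \by\|_0$, which holds for every $\bx \in \F^m$ because each of the $t$ identical blocks contributes exactly $\|\bA\bx - \by\|_0$ to the Hamming weight. Consequently, the $\sncp$ objective on the reduced instance equals $t \cdot \|\bA\bx - \by\|_0 + \|\bx\|_0$.

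For completeness, if $(\bA, \by, k)$ is a YES instance of $\gapvec_\gamma$ then there exists $\bx$ with $\bA\bx = \by$ and $\|\bx\|_0 \leqs k$; the $\sncp$ objective above collapses to $\|\bx\|_0 \leqs k$, witnessing that $(\bA', \by', k)$ is a YES instance of $\sncp_\gamma$. For soundness, I will suppose $(\bA, \by, k)$ is a NO instance and fix any $\bx \in \F^m$. If $\|\bx\|_0 > \gamma k$ then the objective already exceeds $\gamma k$; otherwise $\|\bx\|_0 \leqs \gamma k$ forces $\bA\bx \neq \by$ by the $\gapvec_\gamma$ NO guarantee, so $\|\bA\bx - \by\|_0 \geqs 1$ and the objective is at least $t = \gamma k + 1 > \gamma k$. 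Either way, $(\bA', \by', k)$ is a NO instance of $\sncp_\gamma$.

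I do not expect a meaningful obstacle in this step: the only design choice is to pick the multiplicity $t$ large enough that a single unit of $\|\bA\bx - \by\|_0$ overwhelms the $\gamma k$ slack coming from the $\|\bx\|_0$ term, and $t = \gamma k + 1$ is the smallest integer that does so. Composing this deterministic reduction with Theorem~\ref{thm:MLDmain} then immediately yields the desired \W[1]-hardness of $\sncp_\gamma$ under randomized reductions, proving Theorem~\ref{thm:SNCPmain}.
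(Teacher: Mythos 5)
Your reduction is correct and is essentially identical to the paper's: Section~5.1 stacks $a = \lceil \gamma k + 1\rceil$ copies of the $\gapvec$ matrix and target, and its soundness argument splits on the same two cases ($\|\bx\|_0 > \gamma k$ versus $\|\bx\|_0 \leqs \gamma k$). The only nit is that you should take $t = \lceil \gamma k + 1 \rceil$ (or $\lfloor \gamma k \rfloor + 1$) copies rather than $\gamma k + 1$, since $\gamma$ need not be an integer.
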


\begin{proof}
We reduce from $\gapvec_\gamma$, which is \W[1]-hard from Theorem~\ref{thm:MLDmain}. Let $(\bB, \bz, k)$ be the input for $\gapvec_\gamma$ where $\bB \in \F^{n \times m}$, $\by\in\F^n$, and $t$ is the parameter. Let $a = \lceil \gamma k + 1 \rceil$. We produce an instance $(\bA, \by, k)$ for $\sncp_{\gamma}$ by letting
\begin{align*}
\bA =
\begin{bmatrix}
\bB \\
\vdots \\
\bB
\end{bmatrix}
\begin{tikzpicture}[baseline=4ex]
\draw [decorate,decoration={brace,amplitude=10pt}]
(-1.5,1.4) -- (-1.5,0) node [black,midway,xshift=30pt] { $a$ copies};
\end{tikzpicture}
, \by =
\begin{bmatrix}
\bz \\
\vdots \\
\bz
\end{bmatrix}
\begin{tikzpicture}[baseline=4ex]
\draw [decorate,decoration={brace,amplitude=10pt}]
(-1.5,1.4) -- (-1.5,0) node [black,midway,xshift=30pt] { $a$ copies};
\end{tikzpicture}
\end{align*}

The reduction clearly runs in polynomial time, we are only left to argue that it appropriately maps YES and NO cases from $\gapvec_\gamma$ to those in $\sncp_\gamma$.

{\bf (YES Case)} Suppose that $(\bB, \bz, k)$ is a YES instance of $\gapvec_{\gamma}$, i.e., there exists $\bx \in \cB_q(\bzero, k)$ such that $\bB\bx = \bz$. This implies that $\|\bA\bx - \by\|_0 + \|\bx\|_0 = \|\bx\|_0 \leq k$ as desired.

{\bf (NO Case)} Suppose that $(\bB, \bz, k)$ is a NO instance of $\gapvec_{\gamma}$, i.e., for all $\bx \in \cB_m(\bzero, \gamma k)$, we have $\bB\bx \ne \bz$. Now, let us consider two cases, based on whether $\bx \in \cB_m(\bzero, \gamma k)$. First, if $\bx \in \cB(\bzero, \gamma k)$, then we have $\|\bA\bx - \by\|_0 + \|\bx\|_0 \geq a\|\bB\bx - \bz\|_0 \geq a > \gamma k$. On the other hand, if $\bx \notin \cB_m(\bzero, \gamma k)$, then $\|\bA\bx - \by\|_0 + \|\bx\|_0 \geq \|\bx\|_0 > \gamma k$.

Thus, in both cases, $\|\bA\bx - \by\|_0 + \|\bx\|_0 > \gamma k$ and $(\bA, \by, k)$ is a NO instance of $\sncp_{\gamma}$.
\end{proof}

\subsection{Reducing \sncp\ to \mdp} \label{sec:sncp-to-mdp}

In order to reduce \sncp\ to \mdp, we need to formalize the definition of Locally Suffix Dense Codes (\LSDC) and prove their existence; these are done in in Section~\ref{sec:dense-code}. Finally, we show how to use them in the reduction in Section~\ref{sec:main-red}.

\subsubsection{Locally Suffix Dense Codes} \label{sec:dense-code}

Before we formalize the notion of \emph{Locally Suffix Dense Codes} (\LSDC), let us give an intuitive explanation of \LSDC: informally, \LSDC\ is a linear code $\cC \subseteq \F^h$ where, given any short prefix $\bx \in \F^q$ where $q \ll h$ and a random suffix $\bs \in \F^{h - q}$, we can, with non-negligible probability, find a codeword that shares the prefix $\bx$ and has a suffix that is ``close'' in Hamming distance to $\bs$ (i.e. one should think of $r$ below as roughly $d/2$).  More formally, \LSDC\ can be defined as follows.

\begin{definition}\label{def:LEC}
A Locally Suffix Dense Code (\LSDC) over $\F$ with parameters\footnote{We remark that the parameter $h$ is implicit in specifying \LSDC.} $(m,q,d,r,\delta)$ an $m$-dimensional systematic linear code with minimum distance (at least) $d$ given by its generator matrix $\mathbf L\in \F^{h\times m}$ such that for any $\mathbf{x} \in \F^q$, the following holds:
	\begin{equation} \label{eq:dense_prob}
	\Pr_{\mathbf{s} \sim \F^{h - q}} \Bigg[\exists \bz \in \cB_{h - q}(\bs, r): (\bx \circ \bz) \in \bL(\F^m) \Bigg] \geqs \delta.
	\end{equation}
\end{definition}

We note that our notion of Locally Suffix Dense Codes is closely related and inspired by the notion of Locally Dense Codes (\LDC) of Dumer~\etal~\cite{DMS03}. Essentially speaking, the key differences in the two definitions are that (i) Locally Dense Codes are for the case of $q = 0$, i.e., there is no prefix involved, and (ii) $\bs$ in \LDC\ is not chosen at random from $\F^q$ but rather from $\cB_q(\bzero, r)$. Note that, apart from these, there are other subtle additional requirements in Locally Dense Codes that we do not need in our reduction, such as the requirements that the ``center'' $\bs$ is close to not just one but many codewords; however, these are not important and we will not discuss them further.

Unfortunately, the proof of Dumer~\etal\ does not directly give us the desired \LSDC; the main issue is that, when there is no prefix, the set of codewords is a linear subspace, and their proof relies heavily on the linear structure of the set (which is also why $\bs$ is randomly chosen from $\cB_q(\bzero, r)$ instead of $\F^q$). However, the set of our interest is $\Big\{ {\bf z} \in \mathbb{F}^{h - q} \Big| {\bf x} \circ {\bf z} \in \mathbf{L}(\F^m) \Big\}$, which is not a linear subspace but rather an affine subspace; Dumer \etal's argument (specifically Lemma 13 in~\cite{DMS03}) does not apply in the affine subspace case.

Below, we provide a different proof than Dumer~\etal\ for the construction of \LSDC. Our bound is more related to the \emph{Sphere Packing} (aka Hamming) bound for codes. In particular, we show below that BCH codes, which ``near'' the Sphere Packing bound gives us \LSDC\ with certain parameters. It should be noted however that the probability guarantee $\delta$ that we have is quite poor, i.e. $\delta \geqs d^{-\Theta(d)}$, but this works for us since $d$ is bounded by a function of the parameter of our problem. On the other hand, this would not work in NP-hardness reductions of~\cite{DMS03} (and, on top of this, our codes may not satisfy other additional properties required in \LDC).



\begin{lemma} \label{lem:ldc}
	For any $q, d \in \mathbbm{N}$ such that $d$ is an odd number larger than one, there exist $h,m \in \mathbb{N}$ and $\bL \in \F^{h \times m}$ which is a \LSDC with parameters $\left(m,q,d,\frac{d - 1}{2},\frac{1}{d^{d/2}}\right)$. Additionally, the following holds:
	\begin{itemize}
		\item $h, m \leqs \poly(q, d)$ and $m \geqs q$,
		\item $\bL$ can be computed in $\poly(q, d)$ time.
	\end{itemize}
\end{lemma}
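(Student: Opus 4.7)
The plan is to take $\bL$ to be (the generator matrix of) a systematic BCH code of carefully chosen block length, and to lower bound the probability in~\eqref{eq:dense_prob} by a sphere-packing argument on the affine fiber of codewords sharing the given prefix. Concretely, with $r := (d-1)/2$, pick $h$ of the form $2^t - 1$ of order $\poly(q,d)$ such that (i)~Theorem~\ref{thm:bch} yields a BCH message length $m := h - r\log_2(h+1) \ge q$, and (ii)~$h - q \ge 9(h+1)/10$; any $h = \Theta(q + d \log(qd))$ suffices. Invoke Theorem~\ref{thm:bch} to produce a BCH code of block length $h$, dimension $m$, and minimum distance exactly $d$, then Gaussian-eliminate its generator matrix (after possibly permuting coordinates, which preserves minimum distance) into $m$-systematic form on coordinates $[m]$; since $q \le m$, the resulting $\bL \in \F^{h \times m}$ is in particular $q$-systematic on the first $q$ coordinates.

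Fix a prefix $\bx \in \F^q$ and define $S_\bx := \{\bz \in \F^{h-q} : \bx \circ \bz \in \bL(\F^m)\}$; by $q$-systematicity $|S_\bx| = 2^{m-q}$. For distinct $\bz, \bz' \in S_\bx$, the codewords $\bx \circ \bz$ and $\bx \circ \bz'$ agree on the prefix and hence differ in at least $d = 2r+1$ positions of the suffix, so the triangle inequality makes the Hamming balls $\{\cB_{h-q}(\bz, r)\}_{\bz \in S_\bx}$ pairwise disjoint. This yields
\[
\Pr_{\bs \sim \F^{h-q}}\!\big[\exists \bz \in S_\bx \cap \cB_{h-q}(\bs, r)\big] \;=\; \frac{|S_\bx| \cdot \sum_{i=0}^{r}\binom{h-q}{i}}{2^{h-q}} \;=\; \frac{\sum_{i=0}^{r}\binom{h-q}{i}}{(h+1)^{r}},
\]
where the last equality uses the BCH identity $h - m = r\log_2(h+1)$.

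Keeping only the $i = r$ term and applying Stirling in the form $\binom{n}{r} \ge (ne/(2r))^r/(3\sqrt r)$ (valid for $n \ge 2r$) with $n = h - q \ge 9(h+1)/10$ produces a lower bound of $(9e/(20r))^r/(3\sqrt r)$ on the probability; a direct calculation (using $9e/10 > 1$ together with $d^{d/2}/(d-1)^{(d-1)/2} \ge \sqrt d$) verifies this is $\ge 1/d^{d/2}$ for every odd $d \ge 3$. Hence $\bL$ is an \LSDC\ with the required parameters, and the bounds $h, m \le \poly(q,d)$, $m \ge q$, together with polynomial construction time all follow from the setup and Theorem~\ref{thm:bch}. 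The only real obstacle is quantitative: since $\delta = d^{-d/2}$ is essentially the sphere-packing rate, the loose binomial bound $\binom{n}{r} \ge (n/(2r))^r$ does not suffice (it yields $\approx (2r)^{-r}$, which is smaller than $d^{-d/2}$ once $d$ is moderately large), so the $e/r$ factor in Stirling must be retained and $h$ chosen large enough that $(h-q)/(h+1)$ is bounded below by a constant close to $1$.
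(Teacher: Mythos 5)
Your construction and counting argument match the paper's proof essentially line for line: BCH code of block length $h$ with $h+1$ a power of two and message length $m = h - \frac{d-1}{2}\log(h+1)$, systematicity on $[m] \supseteq [q]$ to get $|S_\bx| \geqs 2^{m-q}$, disjointness of the radius-$\frac{d-1}{2}$ balls from the distance bound, and then a density estimate. The proof is correct.

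However, your closing remark about the quantitative ``obstacle'' is wrong, and it is worth flagging because it misjudges the slack available. You claim the elementary bound $\binom{n}{r} \geqs (n/r)^r$ (you write it as $(n/(2r))^r$ after eating the factor of two from $h-q \geqs h/2$) would yield only $\approx (2r)^{-r}$, which you say falls short of $d^{-d/2}$ for moderately large $d$. But with $r = \frac{d-1}{2}$ we have $(2r)^{-r} = (d-1)^{-(d-1)/2}$, and
\[
\frac{(d-1)^{-(d-1)/2}}{d^{-d/2}} \;=\; \frac{d^{d/2}}{(d-1)^{(d-1)/2}} \;\geqs\; \sqrt d \;>\; 1,
\]
so the elementary bound in fact overshoots the target by a factor of at least $\sqrt d$ --- precisely the slack you yourself invoke in your ``direct calculation.'' This is exactly what the paper does: with $h \geqs 2q$ it writes $\binom{h-q}{(d-1)/2} \geqs \binom{h/2}{(d-1)/2} \geqs \big(\tfrac{h}{d-1}\big)^{(d-1)/2}$ and then observes that $\frac{h}{(d-1)(h+1)} \geqs \frac1d$ once $h \geqs d-1$, giving $d^{-(d-1)/2} \geqs d^{-d/2}$ directly. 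The Stirling refinement and the stronger requirement $h - q \geqs \frac{9}{10}(h+1)$ that you impose are therefore unnecessary machinery: you reach the same conclusion, but the belief that they are needed is mistaken. The mix-up appears to come from treating $r$ as $d/2$ rather than $(d-1)/2$; that lost $\sqrt d$ is precisely what makes the crude bound suffice.
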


\begin{proof}
	Let $h$ be the smallest integer such that $h + 1$ is a power of two and that $h \geqs \max\{2q, 10d\log d\}$, and let $m = h - \left(\frac{d - 1}{2}\right)\log(h + 1)$. Clearly, $h$ and $m$ satisfy the first condition.

	Let $\mathbf{L}$ be the generator matrix of the $[h,m,d]_2$ linear code as given by Theorem~\ref{thm:bch}. Without loss of generality, we assume that the code is systematic on the first $m$ coordinates. From Theorem~\ref{thm:bch}, $\mathbf L$ can be computed in $\poly(h) = \poly(q, d)$ time.
	
	It remains to show that for our choice of $\mathbf{L}$,~\eqref{eq:dense_prob} holds for any fixed choice of $\bx \in \F^q$. Fix a vector $\bx \in \F^q$ and define the set $\mathcal{C}= \Big\{ {\bf z} \in \F^{h - q} \Big| {\bf x} \circ {\bf z} \in \mathbf{L}(\F^m) \Big\}$. Since the code generated by $\mathbf{L}$ is systematic on the first $m \geqs q$ coordinates, we have that $|\mathcal{C}| \geqs 2^{m - q}$.

Moreover, since the code generated by $\bL$ has distance $d$, every distinct $\mathbf{z}_1,\mathbf{z}_2 \in \mathcal{C}$ are at least $d$-far from each other (i.e. $\|\bz_1 - \bz_2\|_0 \geqs d$). Therefore, for any distinct pair of vectors $\mathbf{z}_1,\mathbf{z}_2 \in \mathcal{C}$, the sets $\cB_{h - q}(\mathbf{z}_1,\frac{d - 1}{2})$ and  $\cB_{h - q}(\mathbf{z}_2,\frac{d - 1}{2})$ are disjoint. Hence the number of vectors in the union of $\left(\frac{d-1}{2}\right)$-radius Hamming balls around every $\mathbf{z} \in \mathcal{C}$ is at least
	\begin{align*}
	2^{m - q}\left\lvert\mathcal{B}_{h - q}\left(\mathbf{0}, \frac{d-1}{2} \right)\right\rvert
	\geqs 2^{m - q}{h - q \choose \frac{d-1}{2} }
	\geqs 2^{m - q}{h/2 \choose \frac{d-1}{2} }
	\geqs 2^{m - q}\Big(\frac{h}{d - 1}\Big)^{\frac{d-1}{2}} 
	\end{align*}	
	On the other hand, $|\F^{h - q}| = 2^{h-q} = 2^{m - q}(h + 1)^{\frac{d-1}{2}}$. Hence, with probability at least $\left(\frac{h}{(d - 1)(h + 1)}\right)^{\frac{d-1}{2}} \geqs \frac{1}{d^{d/2}}$, a vector $\bs$ sampled uniformly from $\F^{h - q}$ lies in $\cB_{h - q}\left(\bz, \frac{d - 1}{2}\right)$ for some vector $\bz \in\mathcal{C}$. This is indeed the desired condition in~\eqref{eq:dense_prob}, which completes our proof.
\end{proof}

\subsubsection{The Reduction} \label{sec:main-red}

In this subsection, we state and prove the FPT reduction from the $\sncp$ problem to the $\mdp$ problem. It is inspired by the reduction from \cite{DMS03}, which is then modified (and simplified) to work in combination with \LSDC\ instead of \LDC.

\begin{lemma}							\label{lem:sncp-to-mdp}
There is a randomized FPT reduction from $\sncp_{2.5}$ to $\mdp_{1.01}$.
\end{lemma}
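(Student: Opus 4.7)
The plan is to follow the Dumer-Micciancio-Sudan reduction template sketched in Section~\ref{sec:MDPoverview}, replacing the locally dense codes used there with the locally suffix dense codes provided by Lemma~\ref{lem:ldc}. Given an instance $(\bA,\by,k)$ of $\sncp_{2.5}$ with $\bA\in\F^{n\times q}$, I will fix an odd integer $d$ with $d\geqs 2.5k$ (say the smallest such integer; this depends on $k$ only), and invoke Lemma~\ref{lem:ldc} to obtain a systematic \LSDC\ $\bL\in\F^{h\times m}$ of distance $d$ with parameters $\bigl(m,q,d,\tfrac{d-1}{2},d^{-d/2}\bigr)$ and $h,m\leqs \poly(q,d)$. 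I will then draw $\bs^{\mathrm{suf}}\in\F^{h-q}$ uniformly at random, set $\bs:=\vzero_q\circ \bs^{\mathrm{suf}}$, and output the instance $(\bB,k')$ of $\mdp_{1.01}$ with $k':=k+\tfrac{d-1}{2}$, where
\[
\bB \;:=\; \begin{bmatrix} \bA & \vzero_{n\times(m-q)} & -\by \\ \bL & & -\bs \end{bmatrix}.
\]

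For completeness I will argue that, with probability at least $d^{-d/2}$ over $\bs^{\mathrm{suf}}$, the resulting instance is a YES instance of $\mdp_{1.01}$. Starting from a witness $\bx\in\F^q$ with $\|\bA\bx-\by\|_0+\|\bx\|_0\leqs k$, the defining property of the \LSDC\ applied to the prefix $\bx$ gives, with the stated probability, some $\bz'\in\cB_{h-q}\bigl(\bs^{\mathrm{suf}},\tfrac{d-1}{2}\bigr)$ with $\bx\circ\bz'\in\bL(\F^m)$. By systematicity of $\bL$ on its first $m$ coordinates, there is a unique $\bx'\in\F^m$ with $\bL\bx'=\bx\circ\bz'$ (explicitly $\bx'=\bx\circ\bz'[1..m-q]$), so in particular $\bx'[1..q]=\bx$. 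Setting $\bz:=\bx'\circ 1\ne\vzero$, the top block of $\bB\bz$ equals $\bA\bx-\by$ and the bottom block equals $\bx\circ(\bz'-\bs^{\mathrm{suf}})$, giving $\|\bB\bz\|_0 = \|\bA\bx-\by\|_0+\|\bx\|_0+\|\bz'-\bs^{\mathrm{suf}}\|_0 \leqs k+\tfrac{d-1}{2}=k'$.

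For soundness I will show that whenever $(\bA,\by,k)$ is a NO instance of $\sncp_{2.5}$, every choice of $\bs^{\mathrm{suf}}$ produces a NO instance of $\mdp_{1.01}$. Take any nonzero $\bz\in\F^{m+1}$ and write $\bz=\bx'\circ t$. If $t=0$, then $\bx'\ne\vzero$ and the bottom block alone gives $\|\bB\bz\|_0\geqs \|\bL\bx'\|_0\geqs d$ by the minimum distance of $\bL$. If $t=1$, systematicity gives $(\bL\bx'-\bs)[1..q]=\bx'[1..q]-\vzero_q = \bx'[1..q]$, so the bottom block contributes at least $\|\bx'[1..q]\|_0$; combined with the top block, $\|\bB\bz\|_0\geqs \|\bA\bx'[1..q]-\by\|_0+\|\bx'[1..q]\|_0 > 2.5k$ by the NO hypothesis applied to $\bx'[1..q]$. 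Hence $\|\bB\bz\|_0>\min(d,2.5k)$ for every nonzero $\bz$.

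The main obstacle, and the reason for the particular choice $d\approx 2.5k$, is ensuring the inequality $\min(d,2.5k)>1.01\,k' = 1.01\bigl(k+\tfrac{d-1}{2}\bigr)$; a routine calculation shows this holds for $d$ in a window around $2.5k$, and any sufficiently large odd $d$ in this range will do. Since $d$ (and therefore $k'$) depend only on $k$, the success probability $d^{-d/2}$ depends only on $k$, which is exactly the form allowed by Definition~\ref{def:rand-fpt-red}; the construction runs in time $\poly(n,q,d)$, which is FPT. Combined with Theorem~\ref{thm:SNCPmain} (the $\W[1]$-hardness of $\sncp_{2.5}$), this will establish $\W[1]$-hardness of $\mdp_{1.01}$ under randomized FPT reductions.
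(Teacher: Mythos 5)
Your proposal is correct and is essentially identical to the paper's proof: the same matrix construction from the \LSDC\ of Lemma~\ref{lem:ldc}, the same choice of $d$ near $2.5k$ and of $k' = k + (d-1)/2$, the same use of systematicity to extract the prefix contribution $\|\bx'[1..q]\|_0$ in the soundness case, and the same case split on the last coordinate. The only cosmetic difference is notation (you write $-\by, -\bs$ where the paper writes $+\by$ and a separate $\bs'$, which is the same over $\F$), and the paper explicitly states the WLOG bound $t \ge 1000$ where you gesture at "any sufficiently large $d$."
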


\begin{proof}
Let $(\bB, \by, t)$ be the input for $\sncp_{\gamma'}$ where $\bB \in \F^{n \times q}$, $\by\in\F^n$, and $t$ is the parameter. We may assume without loss of generality that $t \ge 1000$. Let $d$ be the smallest odd integer greater than $2.5 t$. Let $h, m \in \N, \bL \in \F^{h \times m}$ be as in Lemma~\ref{lem:ldc}.

We produce an instance $(\bA, k)$ for $\mdp_{\gamma}$ by first sampling a random $\bs \sim \F^{h - q}$. Then, we set $k = t + (d - 1)/2$, $\bs' = \bzero_q \circ -\bs$ and
\begin{align*}
\bA =
\begin{bmatrix}
\bB & \bzero_{n \times (m - q)} & \by \\
\bL & & \bs'
\end{bmatrix}
\in \F^{(n + h) \times (m + 1)}.
\end{align*}
Notice that the zeros are padded onto the right of $\bB$ so that the number of rows is the same as that of $\bL$.

Since $k = t + (d - 1)/2 = O_{\gamma'}(t)$ and the reduction clearly runs in polynomial time, we are only left to argue that it appropriately maps YES and NO cases from $\sncp_{\gamma'}$ to those in $\mdp_\gamma$.

{\bf (YES Case)} Suppose that $(\bB, \by, t)$ is a YES instance of $\sncp_{\gamma'}$, i.e., there exists $\bx \in \F^q$ such that $\|\bB\bx - \by\|_0 + \|\bx\|_0 \leqs t$. From Lemma~\ref{lem:ldc}, with probability at least $1/d^{d/2}$, there exists $\bu \in \cB_{h - q}\left(\bs, \frac{d - 1}{2}\right)$ such that $\bx \circ \bu \in \bL(\F^m)$. From this and from systematicity of $\bL$, there exists $\bz' \in \F^{m - q}$ such that $\bL(\bx \circ \bz') = \bx \circ \bu$. Conditioned on this, we can pick $\bz = \bx \circ \bz' \circ 1 \in \F^{m + 1}$, which yields
\begin{align*}
\|\bA \bz\|_0 = \|\bB\bx - \by\|_0 + \|\bx\|_0 + \|\bu - \bs\|_0 \leqs t + \frac{d - 1}{2} = k.
\end{align*}

In other words, with probability at least $1/d^{d/2}$, $(\bA, k)$ is a YES instance of $\mdp_{\gamma}$ as desired.

{\bf (NO Case)} Suppose that $(\bB, \by, t)$ is a NO instance of $\sncp_{\gamma'}$. We will show that, for all non-zero $\bz \in \F^{m + 1}$, $\|\bA\bz\|_0 > 2.5 t$; with our choice of parameters and our assumption on $t$, it is simple to check that $2.5 t > 1.01 k$. Hence, this implies that $(\bA, k)$ is a NO instance of $\mdp_{\gamma}$.

To show that $\|\bA\bz\|_0 > \gamma' t$ for all $\bz \in \F^{m + 1} \setminus \{\bzero\}$, let us consider two cases, based on the last coordinate $\bz[m + 1]$ of $\bz$. For convenience, we write $\bz$ as $\bx \circ \bz' \circ \bz[m + 1]$, where $\bx \in \F^q$ and $\bz' \in \F^{m - q}$.

If $\bz[m + 1] = 0$, then $\|\bA\bz\|_0 = \|\bB\bx\|_0 + \|\bL(\bx \circ \bz')\|_0 \geqs  \|\bL(\bx \circ \bz')\|_0 \geqs d$, where the last inequality comes from the fact that $\bL$ is a generator matrix of a code of distance $d$ (and that $\bz\neq\mathbf 0$). Finally, recall that we select $d > 2.5 t$, which yields the desired result for this case.

On the other hand, if $\bz_{m + 1} = 1$, then $\|\bA\bz\|_0 \geqs \|\bB\bx - \by\|_0 + \|\bx\|_0 \geq 2.5 t$, where the second inequality comes from the assumption that $(\bB, \by, t)$ is a NO instance of $\sncp_{2.5}$.

In conclusion, $\|\bA\bz\|_0 > 2.5 t$ in all cases considered, which completes our proof.
\end{proof}

\smallskip

\noindent\textbf{Gap Amplification.} Finally, the above gap hardness result can be boosted to any constant gap using the now standard technique of tensoring the code (c.f. \cite{DMS03},\cite{AK14}) which is stated formally in the following proposition:

\begin{proposition}[E.g. \cite{DMS03}] \label{prop:gap-amplification}
Given two linear codes $C_1 \subseteq \F^m$ and ${C}_2 \subseteq \F^n$, let ${C}_1 \otimes {C}_2 \subseteq \mathbb{F}^{m \times n}$ be the tensor product of ${C}_1$ and ${C}_2$. Then $d({C}_1 \otimes {C}_2) = d({C}_1)d({C}_2)$.
\end{proposition}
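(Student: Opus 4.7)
The plan is to prove the two inequalities $d(C_1 \otimes C_2) \leq d(C_1) d(C_2)$ and $d(C_1 \otimes C_2) \geq d(C_1) d(C_2)$ separately, using the definition $C_1 \otimes C_2 = \{\mathbf{G}_1 \mathbf{X} \mathbf{G}_2^\top : \mathbf{X} \in \F^{m' \times n'}\}$ where $\mathbf{G}_1, \mathbf{G}_2$ are generator matrices of $C_1, C_2$ respectively. Throughout, I will treat elements of $C_1 \otimes C_2$ as $m \times n$ matrices and measure their weight as the total number of nonzero entries.

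For the upper bound, I would pick minimum-weight codewords $\mathbf{c}_1 \in C_1$ and $\mathbf{c}_2 \in C_2$ with $\|\mathbf{c}_1\|_0 = d(C_1)$ and $\|\mathbf{c}_2\|_0 = d(C_2)$. Writing $\mathbf{c}_1 = \mathbf{G}_1 \mathbf{a}$ and $\mathbf{c}_2 = \mathbf{G}_2 \mathbf{b}$, the rank-one matrix $\mathbf{c}_1 \mathbf{c}_2^\top = \mathbf{G}_1 (\mathbf{a} \mathbf{b}^\top) \mathbf{G}_2^\top$ lies in $C_1 \otimes C_2$, and its number of nonzero entries is exactly $\|\mathbf{c}_1\|_0 \cdot \|\mathbf{c}_2\|_0 = d(C_1) d(C_2)$. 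This gives one direction immediately.

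For the lower bound, the key structural observation is that every codeword $\mathbf{M} = \mathbf{G}_1 \mathbf{X} \mathbf{G}_2^\top \in C_1 \otimes C_2$ has the property that each column of $\mathbf{M}$ is a codeword of $C_1$ (since it equals $\mathbf{G}_1$ times a column of $\mathbf{X} \mathbf{G}_2^\top$) and each row of $\mathbf{M}$ is a codeword of $C_2$ (by a symmetric argument). Now take any nonzero $\mathbf{M} \in C_1 \otimes C_2$. Then there is some column $j^*$ of $\mathbf{M}$ that is a nonzero codeword of $C_1$, hence has at least $d(C_1)$ nonzero entries. Each such nonzero entry sits in a row which is therefore a nonzero codeword of $C_2$, and hence has weight at least $d(C_2)$. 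Summing the weights of these at least $d(C_1)$ nonzero rows yields $\|\mathbf{M}\|_0 \geq d(C_1) \cdot d(C_2)$, finishing the argument.

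I do not anticipate a real obstacle here; the only minor care needed is to ensure the ``columns are codewords of $C_1$, rows are codewords of $C_2$'' viewpoint is applied consistently, and that the row/column counting does not double-count. The rank-one witness and the row/column decomposition are both standard in coding theory, so the proof should be short and self-contained.
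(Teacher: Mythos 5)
Your proof is correct, and both directions are the standard textbook arguments for this fact. The paper itself does not supply a proof of Proposition~\ref{prop:gap-amplification} -- it simply cites it (e.g.\ to~\cite{DMS03}) -- so there is no in-paper argument to compare against; your rank-one witness for the upper bound and the ``columns are $C_1$-codewords, rows are $C_2$-codewords'' counting for the lower bound are exactly how this is proved in the coding-theory literature. The one small point worth stating explicitly (you implicitly use it) is that $\mathbf{c}_1\mathbf{c}_2^\top \neq \mathbf{0}$ because $\mathbf{c}_1,\mathbf{c}_2$ are nonzero, so the rank-one matrix really is a valid witness for the minimum distance.
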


We briefly show how the above proposition can be used to amplify the gap. Consider a $\mdp_\gamma$ instance $(\bA,k)$ where $\bA \in \F^{m \times n}$. Let $C \subseteq \F^m$ be the linear code generated by it. Let $C^{\otimes 2} = C \otimes C$ be the tensor product of the code with itself, and let $\bA^{\otimes 2}$ be its generator matrix. By the above proposition, if $(\bA,k)$ is a YES instance, then $d(C^{\otimes 2}) \leqs k^2$. Conversely, if $(\bA,k)$ is a NO instance, then $d(C^{\otimes 2}) \geqs \gamma^2k^2$. Therefore $(\bA^{\otimes 2},k^2)$ is a $\mdp_{\gamma^2}$ instance. Hence, for any $\alpha \in \mathbb{R}_+$, repeating this argument $\lceil\log_\gamma \alpha \rceil$-number of times gives us an FPT reduction from $k$-$\mdp_\gamma$ to \text{$k^{2\lceil\log_\gamma \alpha\rceil}$-$\mdp_\alpha$}. We have thereby completed our proof of Theorem~\ref{thm:MDPmain}.


\newcommand{\bU}{\mathbf{U}}
\newcommand{\bW}{\mathbf{W}}
\newcommand{\cL}{\mathcal{L}}

\section{Parameterized Intractability of Shortest Vector Problem}\label{sec:svp}

The main result of this section is the parameterized inapproximability of $\svp$, as stated below.

\begin{theorem}[FPT Inapproximability of $\svp$]\label{thm:SVPmain}
For any $p > 1$, there exists constant $\gamma_p > 1$ (where $\gamma_p$ depends on $p$), such that there $\svp_{p,\gamma_p}$ is \W[1]-hard (under randomized reductions).
\end{theorem}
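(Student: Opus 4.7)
The plan is to follow the pipeline summarized in Figure~\ref{fig:overviewSV}: first reduce $\gapvec_{\gamma,q}$ (\W[1]-hard by Theorem~\ref{thm:MLDmain}) to $\snvp_{p,\gamma}$ in the $\ell_p$ norm, and then invoke Khot's reduction (packaged as Lemma~\ref{lem:snvp-to-svp}) to amplify this hardness into $\svp_{p,\gamma_p}$ for $\gamma_p=\frac{2^p}{2^{p-1}+1}>1$. As the overview of the excerpt stresses, once constant-factor parameterized inapproximability of $\snvp_p$ is established, Khot's machinery transfers to the parameterized setting essentially unchanged, so the first step carries the new content.

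For the $\gapvec \to \snvp$ step, I would fix any prime $q$ and, given an instance $(\bA\in \mathbb{F}_q^{n\times m},\by\in\mathbb{F}_q^n,k)$ of $\gapvec_{\gamma,q}$, build an integer lattice that enforces the $\bmod\ q$ constraints in a top block and penalises the $\ell_p^p$-weight of the coefficient vector in a bottom block. With $M:=\lceil\gamma k\rceil+1$, take
\begin{equation*}
\bA'=\begin{bmatrix}\bA & q\Id_n & & \\ \vdots & & \ddots & \\ \bA & & & q\Id_n \\ \Id_m & & & \end{bmatrix},\qquad \by'=\begin{bmatrix}\by\\ \vdots\\ \by\\ \bzero_m\end{bmatrix},
\end{equation*}
and keep $k$ as the parameter. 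In the YES case, a $\{0,1\}$-solution $\bx$ of Hamming weight $\leqs k$ admits integer corrections $\bz_j=(\by-\bA\bx)/q$ (using $\bA\bx\equiv\by\pmod q$), giving $\ell_p^p$ distance $\|\bx\|_0\leqs k$ from $\by'$. In the NO case, for any $\bx\in\Z^m$ with residue $\tilde\bx\pmod q$, either $\|\tilde\bx\|_0>\gamma k$, so $\bx$ has strictly more than $\gamma k$ non-zero integer coordinates and the bottom block alone forces $\|\bx\|_p^p>\gamma k$; or $\|\tilde\bx\|_0\leqs\gamma k$, in which case the $\gapvec$-NO assumption gives $\bA\tilde\bx\neq\by$ over $\mathbb{F}_q$, hence $\bA\bx-\by\not\equiv\bzero\pmod q$, and each of the $M$ top blocks contributes at least $1$ to $\ell_p^p$ irrespective of the $\bz_j$, yielding total weight $\geqs M>\gamma k$. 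Either way the distance strictly exceeds $\gamma k$, so the constant gap is preserved.

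For the second step I would follow Khot's four-stage argument essentially verbatim: (i) construct the BCH lattice gadget; (ii) use it to translate the \CVP\ instance into an intermediate \SVP\ lattice that contains many short YES-witnesses but only a bounded set of ``annoying'' short vectors in the NO case; (iii) randomly pick a sublattice via a homogeneous linear constraint, which annihilates the annoying vectors while retaining at least one witness; and (iv) amplify the resulting constant gap to $\gamma_p$ via the Augmented Tensor Product. Stage (iii) is the only randomised step, and its success probability in Khot's analysis exceeds $1/(f(k)\poly(|x|))$ for a computable $f$, which comfortably satisfies Definition~\ref{def:rand-fpt-red}; stages (i)--(iv) each inflate the parameter $k$ by a function of $k$ alone, so the target parameter remains bounded by a computable function of the original $k$.

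I expect the principal obstacle to be verifying that Khot's construction, originally tuned to polynomial-time reductions, continues to behave as an FPT reduction when $k$ is treated as the parameter: in particular, the Augmented Tensor Product must not grow the parameter beyond a computable function of $k$, and the dimensions and norms introduced by the BCH gadget must depend only on $k$ and $p$ rather than on the ambient instance size. A helpful structural feature that eases this verification is that the final gap $\gamma_p=\frac{2^p}{2^{p-1}+1}$ is intrinsic to Khot's gadget and is independent of the starting $\snvp$ gap, so the modest constant-factor hardness produced by the first step is already sufficient to feed the rest of the reduction.
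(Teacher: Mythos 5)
Your overall plan (reduce $\gapvec$ over a large prime field to $\snvp_{p,\eta}$, then feed that into Khot's machinery) is the same as the paper's, and your treatment of the second stage matches the paper's Lemma~\ref{lem:snvp-to-svp}. But your $\gapvec\to\snvp$ reduction omits a piece that turns out to be load-bearing, and the omission breaks the downstream argument.

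The stated NO condition of $\snvp_{p,\eta}$ only requires $\|\bB\bx-\by\|_p^p>\eta t$ for all $\bx$, but Khot's intermediate lattice puts $\by$ as a column with its own integer coefficient $x$, and the proof that every bad vector in the NO case has all-even coordinates (Claim~\ref{claim:annoying-vectors}) silently invokes the stronger ``affine'' property $\|\bB\bx-w\by\|_p^p>\eta t$ for all $\bx\in\Z^m$ and all $w\in\Z\setminus\{0\}$. The paper's reduction in Theorem~\ref{thm:CVPmain} proves precisely this stronger statement, and the mechanism is the extra block of $k$ zero rows paired against the target $\vone_k$: it forces a term $|w|^p k$ into the residual, which dominates once $|w|\geqs q$, while the inversion-modulo-$q$ argument you sketch handles $0<|w|<q$. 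Your construction has no such block, and the affine property genuinely fails for it: take $w=q$, $\bx=\bzero$, and $\bz_j=\by$ for every $j$, so that $\bA'\bx'-q\by'=\bzero$. Plugged into Khot's intermediate lattice, this produces a bad vector in the NO case with a nonzero coefficient on the $\by$-column, contradicting Claim~\ref{claim:annoying-vectors} and invalidating the counting in Lemma~\ref{lem:annoying-vectors}. The fix is to add the extra block (which also shifts the parameter to $k'=2k$), exactly as the paper does. A secondary issue: your stage (iv), amplification via the Augmented Tensor Product, is both unavailable and unnecessary here. The paper notes in Section~\ref{sec:open} that Khot's gap amplification for $p\neq 2$ requires the distance to scale with the ambient instance size and hence does not survive as an FPT reduction; fortunately the theorem only claims \emph{some} constant $\gamma_p>1$, and stages (i)--(iii) of Khot's argument (packaged as Lemma~\ref{lem:snvp-to-svp}) already deliver that.
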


Similar to the Minimum Distance Problem, the proof of Theorem~\ref{thm:SVPmain} goes through two steps. First, we show that the non-homogeneous variant, the Nearest Vector Problem. Then, in the second step, we reduce it to the Shortest Vector Problem.

\subsection{FPT Inapproximability of Nearest Vector Problem}			\label{sec:csp-to-snvp}

In this section, we prove the inapproximability of Nearest Vector Problem, as stated more formally below. The proof is via a simple reduction from Maximum Likelihood Decoding over a large field.

\begin{theorem}[FPT Inapproximability of $\snvp$]\label{thm:CVPmain}
For any $\eta, p \geqs 1$, $\snvp_{\eta, p}$ is \W[1]-hard.
\end{theorem}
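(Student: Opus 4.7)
I will reduce from $\gapvec_{\gamma, q}$, taking $q = 2$ (or any small fixed prime) and the constant $\gamma := \eta + 1$; by Theorem~\ref{thm:MLDmain} this source problem is $\W[1]$-hard. Given an instance $(\bA \in \mathbb{F}_q^{n \times m}, \by \in \mathbb{F}_q^n, k)$ of $\gapvec_{\gamma, q}$, I lift each entry of $\bA$ and $\by$ to its integer representative in $\{0, 1, \ldots, q-1\}$, set $T := \lceil \eta k \rceil + 1$, and output the $\snvp_{\eta, p}$ instance $(\bB, \by', k)$ given by
\[
\bB \;=\; \begin{pmatrix} \bA & q\cdot\Id_n & & \\ \vdots & & \ddots & \\ \bA & & & q\cdot\Id_n \\ \Id_m & & & \end{pmatrix},
\qquad
\by' \;=\; \begin{pmatrix} \by \\ \vdots \\ \by \\ \bzero_m \end{pmatrix},
\]
containing $T$ vertically stacked copies of $\bA$ on the left and $T$ block-diagonal copies of $q\cdot \Id_n$ on the right, so $\bB \in \Z^{(Tn+m)\times(m+Tn)}$. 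The parameter is preserved at $k$.

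Parametrising a lattice vector as $(\bx, \bz^{(1)}, \ldots, \bz^{(T)}) \in \Z^m \times (\Z^n)^T$, its $\ell_p^p$-distance to $\by'$ equals
\[
\Phi(\bx, \bz) \;:=\; \sum_{i=1}^{T} \|\bA\bx + q\bz^{(i)} - \by\|_p^p \;+\; \|\bx\|_p^p.
\]
For completeness, given a $\gapvec$ YES witness $\bx^{*} \in \{0,1\}^m$ with $\|\bx^{*}\|_0 \leq k$ and $\bA\bx^{*} = \by$ over $\mathbb{F}_q$, the integer vector $\bA\bx^{*} - \by$ is coordinatewise divisible by $q$; choosing $\bz^{(i)}$ to cancel it in every copy gives $\Phi = \|\bx^{*}\|_p^p = \|\bx^{*}\|_0 \leq k$. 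For soundness, fix any $(\bx, \bz)$ and let $\bar\bx := \bx \bmod q \in \mathbb{F}_q^m$. If $\bA \bar\bx = \by$ in $\mathbb{F}_q$, the NO hypothesis on $\gapvec_{\gamma, q}$ forces $\|\bar\bx\|_0 > \gamma k = (\eta+1) k$, whence $\|\bx\|_p^p \geq \|\bx\|_0 \geq \|\bar\bx\|_0 > \eta k$ and so $\Phi > \eta k$. Otherwise some coordinate $j$ satisfies $(\bA \bar\bx - \by)[j] \neq 0$ in $\mathbb{F}_q$, which forces $(\bA\bx + q\bz^{(i)} - \by)[j]$ to be a nonzero integer for every $i$; each copy thus contributes at least $1$, so $\Phi \geq T > \eta k$.

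The construction runs in polynomial time (with $T = O(\eta k)$ and $\gamma$ constant) and leaves the parameter unchanged, so this is a valid FPT reduction; combined with the $\W[1]$-hardness of $\gapvec_{\gamma, q}$, it establishes $\W[1]$-hardness of $\snvp_{\eta, p}$ for every $\eta \geq 1$ and every $p \geq 1$. The only design choice that requires any thought is the balancing of the two penalties in $\Phi$: the replication factor $T$ must be large enough that the unit mod-$q$ penalty accumulated over the $T$ copies in the second soundness case exceeds $\eta k$, while $\gamma$ must be large enough that the coefficient-weight penalty $\|\bx\|_p^p$ in the first soundness case also exceeds $\eta k$; the choices $T = \lceil \eta k \rceil + 1$ and $\gamma = \eta + 1$ achieve both simultaneously. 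Beyond this bookkeeping I do not anticipate any genuine obstacle.
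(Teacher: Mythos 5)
Your reduction is close in spirit to the paper's proof of Theorem~\ref{thm:CVPmain} (stacking $O(\eta k)$ copies of $\bA$ alongside $q\cdot\Id_n$ blocks and appending an identity block to charge for the support weight), and your case analysis does correctly show that in the NO case every $\bx'\in\Z^{m'}$ has $\|\bB\bx' - \by'\|_p^p > \eta k$. As a proof of the theorem exactly as stated, it works.

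However, there is a genuine gap relative to what the paper's proof actually establishes, and that extra strength is load-bearing in the next step. Khot's reduction (Lemma~\ref{lem:snvp-to-svp}) places $\by'$ itself as a basis vector of the intermediate lattice, so a candidate short lattice vector may carry any nonzero integer coefficient $w$ on $\by'$. The argument in Claim~\ref{claim:annoying-vectors} explicitly infers ``$x=0$'' from ``$\|\bB\bx_1 - x\by\|_p^p\leqs \eta t$ and $(\bB,\by,t)$ is a NO instance''; this requires the NVP instance to satisfy $\|\bA'\bx' - w\by'\|_p^p > \eta k'$ for \emph{every} nonzero integer $w$, not just $w=1$. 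Your construction fails this stronger property: with $q=2$ and $w=2$, take $\bx=\bzero$ and let $\bz^{(i)}$ be the integer lift of $\by$ in each block, so that $\bA\bzero + 2\bz^{(i)} - 2\by = \bzero$ in every copy and $\|\bB\bx' - 2\by'\|_p^p = 0$. Since the tail of your $\by'$ is $\bzero_m$, nothing penalizes these scaled solutions.

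The paper guards against this with two ingredients you omit. It appends $k$ rows that are all zero in $\bA'$ but equal to $\vone_k$ in $\by'$, so that $\|\bA'\bx' - w\by'\|_p^p$ always contains a $|w|^p k$ term, and it takes $q$ to be a prime strictly greater than $2\eta$, so $|w|\geqs q$ immediately gives $|w|^p k \geqs qk > 2\eta k$; when $0<|w|<q$ one multiplies by the inverse of $w$ modulo $q$ to fall back on the source $\gapvec_{2\eta,q}$ NO condition. (The move to gap $2\eta$ and parameter $k'=2k$ is precisely to absorb the extra $+k$ that the $\vone_k$ block charges the YES witness.) Without the $\vone_k$ padding and the $q>2\eta$ choice, your NVP instance cannot be fed into Lemma~\ref{lem:snvp-to-svp}, so the chain from NVP to SVP breaks there.
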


\begin{proof}
Let $q$ be the smallest prime number such that $q > 2\eta$. We will reduce from $\gapvec_{2\eta, q}$, which is $\W[1]$-hard from Theorem~\ref{thm:MLDmain}. Let $(\bA \in \mathbb{F}_q^{n \times m}, \by \in \mathbb{F}_q^n, k)$ be an instance of $\gapvec_{2\eta, q}$. We create an instance of $(\bA', \by', k')$ of $\snvp_{\eta, p}$ as follows. First, we set $k' = 2k$ and let
\begin{align*}
\bA' = 
\begin{bmatrix}
\vone_a \otimes \bA & \vone_a \otimes (q \cdot \Id_n) \\
\Id_n & \bzero_{n \times n} \\
\bzero_{k \times k} & \bzero_{k \times k}
\end{bmatrix}
\in 
\mathbb{Z}^{n' \times m'}
\text{, and, }
\by' =
\begin{bmatrix}
\vone_a \otimes \by \\
\bzero_n \\
\vone_k
\end{bmatrix}
\in
\mathbb{Z}^{n'},
\end{align*}
where $a = \lceil 2\eta k + 2 \rceil$, $n' = a n + k$ and $k' = m + n$.
Clearly, the reduction runs in polynomial time. We next argue its correctness.

\textbf{(YES Case)} Suppose that $(\bA, \by, k)$ is a YES instance of $\gapvec_{2\eta, q}$, i.e., that there exists $\bx \in \{0, 1\}^n$ with $\|\bx\|_0 \leqs k$ such that $\bA \bx = \by$ when operations are over $\mathbb{F}_q$. This means that, when view operations over $\Z$, we have $\bA\bx = \by + q \cdot \bz$ for some $\bz \in \mathbb{Z}^n$. Let $\bx' = \bx \circ (-\bz) \in \mathbb{Z}^{m'}$. Then, we have (over $\Z$)
\begin{align*}
\|\bA' \bx'\|_p^p = \|\bzero_a \circ \bx \circ \vone_k\|_p^p \leqs 2k = k'.
\end{align*}
In other words, $(\bA', \by', k')$ is a YES instance of $\snvp_{\eta, p}$ as desired.

\textbf{(NO Case)} Suppose that $(\bA, \by, k)$ is a NO instance of $\gapvec_{2\eta, q}$. Consider any $\bx' \in \mathbb{Z}^{m'}$ and any $w \in \mathbb{Z} \setminus \{0\}$. We would like to show that $\|\bA'\bx' - w \cdot \by'\|_p^p > \eta \cdot k' = 2\eta k$. To do so, let us write $\bx'$ as $\bx \circ \bz$ where $\bx \in \Z^m$ and $\bz \in \Z^n$. We can now rearrange $\|\bA'\bx' - w \cdot \by'\|_p^p$ as
\begin{align*}
\|\bA'\bx' - w \cdot \by'\|_p^p &= a \|\bA\bx + q \bz - \by\|_p^p + \|\bx\|_p^p + |w|^p k.
\end{align*}
As a result, if $\bA\bx + q\bz \ne \by$, then $\|\bA'\bx' - w \cdot \by'\|_p^p \geqs a > 2\eta k$. Furthermore, if $|w| \geqs q$, then we also have $\|\bA'\bx' - w \cdot \by'\|_p^p \geqs |w|^p k \geqs q k > 2\eta k$. Hence, we may henceforth assume that $|w| < q$ and $\bA'\bx' + q\bz = w \cdot \by'$. Since $|w| < q$, it has an inverse modulo $q$, i.e., there exists $u \in [q - 1]$ such that $uw \equiv 1 \Mod{q}$. Now, let us consider $\tilde{\bx} \in \mathbb{F}_q^m$ where $\tilde{\bx}[i]$ is defined as the remainder of $u \cdot \bx[i]$ modulo $q$. From $\bA\bx + q\bz \ne \by$, we have (over $\mathbb{F}_q$)
\begin{align*}
\bA\tilde{\bx} = (uw) \cdot \by = \by.
\end{align*}
Since $(\bA, \by, k)$ is a NO instance of $\gapvec_{2\eta, q}$, we must have $\|\tilde{\bx}\|_0 > 2\eta k$. Observe that $\|\bx\|_0 \geqs \|\tilde{\bx}\|_0$. Thus, we have $\|\bA'\bx' - w \cdot \by'\|_p^p \geqs \|\bx\|_p^p \geqs \|\bx\|_0 > 2\eta k$. In other words, we can conclude that $(\bA', \by', k')$ is a NO instance of $\snvp_{\eta, p}$.
\end{proof}


\subsection{Following Khot's Reduction from NVP to SVP}

We will now reduce from $\snvp$ to $\svp$. This step is  almost the same as that of Khot~\cite{Khot05}, with small changes in parameter selection. Despite this, we repeat the whole argument here (with appropriate adjustments) for completeness.

The main properties of the (randomized) FPT reduction from $\snvp_{p,\eta}$ to $\svp_{p,\gamma}$ are summarized below. For succinctness, we define a couple of additional notation: let $\cL(\bA)$ denote the lattice generated by the matrix $\bA \in \Z^{n \times m}$, i.e., $\cL(\bA) = \{\bA\bx \mid \bx \in \Z^m\}$, and let $\lambda_p(\cL)$ denote the length (in the $\ell_p$ norm) of the shortest vector of the lattice $\cL$, i.e., $\lambda_p(\cL) = \underset{\bzero \ne \by \in \cL}{\min} \|\by\|_p$.

\begin{lemma}				\label{lem:snvp-to-svp}
	Fix $p > 1$, and let $\eta \geqs 1$ be such that $\frac12 + \frac{1}{2^p} + \frac{(2^p + 1)}{\eta} < 1$. Let $(\bB,\by,t)$ be a $\snvp_{p,\eta}$ instance, as given by Theorem~\ref{thm:CVPmain}. Then, there is a randomized FPT reduction from $\snvp_{p,\eta}$ instance $(\bB,\by,t)$ to $\svp_{p,\gamma}$ instance $(\bB_{\rm svp},\gamma^{-1}_pl)$ with $l = \eta \cdot t$ such that
	\begin{itemize}
		\item (YES) If $(\bB,\by,t)$ is a YES instance, then with probability at least $0.8$, $\lambda_p(\cL(\bB_{\rm svp}))^p \leqs \gamma^{-1}_p l$.
		\item (NO) If $(\bB,\by,t)$ is a NO instance, then with probability at least $0.9$, $\lambda_p(\cL(\bB_{\rm svp}))^p > l$.
	\end{itemize}
	Here $\gamma_p := \frac{1}{\frac12 + (2^p+1)/\eta + 1/2^p}$ is strictly greater than $1$ by our choice of $\eta$. 
\end{lemma}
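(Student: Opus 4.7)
The plan is to adapt Khot's reduction~\cite{Khot05} from \CVP\ to \SVP\ to the parameterized setting, with careful parameter tuning to achieve the specific gap $\gamma_p$ stated in the lemma. The key gadget is a BCH-based ``locally dense'' lattice $\Lambda_{BCH}\subseteq\Z^N$ together with a randomly-chosen ``center'' $\mathbf{c}$, which plays the role for integer lattices that \LSDC\ played for linear codes in the reduction to \mdp. The lattice will then be stacked with $\bB$ and $\by$ using an auxiliary integer coordinate $w$, the $\ell_p^p$-length of each lattice vector will be analyzed by case-splitting on $w$, and finally a random sparsification step will be applied to eliminate ``annoying'' vectors that carry the wrong integer multiplier on $\by$.

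Concretely, first I would construct $\Lambda_{BCH}\subseteq\Z^N$ (using Theorem~\ref{thm:bch}) together with $\mathbf{c}\in\Z^N$ satisfying three properties: (a) every nonzero $\bv\in\Lambda_{BCH}$ has $\|\bv\|_p^p\geqs d$, where $d$ is roughly $\eta t/2$; (b) with probability $\Omega_t(1)$ over the choice of $\mathbf{c}$, there exist many lattice points within $\ell_p^p$-distance $r^p\leqs d/2^p$ of $\mathbf{c}$; and (c) for every integer $w$ with $|w|\geqs 2$, every lattice point is at $\ell_p^p$-distance $>l$ from $w\mathbf{c}$. Properties~(a) and~(c) would follow from the minimum distance of the BCH code together with an integer-scaling argument, while~(b) would follow by a sphere-packing counting almost identical to the proof of Lemma~\ref{lem:ldc}, exploiting the closeness of BCH to the Hamming bound.

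Next, given $(\bB,\by,t)$, I would form the intermediate lattice generated by $\bB_{\rm int}=\left[\begin{smallmatrix}\bB & \bzero & -\by\\ \bzero & \mathbf{M}_{BCH} & -\mathbf{c}\end{smallmatrix}\right]$, whose vectors take the form $(\bB\bx-w\by)\circ(\bv-w\mathbf{c})$ with $\bv\in\Lambda_{BCH}$. The total $\ell_p^p$-length splits as $\|\bB\bx-w\by\|_p^p+\|\bv-w\mathbf{c}\|_p^p$, and I would analyze by cases on $w$: when $w=0$ the length is $\geqs d>l$ by~(a) (after standard padding to ensure the columns of $\bB$ are also long); for $|w|\geqs 2$ it is $\geqs l$ by~(c); for $|w|=1$, the YES case gives a vector of length $\leqs t+r^p$ by combining the NVP solution with~(b), while the NO case of NVP gives length $>\eta t=l$ directly, except possibly for ``annoying'' vectors arising from integer ambiguities in the lift of $\bx$. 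To suppress these annoying vectors I would intersect the lattice with the kernel of a random $\Z/q\Z$-linear constraint (for a small prime $q$); a counting/union-bound argument shows that with probability $\geqs 0.9$ no annoying vector of length $\leqs l$ survives, while a birthday-style argument shows at least one YES witness survives with probability $\geqs 0.8$.

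The main obstacle I anticipate is ensuring that properties~(a)--(c) can be achieved simultaneously by a single BCH-based construction with parameters depending only on $t$ (not on the dimension $N$), which is the parameterized analogue of the difficulty already handled in Lemma~\ref{lem:ldc}. The final arithmetic of the gap is then a routine calculation: plugging $d\approx\eta t/2$, $r^p\approx\eta t/2^p$, and the ``$(2^p+1)/\eta$'' slack (coming from the sparsification and from the extra last-coordinate contribution needed to homogenize the lattice) into $t+r^p+\text{corrections}\leqs\gamma_p^{-1}\cdot\eta t$ yields exactly the claimed bound, and the hypothesis $\tfrac12+\tfrac{1}{2^p}+\tfrac{2^p+1}{\eta}<1$ is precisely what is required to make $\gamma_p>1$.
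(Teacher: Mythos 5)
Your high-level plan — BCH gadget, intermediate lattice, random mod-$\rho$ sparsification — matches both Khot's template and the paper's, but the specific contract (a)--(c) you impose on the gadget cannot be met, and it also does not reflect how the paper actually disposes of the ``large $w$'' case.

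The central flaw is that your properties (b) and (c) are mutually inconsistent for a lattice gadget. By (b) there is a lattice vector $\bv\in\Lambda_{BCH}$ with $\|\bv-\mathbf{c}\|_p^p\le r^p\le d/2^p$. Since $\Lambda_{BCH}$ is a lattice, $2\bv\in\Lambda_{BCH}$, and $\|2\bv-2\mathbf{c}\|_p^p=2^p\|\bv-\mathbf{c}\|_p^p\le d\approx l/2<l$, directly violating (c) at $w=2$. Scaling of lattices forces any ``locally dense center'' to remain dense around all its integer multiples, so no choice of $\mathbf{c}$ can simultaneously satisfy (b) and (c). The paper sidesteps $|w|\ge 2$ by a completely different mechanism: the lemma is stated only for $\snvp_{p,\eta}$ instances ``as given by Theorem~\ref{thm:CVPmain},'' and the proof of Theorem~\ref{thm:CVPmain} establishes the strictly stronger NO guarantee $\|\bB\bx-w\by\|_p^p>\eta t$ for \emph{all} nonzero integers $w$, not just $w=1$; Claim~\ref{claim:annoying-vectors} then uses this directly to conclude that bad vectors must have $w=0$. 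Your proposal neither exploits this extra guarantee nor mentions it, yet some such device is unavoidable since the gadget cannot do the job.

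A second genuine gap is the $w=0$ analysis: you assert the BCH part has $\ell_p^p$ length $\ge d>l$, but you chose $d\approx\eta t/2=l/2<l$ — an internal contradiction. The paper does not need $d>l$ here; instead it proves (Claim~\ref{claim:annoying-vectors}) that any bad vector with $w=0$ has \emph{all even coordinates}, which is what powers the counting bound in Lemma~\ref{lem:annoying-vectors}. To make that argument go through, the paper multiplies $\bB$ and $\by$ by $2$ in $\bB_{\rm int}$ — a step your construction omits, and whose absence also changes the YES-case arithmetic: the paper's good-vector bound is $2^p t + r$, whereas you compute $t+r^p$. With the factor of $2$ removed, neither the even-coordinate claim nor the intended value of $\gamma_p$ falls out. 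In short, the proposal correctly identifies the right ingredients but imposes an unattainable spec on the gadget, mis-handles the $w=0$ and $|w|\geqs 2$ regimes, and does not reproduce the specific parameter choices ($r=(\tfrac12+\tfrac{1}{2^p}+\tfrac1\eta)l$, the $2\bB$/$2\by$ scaling) on which the claimed constant $\gamma_p$ depends.
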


Combining the above lemma with Theorem~\ref{thm:CVPmain} gives us Theorem \ref{thm:SVPmain}. 

We devote the rest of this subsection to describing the reduction (which is similar to that from~\cite{Khot05}) and proving Lemma \ref{lem:snvp-to-svp}. In Section \ref{sec:bch-lattic}, we define the BCH lattice, which is the key gadget used in the reduction. Using the BCH lattice and the $\snvp_{p, \eta}$ instance, we construct the intermediate lattice $\bB_{\rm int}$ in Section \ref{sec:int-lattice}. The intermediate lattice serves to blow up the number of ``good vectors'' for the YES case, while controlling the number of ``bad vectors'' for the NO case. In particular, this step ensures that the number of good vectors in the YES case (Lemma \ref{lem:good-vectors}) far outnumber the number of bad vectors in the NO case (Lemma \ref{lem:annoying-vectors}). Finally, in Section \ref{sec:final-lattice} we compose the intermediate lattice with a random homogeneous constraint (sampled from an appropriate distribution), to give the final $\svp_{p,\gamma}$ instance. The additional random constraint is used to annihilate all bad vectors in the NO case, while retaining at least one good vector in the YES case.

For the rest of the section, we fix $(\bB,\by,t)$ to be a $\snvp_{p,\eta}$ instance (as given by Theorem \ref{thm:CVPmain}), and set $l := \eta \cdot t$ and $r := \left(\frac{1}{2} + \frac{1}{2^p} + \frac{1}{\eta}\right) l$. For simplicity of calculations, we will assume that both $l$ and $r$ are integers, and that $l$ is even. Furthermore, we say that a vector $\bu$ is \emph{good} (for the YES case) if $\|\bu\|_p^p \leqs \gamma_p^{-1} l$, and we say that $\bu$ is \emph{bad} (for the NO case) if $\|\bu\|_p^p \leqs l$.

\subsubsection{The BCH Lattice gadget}				\label{sec:bch-lattic}

We begin by defining the BCH lattices which is the key gadget used in the reduction. Given parameters $l,h \in \mathbb{N}$ where $h + 1$ is a power of $2$ and $l < h$. Let $g = (l/2) \cdot \log(h + 1)$. Theorem~\ref{thm:bch} guarantees that there exists a BCH code with block length $h$, message length $h - g$ and distance $l + 1$. Let $\mathbf{P}_{\rm BCH} \in \{0,1\}^{g \times h}$ be the parity check matrix of such code. The BCH lattice is defined by
\begin{align*}
	\bB_{\rm BCH} =
	\begin{bmatrix}
		{\rm Id}_{h} &  \mathbf{0}_{h \times g} \\
		l\cdot\mathbf{P}_{\rm BCH} & 2l\cdot{\rm Id}_{g}
	\end{bmatrix}
	\in \mathbb{Z}^{(h + g) \times (h + g)}.
\end{align*}
\noindent  The following lemma, which is simply a restatement\footnote{In fact, Lemma~\ref{lem:bch-lattice} is even weaker than Khot's lemma, since we do not impose a bound on $\|\mathbf{z}\|_p$.} of Lemma 4.3 in \cite{Khot05}, summarizes the key properties of BCH lattices, as defined above.

\begin{lemma}[\cite{Khot05}]				\label{lem:bch-lattice}
	Let $\mathbf{B}_{\rm BCH} \in \mathbb{Z}^{(h + g) \times ( h + g)}$ be as above. There exists a randomized polynomial time algorithm that, with probability at least $0.99$, returns a vector $\mathbf{s} \in \mathbb{Z}^{h+g}$ such that the following holds: there are at least $\frac{1}{100}2^{-g}{h \choose r}$ distinct vectors $\mathbf{z} \in \mathbb{Z}^{h + g}$ such that $\|\mathbf{B}_{\rm BCH} \mathbf{z} - \mathbf{s}\big\|_p^p = r$.
\end{lemma}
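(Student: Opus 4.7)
The plan is to produce $\bs$ from a random Hamming-weight-$r$ \emph{seed} and exploit the structure of $\bB_{\rm BCH}$ to convert each BCH preimage of the seed's syndrome into a distinct lattice point at $\ell_p^p$-distance exactly $r$ from $\bs$. Concretely, I would sample $\bz_1^\star\in\{0,1\}^h$ uniformly at random among vectors of Hamming weight exactly $r$, and output
\[
\bs\;:=\;(\bzero_h,\; l\cdot \bP_{\rm BCH}\bz_1^\star)\;\in\;\mathbb{Z}^{h+g},
\]
where $\bP_{\rm BCH}\bz_1^\star$ is computed over $\mathbb{Z}$ from the $\{0,1\}$-lifts; this is clearly polynomial time.

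Next I would exhibit many lattice points. Splitting a candidate as $\bz=(\bz_1,\bz_2)\in\mathbb{Z}^h\times\mathbb{Z}^g$, we have $\bB_{\rm BCH}\bz-\bs=(\bz_1,\; l\bP_{\rm BCH}\bz_1+2l\bz_2 - l\bP_{\rm BCH}\bz_1^\star)$. For every $\bz_1\in\{0,1\}^h$ with $\|\bz_1\|_0=r$ and $\bP_{\rm BCH}\bz_1=\bP_{\rm BCH}\bz_1^\star$ in $\F^g$, the integer vector $\bP_{\rm BCH}\bz_1^\star-\bP_{\rm BCH}\bz_1$ has all even entries, so setting $\bz_2:=(\bP_{\rm BCH}\bz_1^\star-\bP_{\rm BCH}\bz_1)/2\in\mathbb{Z}^g$ yields $\bB_{\rm BCH}\bz-\bs=(\bz_1,\bzero_g)$, hence $\|\bB_{\rm BCH}\bz-\bs\|_p^p=\|\bz_1\|_p^p=r$. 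Distinct $\bz_1$'s yield distinct $\bz$'s, so it suffices to lower-bound the number of such $\bz_1$'s.

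For the counting step, for $\bv\in\F^g$ let $N(\bv):=|\{\bz_1\in\{0,1\}^h:\|\bz_1\|_0=r,\ \bP_{\rm BCH}\bz_1=\bv\}|$ and write $\mu:=2^{-g}\binom{h}{r}$. Then $\sum_{\bv}N(\bv)=\binom{h}{r}$, and because $\bz_1^\star$ is uniform on weight-$r$ vectors, $\Pr[\bP_{\rm BCH}\bz_1^\star=\bv]=N(\bv)/\binom{h}{r}$. Therefore
\[
\Pr\!\left[N(\bP_{\rm BCH}\bz_1^\star)<\tfrac{\mu}{100}\right]\;=\!\!\!\sum_{\bv:\,N(\bv)<\mu/100}\!\!\!\frac{N(\bv)}{\binom{h}{r}}\;<\;\frac{2^g\cdot(\mu/100)}{\binom{h}{r}}\;=\;\tfrac{1}{100},
\]
so with probability at least $0.99$ one obtains at least $\tfrac{1}{100}2^{-g}\binom{h}{r}$ suitable $\bz_1$'s, and hence as many distinct lattice vectors $\bz$ at $\ell_p^p$-distance exactly $r$ from $\bs$.

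The main obstacle I foresee is turning a first-moment estimate into the required high-probability guarantee: a uniform choice of $\bs_2/l\in\F^g$ would only hit the correct \emph{average} count, and $N(\bv)$ could be severely uneven across $\bv$, so any $\bv$-by-$\bv$ tail argument seems out of reach without finer structural information about $\bP_{\rm BCH}$. The fix above---sampling the seed $\bz_1^\star$ first and taking its syndrome---size-biases the induced distribution on $\bv$ toward $\bv$'s with large $N(\bv)$, which is precisely what makes the elementary one-line bound above go through.
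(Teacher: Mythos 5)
Your proof is correct and uses essentially the same approach as Khot's original argument for Lemma~4.3 in~\cite{Khot05}, which this paper cites without reproducing. The decisive idea in both is precisely the size-biasing step you isolate: rather than choosing the $g$-block syndrome uniformly from $\F^g$ (which would only pin down the \emph{average} preimage count), one samples a uniformly random weight-$r$ seed $\bz_1^\star\in\{0,1\}^h$ and takes $\bs=(\bzero_h,\,l\cdot\bP_{\rm BCH}\bz_1^\star)$; the induced distribution on syndromes automatically favours those with many weight-$r$ preimages, and the one-line Markov-type bound $\Pr[N(\bP_{\rm BCH}\bz_1^\star)<\mu/100]<2^g\cdot(\mu/100)/\binom{h}{r}=1/100$ then delivers the required $0.99$ success probability. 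The remaining mechanics are also right: for any weight-$r$ $\bz_1$ with $\bP_{\rm BCH}\bz_1\equiv\bP_{\rm BCH}\bz_1^\star\pmod 2$, the difference $\bP_{\rm BCH}\bz_1^\star-\bP_{\rm BCH}\bz_1$ is even over $\mathbb{Z}$, so $\bz_2:=\tfrac12(\bP_{\rm BCH}\bz_1^\star-\bP_{\rm BCH}\bz_1)$ is integral, $\bB_{\rm BCH}(\bz_1\circ\bz_2)-\bs=(\bz_1,\bzero_g)$ has $\ell_p^p$-norm exactly $r$, and distinct $\bz_1$'s give distinct $\bz$'s. This matches the (weakened) statement here, which dropped Khot's additional bound on $\|\bz\|_p$; your coefficient vectors $\bz_2$ may indeed be large, and it is correct that the lemma as stated does not care.
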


\subsubsection{The Intermediate Lattice}			\label{sec:int-lattice}

We now define the intermediate lattice. Let $(\bB,\by,t)$ be an instance of $\snvp_{p,\eta}$, where $\bB \in \Z^{n \times q}$. The intermediate lattice $\bB_{\rm int}$ is constructed as follows. Let $l = \eta t$. Let $h$ be the smallest power of 2 such that $h \geqs \max\{2n, (10^{10}l)^{2\eta}\}$, and let $\bB_{\rm BCH}$ be constructed as above. Then 
\begin{align*}
	\bB_{\rm int} =
	\begin{bmatrix}
		2\bB &  \mathbf{0}_{ n \times  (h+g)} &  2\by  \\
		\mathbf{0}_{(h + g) \times q} &  \bB_{\rm BCH} &  \bs
	\end{bmatrix}
	\in \Z^{(n + h + g) \times (q + h + g + 1)}.
\end{align*}
where $\mathbf{s} \in \Z^{h + g}$ is the vector given by Lemma \ref{lem:bch-lattice}. 

\paragraph{Bounding Good Vectors in YES Case.}
We now prove a lower bound on the number of good vectors in the YES case.

\begin{lemma}				\label{lem:good-vectors}
   Let $(\mathbf{B},\mathbf{y},t)$ be a YES instance, and let $\mathbf{B}_{\rm int}$ be the corresponding intermediate lattice. With probability at least $0.99$, there are at least ${h^{r}}\Big({200 h^{l/2}l^{l}}\Big)^{-1}$ good non-zero vectors in $\cL(\mathbf{B}_{\rm int})$.
\end{lemma}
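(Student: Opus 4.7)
The plan is to combine the YES witness of the $\snvp$ instance with the family of short ``centered'' BCH-lattice vectors produced by Lemma~\ref{lem:bch-lattice}, and then to count. Let $\bx \in \Z^q$ be a witness for the YES case, i.e.\ $\|\bB\bx - \by\|_p^p \leqs t$, and condition on the $0.99$-probability event that Lemma~\ref{lem:bch-lattice} supplies a center $\bs$ with a collection $\cZ \subseteq \Z^{h+g}$ of at least $\frac{1}{100}\,2^{-g}\binom{h}{r}$ distinct vectors satisfying $\|\bB_{\rm BCH}\bz - \bs\|_p^p = r$. For every $\bz \in \cZ$, I form the lattice vector
\[
\bu_\bz \;:=\; \bB_{\rm int}\bigl(\bx \circ \bz \circ (-1)\bigr) \;=\; \bigl(2\bB\bx - 2\by\bigr) \,\circ\, \bigl(\bB_{\rm BCH}\bz - \bs\bigr)\;\in\; \cL(\bB_{\rm int}).
\]
Distinct $\bz \in \cZ$ produce distinct $\bu_\bz$ (they already differ on the bottom $h+g$ coordinates), and each $\bu_\bz$ is nonzero because its bottom part has $\ell_p^p$-norm $r > 0$.

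Next I verify that each $\bu_\bz$ is good. By construction,
\[
\|\bu_\bz\|_p^p \;=\; 2^p\,\|\bB\bx - \by\|_p^p \,+\, \|\bB_{\rm BCH}\bz - \bs\|_p^p \;\leqs\; 2^p t + r \;=\; \tfrac{2^p}{\eta}\,l + \Bigl(\tfrac12 + \tfrac{1}{2^p} + \tfrac{1}{\eta}\Bigr) l.
\]
Regrouping gives $\|\bu_\bz\|_p^p \leqs \bigl(\tfrac12 + \tfrac{2^p+1}{\eta} + \tfrac{1}{2^p}\bigr) l = \gamma_p^{-1} l$, so every $\bu_\bz$ is indeed good.

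Finally I lower-bound $|\cZ|$. Since $g = (l/2)\log(h+1)$, I have $2^{-g} = (h+1)^{-l/2}$. The choice $h \geqs (10^{10} l)^{2\eta}$ makes $l/h$ tiny, so $(1 + 1/h)^{l/2} \leqs e^{l/(2h)} \leqs 2$, giving $(h+1)^{l/2} \leqs 2\,h^{l/2}$. For the binomial, since $r \leqs l \ll h$, the standard estimate $\binom{h}{r} \geqs (h/r)^r$ together with the fact that $x \mapsto x^x$ is increasing on $[1,\infty)$ (so $r^r \leqs l^l$) yields $\binom{h}{r} \geqs h^r/l^l$. Combining,
\[
|\cZ| \;\geqs\; \frac{1}{100}\cdot\frac{1}{(h+1)^{l/2}}\cdot\binom{h}{r} \;\geqs\; \frac{1}{100}\cdot\frac{1}{2\,h^{l/2}}\cdot\frac{h^{r}}{l^{l}} \;=\; \frac{h^{r}}{200\,h^{l/2}\,l^{l}},
\]
which matches the target count. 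There is no real obstacle in this proof beyond chasing the constants; the only mild subtlety is lining up the arithmetic $2^p/\eta + r/l = \gamma_p^{-1}$ exactly, which is precisely the reason $r$ and $\gamma_p$ were defined the way they were.
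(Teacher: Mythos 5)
Your proof is correct and follows exactly the same approach as the paper's: build a good lattice vector from the $\snvp$ YES witness together with each of the $\geqs \frac{1}{100}2^{-g}\binom{h}{r}$ BCH-lattice vectors supplied by Lemma~\ref{lem:bch-lattice}, verify $2^p t + r = \gamma_p^{-1} l$, and bound the count via $2^{-g} = (h+1)^{-l/2}$ and $\binom{h}{r} \geqs h^r/l^l$. The constant-chasing and the injectivity observation match the paper's proof step for step.
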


\begin{proof}
	Since $(\mathbf{B},\mathbf{y},t)$ is a YES instance, there exists $\tbx \in \mathbb{Z}^q$ such that $\|{\bf B}\tbx - {\bf y}\|_p^p \leqs t$. From Lemma \ref{lem:bch-lattice}, with probability at least 0.99, there exist at least $2^{-g}{h \choose r}/100$ distinct vectors ${\bf z} \in \mathbb{Z}^{h + g}$ such that $\|{\bf B}_{\rm BCH}{\bf z} - {\bf s}\|^p = r$. For each such $\bz$, consider the vector $\bx = \tbx \circ \bz \circ -1$. It follows that ${\bf B}_{\rm int}\bx = (2{\bf B}\tbx - 2{\bf y}) \circ({\bf B}_{\rm BCH}{\bf z} - {\bf s})$ is a non-zero vector and $\|{\bf B}_{\rm int}\bx\|_p^p = 2^p\|{\bf B}\tbx - {\bf y}\|_p^p + \|{\bf B}_{\rm BCH}{\bf z} - {\bf s}\|_p^p\leqs 2^pt + r = \gamma_p^{-1} l$. Since the number of such vectors $\bx$ is at least the number of distinct coefficient vectors $\mathbf{z}$, it can be lower bounded by
	\begin{equation*}
	\frac{1}{100} \cdot 2^{-g}{h \choose r} \geqs \frac{1}{100} \cdot 2^{-\frac{l}{2}\log(h+1)} {h \choose r} \geqs \frac{1}{100} \cdot \frac{h^r}{r^r (h+1)^{l/2}} \geqs \frac{1}{200} \cdot \frac{h^r}{l^l h^{l/2}},
	\end{equation*}
	\noindent where the last inequality follows from $r \leqs l$ and $l < h$. Finally, observe that each $\bz$ produces different $\bB_{\rm BCH}\bz$ and hence all $\bB_{\rm int}\bx$'s are distinct.
\end{proof}	
	
\paragraph{Bounding Bad Vectors in NO Case.}
We next bound the number of bad vectors in the NO case:

\begin{lemma}				\label{lem:annoying-vectors}
	Let $(\mathbf{B},\mathbf{y},t)$ be a NO instance, and let $\mathbf{B}_{\rm int}$ be the corresponding intermediate lattice. Then the number of bad vectors in $\cL(\bB_{\rm int})$ is at most $10^{-5}{h^{r}}\Big({200 h^{l/2}{l}^{l}}\Big)^{-1}$.
\end{lemma}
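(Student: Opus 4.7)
The plan is to split the count of bad lattice vectors by the value of the final coordinate $w$ of the coefficient vector. Write $\bx = \tbx \circ \bz_1 \circ \bz_2 \circ w$ with $\tbx \in \Z^q$, $\bz_1 \in \Z^h$, $\bz_2 \in \Z^g$, $w \in \Z$, and decompose $\bs = \bs^{(1)} \circ \bs^{(2)}$ accordingly. Expanding the block structure of $\bB_{\rm int}$ gives
\[
\|\bB_{\rm int}\bx\|_p^p = 2^p\|\bB\tbx + w\by\|_p^p + \|\bz_1 + w\bs^{(1)}\|_p^p + \|l\mathbf{P}_{\rm BCH}\bz_1 + 2l\bz_2 + w\bs^{(2)}\|_p^p.
\]
For $w \neq 0$, I would invoke the strengthened NO guarantee extractable from the proof of Theorem~\ref{thm:CVPmain}: the output instance $(\bB,\by,t)$ in fact satisfies $\|\bB\tbx - w\by\|_p^p > \eta t = l$ for \emph{every} $\tbx \in \Z^q$ and \emph{every} $w \in \Z \setminus \{0\}$ (the proof performs a case split on $|w|$ that yields exactly this bound, not only the $w=-1$ case formally stated). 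The first summand alone then already exceeds $2^p l > l$, so no coefficient vector with $w \neq 0$ produces a bad lattice point.

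The remaining case $w = 0$ is where the BCH gadget earns its keep. Setting $c_i := (\mathbf{P}_{\rm BCH}\bz_1)_i + 2\bz_2[i] \in \Z$, the norm simplifies to $2^p\|\bB\tbx\|_p^p + \|\bz_1\|_p^p + l^p\sum_i |c_i|^p$, and requiring this to be at most $l$ forces $\sum_i |c_i|^p \leqs l^{1-p} < 1$ (using $p > 1$ and $l \geqs 2$). Since any nonzero integer contributes at least $1$ to $\sum |c_i|^p$, every $c_i$ must vanish, which pins $\bz_2 = -\mathbf{P}_{\rm BCH}\bz_1/2$ and in particular forces $\mathbf{P}_{\rm BCH}\bz_1 \equiv \bzero \Mod 2$. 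Combined with $\|\bz_1\|_0 \leqs \|\bz_1\|_p^p \leqs l$ (valid for integer entries and $p\geqs 1$), the reduction $\bz_1 \Mod 2$ exhibits a BCH codeword of weight at most $l$, and the distance $l+1$ forces $\bz_1 \equiv \bzero \Mod 2$. Hence $\bz_1 = 2\bw_1$ for some $\bw_1 \in \Z^h$ with $\|\bw_1\|_p^p \leqs l/2^p$, the residual constraint collapses to $\|\bB\tbx\|_p^p + \|\bw_1\|_p^p \leqs l/2^p$, and $\bz_2$ is determined by $\bw_1$.

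Counting bad vectors now reduces to counting integer points. Since $\bB$ has linearly independent columns (an immediate feature of the construction in Theorem~\ref{thm:CVPmain}), the map $\tbx \mapsto \bB\tbx$ is injective, and the number of bad vectors is at most the number of integer vectors in $\Z^{n+h}$ whose $\ell_p^p$-norm is at most $l/2^p$. A standard counting bound (support size $\leqs l/2^p$ with each nonzero coordinate lying in an interval of length $O((l/2^p)^{1/p})$) gives an upper estimate of the shape $h^{l/2^p}\cdot C(l,p)$ for some function $C$ depending only on $l$ and $p$. Using $r - l/2 = (1/2^p + 1/\eta)\,l$ together with the hypothesis $h \geqs (10^{10} l)^{2\eta}$ built into the construction, the factor $h^{l/\eta}$ dwarfs $2\cdot 10^7 \cdot C(l,p)\cdot l^l$, which delivers the desired bound $10^{-5}\,h^r/(200\,h^{l/2} l^l)$. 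The most delicate step I anticipate is digging the strong NO guarantee for arbitrary $w\neq 0$ out of the interior of the proof of Theorem~\ref{thm:CVPmain}; once that is in hand, the BCH/distance argument and the integer-point count are both routine.
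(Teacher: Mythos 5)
Your proof is correct and takes essentially the same approach as the paper: the paper packages the structural fact as Claim~\ref{claim:annoying-vectors} (``every bad lattice vector has all even coordinates''), which your block-by-block unrolling re-derives in the same order (force $w = 0$ via the strengthened NO guarantee that Theorem~\ref{thm:CVPmain} indeed provides for every integer multiple $w\by$ with $w \neq 0$, force the $g$-block to vanish since any nonzero entry there contributes at least $l^p > l$, and invoke the BCH distance $l+1$ to make the $h$-block even), after which the integer-point count compared against $h^r/(h^{l/2}\,l^l)$ using $h \geqs (10^{10}l)^{2\eta}$ is the same computation. One small remark: the injectivity of $\bB$ is superfluous --- you are counting distinct lattice vectors $\bu$, and a bad $\bu$ already determines $(\bB\tbx, \bw_1)$ as coordinate projections of $\bu/2$, so the bound by integer points in $\Z^{n+h}$ of $\ell_p^p$-norm at most $l/2^p$ holds unconditionally.
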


At the heart of the proof is the claim that every bad vector must have even coordinates:

\begin{claim} \label{claim:annoying-vectors}
Let $(\mathbf{B},\mathbf{y},t)$ be a NO instance, and let $\mathbf{B}_{\rm int}$ be the corresponding intermediate lattice. Then, for every bad $\bu \in \cL(\bB_{\rm int})$, all coordinates of $\bu$ must be even.
\end{claim}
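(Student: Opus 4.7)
The plan is to analyze the coordinate structure of a bad lattice vector $\bu = \bB_{\rm int}\bx$ carefully. Writing $\bx = \tbx \circ \bz \circ w$ with $\tbx \in \Z^q$, $\bz = \bz_1 \circ \bz_2 \in \Z^h \times \Z^g$, and $w \in \Z$, the vector $\bu$ splits naturally into a top block $\bu_{\rm top} = 2(\bB\tbx + w\by) \in \Z^n$ and a BCH block $\bu_{\rm BCH} = \bB_{\rm BCH}\bz + w\bs \in \Z^{h+g}$. The top block is automatically a multiple of $2$ regardless of anything, so the whole argument reduces to showing that $\bu_{\rm BCH}$ has all even coordinates.

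First I would show that the last coefficient $w$ must be zero. This uses the stronger NO guarantee built into the $\snvp_{p,\eta}$ instance produced by Theorem~\ref{thm:CVPmain}: namely, for every $\tbx \in \Z^q$ and every $w \in \Z \setminus \{0\}$, one has $\|\bB\tbx - w\by\|_p^p > \eta t = l$ (this is the strengthening derived inside the proof of Theorem~\ref{thm:CVPmain}, where the NO case is argued for arbitrary nonzero multiples of $\by'$). Assuming $w \neq 0$ would give $\|\bu\|_p^p \geqs \|\bu_{\rm top}\|_p^p = 2^p\|\bB\tbx + w\by\|_p^p > 2^p l > l$, contradicting that $\bu$ is bad. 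Hence $w=0$.

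With $w = 0$ in hand, the BCH block simplifies to $\bu_{\rm BCH} = \bB_{\rm BCH}\bz = \bz_1 \circ (l\cdot\bP_{\rm BCH}\bz_1 + 2l\cdot\bz_2)$. Each of the last $g$ coordinates is an integer multiple of $l$; if any of them were nonzero, its $p$-th power would be at least $l^p$, and since $p > 1$ and $l \geqs 2$ (we assumed $l$ even) we get $l^p > l \geqs \|\bu\|_p^p$, a contradiction. Therefore $l\cdot\bP_{\rm BCH}\bz_1 + 2l\cdot\bz_2 = \mathbf{0}$, which rearranges to $\bP_{\rm BCH}\bz_1 = -2\bz_2 \in 2\Z^g$, i.e., $\bP_{\rm BCH}\bz_1 \equiv \mathbf{0} \pmod 2$. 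Equivalently, $\bz_1 \bmod 2$ lies in the kernel of $\bP_{\rm BCH}$ over $\F$, so it is a codeword of the BCH code of distance $l+1$.

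Finally, if $\bz_1 \bmod 2$ were nonzero, the distance bound would force $\bz_1$ to have at least $l+1$ odd coordinates, each contributing at least $1$ to $\|\bu\|_p^p$ (since the first $h$ coordinates of $\bu_{\rm BCH}$ are exactly $\bz_1$), yielding $\|\bu\|_p^p \geqs l+1 > l$, a contradiction. Hence $\bz_1$ itself is even, so $\bu_{\rm BCH}$ is even on all $h+g$ coordinates, and combined with $\bu_{\rm top}$ being an obvious multiple of $2$, every coordinate of $\bu$ is even. The main subtle point is the very first step: pinning down $w = 0$ requires the ``homogeneous'' strengthening of the NO case of $\snvp$ coming out of Theorem~\ref{thm:CVPmain}; after that the argument is a clean combination of the BCH distance bound with the divisibility structure of the gadget $\bB_{\rm BCH}$.
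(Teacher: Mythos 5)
Your proof matches the paper's argument in structure and substance: decompose $\bu$ into the $\bB$-block and the $\bB_{\rm BCH}$-block, use the NO guarantee of the $\snvp$ instance to force the last coefficient to zero, observe that the last $g$ coordinates are multiples of $l$ and so must vanish, conclude that the first $h$ coordinates reduced modulo $2$ lie in the BCH code, and invoke its distance $l+1$ to rule out any odd coordinate. Your explicit remark that forcing $w=0$ relies on the homogeneous strengthening of the NO case inherited from Theorem~\ref{thm:CVPmain}, not merely the bare $\snvp$ definition, is a correct and helpful clarification of a point the paper treats tersely.
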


\begin{proof}
Let $\bu$ be any bad vector in $\cL(\bB_{\rm int})$ and let $\bx \in \mathbb{Z}^{q + h + g + 1}$ be such that $\bB_{\rm int}\mathbf{x} = \bu$. We write $\bx$ as $\bx_1 \circ \bx_2 \circ x$ where $\bx_1 \in \Z^q$, $\bx_2 \in \Z^{m + h}$ and $x \in \Z$. Using this, we can express $\bu$ as $\bB_{\rm int}\mathbf{x} = (2\bB\bx_1 - 2x\cdot\by)\circ(\bB_{\rm BCH}\bx_2 - x\cdot \mathbf{s})$. Recall that $\bu$ is bad means that $\|\bu\|_p^p \leqs l$, which implies that $\|\bB\bx_1 - x\cdot\by\| \leqs l = \eta \cdot t$. Since $(\mathbf{B},\mathbf{y},t)$ is a NO instance, it must be that $x = 0$.

Note that we now have $\bu = (2\bB\bx_1) \circ (\bB_{\rm BCH}\bx_2)$. Let us assume for the sake of contradiction that $\bu$ has at least one odd coordinate; it must be that $(\bB_{\rm BCH}\bx_2)$ has at least one odd coordinate. Let us further write $\bx_2$ as $\bx_2 = \bw_1 \circ \bw_2$ where $\bw_1 \in \Z^m$ and $\bw_2 \in \Z^h$. Notice that $\bB_{\rm BCH} \bx_2 = \bw_1 \circ (l(\bP_{\rm BCH}\bw_1 - 2\bw_2))$. Since every coordinate of $\bB_{\rm BCH} \bx_2$ must be less than $l$ in magnitude, it must be the case that $\bP_{\rm BCH}\bw_1 - 2\bw_2 = \vzero$. In other words, $(\bw_1 \mod 2)$ is a codeword of the BCH code. However, since the code has distance $l+1$, this means that, if $\bw_1$ has at least one odd coordinate, it must have at least $l+1$ odd (non-zero) coordinates, which contradicts $\|\bu\|_p^p \leqs l$.
\end{proof}

Having proved Claim~\ref{claim:annoying-vectors}, we can now prove Lemma~\ref{lem:annoying-vectors} by a simple counting argument.

\begin{proof}[Proof of Lemma~\ref{lem:annoying-vectors}]
From Claim~\ref{claim:annoying-vectors}, all coordinates of $\bu$ must be even. Therefore, $\bu$ must have at most $l/2^p$ non-zero coordinates, all of which have magnitude at most $\lfloor l^{1/p} \rfloor \leqs l - 1$. Hence, we can upper bound the total number of such vectors by
\begin{eqnarray*}
\big(2(l - 1) + 1\big)^{l/2^p}{{n + h + g}\choose {\lfloor \frac{l}{2^p} \rfloor}} \leqs (2l)^l (n + h + g)^{l/2^p}
\leqs (2l)^{l} (2lh)^{l/2^p}
\leqs (2l)^{2l} h^{l/2^p}
\end{eqnarray*}
where the second-to-last step holds since $g \leqs \frac{l}{2}\log (h+1) \leqs lh/2$ and $n \leqs h/2$. On the other hand,
\begin{equation*}
\frac{h^{r}}{h^{l/2}{l}^{l}} = \frac{h^{\big(\frac12 + \frac1\eta + \frac{1}{2^p}\big)l}}{h^{l/2}{l}^{l}} = h^{l/2^p} (h/l^\eta)^{l/\eta} \geqs 10^8 \left((2l)^{2l} h^{l/2^p}\right),
\end{equation*}
\noindent which follows from $h \geqs (10^{10}l)^{2\eta}$. Combining the two bounds completes the proof.
\end{proof}

\subsubsection{The $\svp_{p,\gamma}$ Instance and Proof of The Main Lemma}					\label{sec:final-lattice}

Finally, we construct $\bB_{\rm svp}$ from $\bB_{\rm int}$ by adding a random homogeneous constraint similar to \cite{Khot05}. For ease of notation, let $N_g$ denote the lower bound on the number of distinct coefficient vectors guaranteed by Lemma \ref{lem:good-vectors} in the YES case. Similarly, let $N_a$ denote the upper bound on the number of annoying vectors as given in Lemma \ref{lem:annoying-vectors}. Combining the two Lemmas we have $N_g \geqs 10^5N_a$, which will be used crucially in the construction and analysis of the final lattice.

\bigskip

\noindent\textbf{Construction of the Final Lattice.} Let $\rho$ be any prime number in{\footnote{Note that the density of primes in this range is at least $1/\log N_g = 1/r\log h$. Therefore, a random sample of size $O(r\log h)$ in this range contains a prime with high probability. Since we can test primality for any $\rho \in \Big[10^{-4}N_g, 10^{-2}N_g\Big]$ in FPT time, this gives us an FPT algorithm to sample such a prime number efficiently .}} $\Big[10^{-4}N_g, 10^{-2}N_g\Big]$. Furthermore, let ${\bf r} \overset{\rm unif}{\sim} [0,\rho-1]^{n + h + g}$ be a uniformly sampled lattice point. We construct ${\bf B}_{\rm svp}$ as 
\begin{align*}
	\bB_{\rm svp} =
	\begin{bmatrix}
		\bB_{\rm int} &  {0}  \\
		l \cdot{\bf r}^T \bB_{\rm int} &  l \cdot \rho 
	\end{bmatrix}
	\in \Z^{(n + h + g + 1) \times (q + h + g + 2)}.
\end{align*}

\noindent This can be thought of as adding a random linear constraint to the intermediate lattice. The choice of parameters ensures that with good probability, in the YES case, at least one of the good vectors ${\bf x} \in\mathbb{Z}^{q + h + g + 1}$ evaluates to $0$ modulo $\rho$ on the random constraint, and therefore we can pick $u \in \Z$ such that ${\bf B}_{\rm svp}({{\bf x} \circ u}) = (\bB_{\rm int}\bx) \circ 0$ still has small $\ell_p$ norm. On the other hand, since $N_a \ll N _g$, with good probability, all of bad vectors evaluate to non-zeros, and hence will contribute a coordinate of magnitude $l$. This intuition is formalized below.

\begin{proof}[Proof of Lemma~\ref{lem:snvp-to-svp}]
	Let $\bB_{\rm svp}$ be the corresponding final lattice of $(\bB,\by,t)$ as described above. Observe that given the $\snvp_{p,\eta}$-instance $(\bB,\by,t)$, we can construct $\bB_{\rm svp}$ in ${\rm poly}(n,q,t)$-time.

	Moreover, observe that $\cL(\bB_{\rm svp})$ is exactly equal to $\{\bu \circ (l \cdot w) \mid \bu \in \cL(\bB_{\rm int}), w \equiv \br^T \bu\Mod{\rho}\}$.

	Suppose that $(\bB,\by,t)$ is a NO instance. Consider any $\bu \circ (l \cdot w) \in \cL(\bB_{\rm svp})$. If $\|\bu \circ (l \cdot w)\|_p^p \leqs l$, it must be that $\|\bu\|_p^p \leqs l$ and $w = 0$; the latter is equivalent to $\br^T \bu \equiv 0 \Mod{\rho}$. However, from Lemma~\ref{lem:annoying-vectors}, there are only $N_a$ bad vectors $\bu$ in $\cL(\bB_{\rm int})$. For each such non-zero $\bu$, the probability that $\br^T \bu \equiv 0 \Mod{\rho}$ is exactly $1/\rho$. As a result, by taking union bound over all such $\bu \ne \bzero$, we can conclude that, with probability at least $1 - N_a/\rho \geqs 0.9$, we have $\lambda_p(\cL(\bB_{\rm svp}))^p > l$.
	
	Next, suppose that $(\bB,\by,t)$ is a YES instance. We will show that, with probability at least 0.8, $\lambda_p(\cL(\bB_{\rm svp}))^p \leqs \gamma_p^{-1} l$. To do this, we first  condition on the event that there exists at least $N_g$ good vectors as guaranteed by Lemma \ref{lem:good-vectors}. Consider any two good vectors $\bu_1 \ne \bu_2$. Since each entry of $\bu_1$ and $\bu_2$ is of magnitude at most $(\gamma_p^{-1} l)^{1/p}$, they are pairwise independent modulo $\rho > 2l$. Therefore, instantiating Lemma 5.8 from \cite{Khot05} with the lower bound on the number of good vectors $N_g$, and our choice of $\rho$, it follows that with probability at least $0.9$, there exists a good vector $\bu$ such that $\br^T\bu \equiv 0 \Mod \rho$, i.e., $\bu \circ 0$ belongs to $\cL(\bB_{\rm svp})$. Therefore, by union bound, with probability at least $0.8$ (over the randomness of Lemma \ref{lem:good-vectors} and the choice of $\mathbf{r}$), there exists a good $\bu \in  \cL(\bB_{\rm int})$ such that $\bu \circ 0$ remains in $\cL(\bB_{\rm svp})$, which concludes the proof.
\end{proof}


\section{Conclusion and Open Questions}
\label{sec:open}
In this work, we have shown the parameterized inapproximability of $k$-Minimum Distance Problem ($k$-\MDP) and $k$-Shortest Vector Problem ($k$-\SVP) in the $\ell_p$ norm for every $p > 1$ assuming $\W[1]\neq\FPT$ (and under randomized reductions).  

An immediate open question stemming from our work is whether $k$-\SVP\ in the $\ell_1$ norm is in FPT. Khot's reduction unfortunately does not work for $\ell_1$; indeed, in the work of Haviv and Regev~\cite{HR07}, they arrive at the hardness of approximating \SVP\ in the $\ell_1$ norm by embedding \SVP\ instances in $\ell_2$ to instances in $\ell_1$ using an earlier result of Regev and Rosen~\cite{RR06}. The Regev-Rosen embedding inherently does not work in the FPT regime either, as it produces non-integral lattices. Similar issue applies to an earlier hardness result for \SVP\ on $\ell_1$ of~\cite{Mic00}, whose reduction produces irrational bases.

An additional question regarding $k$-\SVP\ is whether we can prove hardness of approximation for \emph{every} constant factor for $p \ne 2$. We note here that for $p=2$,  
we can use the tensor product of lattices to amplify the gap,  as Khot's construction is tailored so that the resulting lattice is
``well-behaved'' under tensoring, and gap amplification is indeed possible for such
instances. However, if $p\neq 2$ then the gap amplification techniques of~\cite{Khot05,HR07} require the distance $k$ to be dependent on the input size $nm$, and hence are not applicable for us. To the best of our knowledge, it is unknown whether this dependency is necessary. If they are indeed required, it would also be interesting to see whether other different techniques that work for our settings can be utilized for gap amplification instead of those from~\cite{Khot05,HR07}.

Furthermore, the Minimum Distance Problem can be defined for linear codes in $\mathbb{F}_p$ for any larger field of size $p > 2$ as well. It turns out that our result does not rule out FPT algorithms for $k$-\MDP\ over $\mathbb{F}_p$ with $p > 2$, when $p$ is fixed and is not part on the input. The issue here is that, in our proof of existence of Locally Suffix Dense Codes (Lemma~\ref{lem:ldc}), we need the co-dimension of the code to be small compared to its distance. In particular, the co-dimension $h - m$ has to be at most $(d/2 + O(1))\log_p h$ where $d$ is the distance. While the BCH code over binary alphabet satisfies this property, we are not aware of any linear codes that satisfy this for larger fields. It is an intriguing open question to determine whether such codes exist, or whether the reduction can be made to work without existence of such codes.

Since the current reductions for both $k$-\MDP\ and $k$-\SVP\ are randomized, it is still an intriguing open question whether we can find deterministic reductions for these problems. As stated in the introduction, even in the non-parameterized setting, \NP-hardness of \SVP\ through deterministic reductions is not known. On the other hand, \MDP\ is known to be \NP-hard even to approximate under deterministic reductions; in fact, even the Dumer~\etal's reduction~\cite{DMS03} that we employ can be derandomized, as long as one has a deterministic construction for Locally Dense Codes~\cite{CW12,Mic14}. In our settings, if one can deterministically construct Locally Suffix Dense Codes (i.e. derandomize Lemma~\ref{lem:ldc}), then we would also get a deterministic reduction for $k$-\MDP.

\subsection*{Acknowledgements}

We are grateful to Ishay Haviv for providing insights on how the gap amplification for $p \ne 2$ from~\cite{HR07} works. Pasin would like to thank Danupon Nanongkai for introducing him to the $k$-Even Set problem and for subsequent useful discussions.

Arnab Bhattacharyya contributed to this work while he was at the Indian Institute of Science and was supported by Ramanujan Fellowship DSTO 1358 and the Indo-US Joint Center for Pseudorandomness in Computer Science. \'Edouard Bonnet, L\'{a}szl\'{o} Egri, Bingkai Lin, and D\'{a}niel Marx are supported by the European Research Council (ERC) consolidator grant No. 725978 SYSTEMATICGRAPH.  L\'{a}szl\'{o} Egri is also supported by NSERC. Karthik C.\ S.\ is supported by Irit Dinur's ERC-CoG grant 772839. Bingkai Lin is also supported by JSPS KAKENHI Grant (JP16H07409) and the JST ERATO Grant (JPMJER1201) of Japan. Pasin Manurangsi is supported by the Indo-US Joint Center for Pseudorandomness in Computer Science. 

\bibliographystyle{alpha}
\bibliography{main}

\newcommand{\etalchar}[1]{$^{#1}$}
\begin{thebibliography}{DFVW99}

\bibitem[ABSS97]{ABSS97}
Sanjeev Arora, L{\'{a}}szl{\'{o}} Babai, Jacques Stern, and Z.~Sweedyk.
\newblock The hardness of approximate optima in lattices, codes, and systems of
  linear equations.
\newblock {\em J. Comput. Syst. Sci.}, 54(2):317--331, 1997.

\bibitem[AD97]{AD97}
Mikl{\'{o}}s Ajtai and Cynthia Dwork.
\newblock A public-key cryptosystem with worst-case/average-case equivalence.
\newblock In {\em STOC}, pages 284--293, 1997.

\bibitem[Ajt96]{Ajt96}
Mikl{\'{o}}s Ajtai.
\newblock Generating hard instances of lattice problems (extended abstract).
\newblock In {\em STOC}, pages 99--108, 1996.

\bibitem[Ajt98]{Ajt98}
Mikl{\'{o}}s Ajtai.
\newblock The shortest vector problem in $\ell_2$ is {NP}-hard for randomized
  reductions (extended abstract).
\newblock In {\em STOC}, pages 10--19, 1998.

\bibitem[AK14]{AK14}
Per Austrin and Subhash Khot.
\newblock A simple deterministic reduction for the gap minimum distance of code
  problem.
\newblock {\em {IEEE} Trans. Information Theory}, 60(10):6636--6645, 2014.

\bibitem[AS18]{AD17}
Divesh Aggarwal and Noah Stephens{-}Davidowitz.
\newblock (gap/s)eth hardness of {SVP}.
\newblock In {\em STOC}, pages 228--238, 2018.

\bibitem[BGGS16]{BGGS16}
Arnab Bhattacharyya, Ameet Gadekar, Suprovat Ghoshal, and Rishi Saket.
\newblock On the hardness of learning sparse parities.
\newblock In {\em ESA}, pages 11:1--11:17, 2016.

\bibitem[BGS17]{BGS17}
Huck Bennett, Alexander Golovnev, and Noah Stephens{-}Davidowitz.
\newblock On the quantitative hardness of {CVP}.
\newblock In {\em FOCS}, pages 13--24, 2017.

\bibitem[BMvT78]{BMT78}
Elwyn~R. Berlekamp, Robert~J. McEliece, and Henk C.~A. van Tilborg.
\newblock On the inherent intractability of certain coding problems (corresp.).
\newblock {\em {IEEE} Trans. Information Theory}, 24(3):384--386, 1978.

\bibitem[BR60]{BR60}
R.~C. Bose and Dwijendra~K. Ray{-}Chaudhuri.
\newblock On a class of error correcting binary group codes.
\newblock {\em Information and Control}, 3(1):68--79, 1960.

\bibitem[CFHW17]{CFHW17}
Marek Cygan, Fedor~V. Fomin, Danny Hermelin, and Magnus Wahlstr{\"{o}}m.
\newblock Randomization in parameterized complexity (dagstuhl seminar 17041).
\newblock {\em Dagstuhl Reports}, 7(1):103--128, 2017.

\bibitem[CFJ{\etalchar{+}}14]{CFJKLMPS14}
Marek Cygan, Fedor Fomin, Bart~MP Jansen, Lukasz Kowalik, Daniel Lokshtanov,
  D{\'a}niel Marx, Marcin Pilipczuk, and Saket Saurabh.
\newblock Open problems for fpt school 2014.
\newblock 2014.

\bibitem[CFK{\etalchar{+}}15]{CFKLMPPS15}
Marek Cygan, Fedor~V. Fomin, Lukasz Kowalik, Daniel Lokshtanov, D{\'{a}}niel
  Marx, Marcin Pilipczuk, Michal Pilipczuk, and Saket Saurabh.
\newblock {\em Parameterized Algorithms}.
\newblock Springer, 2015.

\bibitem[CN99]{CN99}
Jin{-}yi Cai and Ajay Nerurkar.
\newblock Approximating the {SVP} to within a factor $(1+1/\text{dim}^{\xi})$
  is {NP}-hard under randomized reductions.
\newblock {\em J. Comput. Syst. Sci.}, 59(2):221--239, 1999.

\bibitem[CW12]{CW12}
Qi~Cheng and Daqing Wan.
\newblock A deterministic reduction for the gap minimum distance problem.
\newblock {\em {IEEE} Trans. Information Theory}, 58(11):6935--6941, 2012.

\bibitem[DF99]{DF99}
Rodney~G. Downey and Michael~R. Fellows.
\newblock {\em Parameterized Complexity}.
\newblock Monographs in Computer Science. Springer, 1999.

\bibitem[DF13]{DF13}
Rodney~G. Downey and Michael~R. Fellows.
\newblock {\em Fundamentals of Parameterized Complexity}.
\newblock Texts in Computer Science. Springer, 2013.

\bibitem[DFVW99]{DFVW99}
Rodney~G. Downey, Michael~R. Fellows, Alexander Vardy, and Geoff Whittle.
\newblock The parametrized complexity of some fundamental problems in coding
  theory.
\newblock {\em {SIAM} J. Comput.}, 29(2):545--570, 1999.

\bibitem[DGMS07]{DGMS07}
Erik~D. Demaine, Gregory Gutin, D{\'{a}}niel Marx, and Ulrike Stege.
\newblock 07281 open problems -- structure theory and {FPT} algorithmcs for
  graphs, digraphs and hypergraphs.
\newblock In {\em Structure Theory and {FPT} Algorithmics for Graphs, Digraphs
  and Hypergraphs, 08.07. - 13.07.2007}, 2007.

\bibitem[Din02]{Din02}
Irit Dinur.
\newblock Approximating $\text{SVP}_{\infty}$ to within almost-polynomial
  factors is {NP}-hard.
\newblock {\em Theor. Comput. Sci.}, 285(1):55--71, 2002.

\bibitem[Din16]{D16}
Irit Dinur.
\newblock Mildly exponential reduction from gap {3SAT} to polynomial-gap
  label-cover.
\newblock {\em ECCC}, 23:128, 2016.

\bibitem[DKRS03]{DKRS03}
Irit Dinur, Guy Kindler, Ran Raz, and Shmuel Safra.
\newblock Approximating {CVP} to within almost-polynomial factors is {NP}-hard.
\newblock {\em Combinatorica}, 23(2):205--243, 2003.

\bibitem[DMS03]{DMS03}
Ilya Dumer, Daniele Micciancio, and Madhu Sudan.
\newblock Hardness of approximating the minimum distance of a linear code.
\newblock {\em {IEEE} Trans. Information Theory}, 49(1):22--37, 2003.

\bibitem[FGMS12]{FGMS12}
Michael~R. Fellows, Jiong Guo, D{\'{a}}niel Marx, and Saket Saurabh.
\newblock Data reduction and problem kernels (dagstuhl seminar 12241).
\newblock {\em Dagstuhl Reports}, 2(6):26--50, 2012.

\bibitem[FM12]{DBLP:conf/birthday/FominM12}
Fedor~V. Fomin and D{\'{a}}niel Marx.
\newblock {FPT} suspects and tough customers: Open problems of downey and
  fellows.
\newblock In Hans~L. Bodlaender, Rod Downey, Fedor~V. Fomin, and D{\'{a}}niel
  Marx, editors, {\em The Multivariate Algorithmic Revolution and Beyond -
  Essays Dedicated to Michael R. Fellows on the Occasion of His 60th Birthday},
  volume 7370 of {\em Lecture Notes in Computer Science}, pages 457--468.
  Springer, 2012.

\bibitem[GKS12]{GKS12}
Petr~A. Golovach, Jan Kratochv{\'{\i}}l, and Ondrej Such{\'{y}}.
\newblock Parameterized complexity of generalized domination problems.
\newblock {\em Discrete Applied Mathematics}, 160(6):780--792, 2012.

\bibitem[GMSS99]{GMSS99}
Oded Goldreich, Daniele Micciancio, Shmuel Safra, and Jean{-}Pierre Seifert.
\newblock Approximating shortest lattice vectors is not harder than
  approximating closest lattice vectors.
\newblock {\em Inf. Process. Lett.}, 71(2):55--61, 1999.

\bibitem[Gol06]{Gol06}
Oded Goldreich.
\newblock On promise problems: {A} survey.
\newblock In {\em Theoretical Computer Science, Essays in Memory of Shimon
  Even}, pages 254--290, 2006.

\bibitem[Hoc59]{H59}
Alexis Hocquenghem.
\newblock Codes correcteurs d’erreurs.
\newblock {\em Chiffres}, 2:147–156, September 1959.

\bibitem[HR07]{HR07}
Ishay Haviv and Oded Regev.
\newblock Tensor-based hardness of the shortest vector problem to within almost
  polynomial factors.
\newblock In {\em STOC}, pages 469--477, 2007.

\bibitem[Joh90]{J90}
D.~S. Johnson.
\newblock Handbook of theoretical computer science.
\newblock volume A (Algorithms and Complexity), chapter 2, A catalog of
  complexity classes, pages 67--161. Elseveir, 1990.

\bibitem[Kho05]{Khot05}
Subhash Khot.
\newblock Hardness of approximating the shortest vector problem in lattices.
\newblock {\em J. {ACM}}, 52(5):789--808, 2005.

\bibitem[KLM19]{KLM18}
{Karthik {C. S.}}, Bundit Laekhanukit, and Pasin Manurangsi.
\newblock On the parameterized complexity of approximating dominating set.
\newblock {\em J. {ACM}}, 66(5):33:1--33:38, 2019.

\bibitem[Len83]{Len83}
Hendrik~Willem Lenstra.
\newblock Integer programming with a fixed number of variables.
\newblock {\em Math. Oper. Res.}, 8(4):538--548, 1983.

\bibitem[Lin18]{Lin15}
Bingkai Lin.
\newblock The parameterized complexity of the \emph{k}-biclique problem.
\newblock {\em J. {ACM}}, 65(5):34:1--34:23, 2018.

\bibitem[LLL82]{LLL}
Arjen~Klaas Lenstra, Hendrik~Willem Lenstra, and L{\'a}szl{\'o} Lov{\'a}sz.
\newblock Factoring polynomials with rational coefficients.
\newblock {\em Mathematische Annalen}, 261(4):515--534, 1982.

\bibitem[Maj17]{Maj17}
Ruhollah Majdoddin.
\newblock Parameterized complexity of {CSP} for infinite constraint languages.
\newblock {\em CoRR}, abs/1706.10153, 2017.

\bibitem[MG12]{Micc12}
Daniele Micciancio and Shafi Goldwasser.
\newblock {\em Complexity of lattice problems: a cryptographic perspective},
  volume 671.
\newblock Springer Science \& Business Media, 2012.

\bibitem[Mic00]{Mic00}
Daniele Micciancio.
\newblock The shortest vector in a lattice is hard to approximate to within
  some constant.
\newblock {\em {SIAM} J. Comput.}, 30(6):2008--2035, 2000.

\bibitem[Mic01]{Mic01}
Daniele Micciancio.
\newblock The hardness of the closest vector problem with preprocessing.
\newblock {\em {IEEE} Trans. Information Theory}, 47(3):1212--1215, 2001.

\bibitem[Mic12]{Mic12}
Daniele Micciancio.
\newblock Inapproximability of the shortest vector problem: Toward a
  deterministic reduction.
\newblock {\em Theory of Computing}, 8(1):487--512, 2012.

\bibitem[Mic14]{Mic14}
Daniele Micciancio.
\newblock Locally dense codes.
\newblock In {\em CCC}, pages 90--97, 2014.

\bibitem[MR09]{MR-survey}
Daniele Micciancio and Oded Regev.
\newblock Lattice-based cryptography.
\newblock In {\em Post-quantum cryptography}, pages 147--191. Springer, 2009.

\bibitem[MR16]{MR16}
Pasin Manurangsi and Prasad Raghavendra.
\newblock A birthday repetition theorem and complexity of approximating dense
  {CSP}s.
\newblock {\em CoRR}, abs/1607.02986, 2016.

\bibitem[NSS95]{NSS95}
Moni Naor, Leonard~J. Schulman, and Aravind Srinivasan.
\newblock Splitters and near-optimal derandomization.
\newblock In {\em 36th Annual Symposium on Foundations of Computer Science,
  Milwaukee, Wisconsin, USA, 23-25 October 1995}, pages 182--191, 1995.

\bibitem[NV10]{NV10}
Phong~Q. Nguyen and Brigitte Vall{\'{e}}e, editors.
\newblock {\em The {LLL} Algorithm - Survey and Applications}.
\newblock Information Security and Cryptography. Springer, 2010.

\bibitem[Reg03]{Reg03}
Oded Regev.
\newblock New lattice based cryptographic constructions.
\newblock In {\em STOC}, pages 407--416, 2003.

\bibitem[Reg05]{Reg05}
Oded Regev.
\newblock On lattices, learning with errors, random linear codes, and
  cryptography.
\newblock In {\em STOC}, pages 84--93, 2005.

\bibitem[Reg06]{Reg06}
Oded Regev.
\newblock Lattice-based cryptography.
\newblock In {\em CRYPTO}, pages 131--141, 2006.

\bibitem[Reg10]{Reg10}
Oded Regev.
\newblock The learning with errors problem (invited survey).
\newblock In {\em CCC}, pages 191--204, 2010.

\bibitem[RR06]{RR06}
Oded Regev and Ricky Rosen.
\newblock Lattice problems and norm embeddings.
\newblock In {\em STOC}, pages 447--456, 2006.

\bibitem[Ste93]{Ste93}
Jacques Stern.
\newblock Approximating the number of error locations within a constant ratio
  is {NP}-complete.
\newblock In {\em AAECC}, pages 325--331, 1993.

\bibitem[Var97a]{Var97a}
Alexander Vardy.
\newblock Algorithmic complexity in coding theory and the minimum distance
  problem.
\newblock In {\em STOC}, pages 92--109, 1997.

\bibitem[Var97b]{Var97b}
Alexander Vardy.
\newblock The intractability of computing the minimum distance of a code.
\newblock {\em {IEEE} Trans. Information Theory}, 43(6):1757--1766, 1997.

\bibitem[vEB81]{VEB}
Peter van Emde-Boas.
\newblock {\em Another {NP}-complete partition problem and the complexity of
  computing short vectors in a lattice}.
\newblock Report. Department of Mathematics. University of Amsterdam.
  Department, Univ., 1981.

\end{thebibliography}

\appendix
\section{Inapproximability of Odd Set}\label{app:oddset}

In this section, we show how the hardness for the more general \gapvec, also implies hardness for the Odd Set problem, which can be defined in a similar manner as \gapvec\  except that $\by$ is always fixed as the all-ones vector instead of being part of the input. More formally, we define the gap version of Odd Set below.

\begin{framed}
$\gamma$-Gap Odd Set Problem ($\gapodds_{\gamma}$)

{\bf Input: } A matrix $\bA \in \mathbb{F}_2^{n \times m}$ and a positive integer $k \in \mathbb{N}$

{\bf Parameter: } $k$

{\bf Question: } Distinguish between the following two cases:
\begin{itemize}
\item (YES) there exists $\bx \in \cB_{m}(\mathbf{0}, k)$ such that $\bA\bx = \vone$
\item (NO) for all $\bx \in \cB_{m}(\mathbf{0}, \gamma k)$, $\bA\bx \ne \vone$\end{itemize}
\end{framed}

It is obvious that hardness for \gapodds gives the hardness for \gapvec, by simply setting $\by = \vone$. Below we show that the opposite implication is also true; note that, together with Theorem~\ref{thm:MLDmain}, it implies that $\gapodds_\gamma$ is \W[1]-hard for every $\gamma \geqs 1$.

\begin{proposition}\label{prop:mld-to-odd}
For every $\gamma' > \gamma \geqs 1$, there is an FPT reduction from $\gapvec_{\gamma'}$ to $\gapodds_\gamma$ 
\end{proposition}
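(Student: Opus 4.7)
My plan is to give a direct, approximation-preserving reduction from $\gapvec_{\gamma'}$ that rewrites the inhomogeneous target $\by$ as the all-ones vector by adding one fresh variable that is forced to equal $1$. The key algebraic identity is that $\bA\bx = \by$ is equivalent to $\bA\bx + (\vone - \by) = \vone$, so I will append one extra column equal to $\bc := \vone - \by$ and one extra row that pins the coefficient of this new column to $1$.

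Concretely, given an instance $(\bA, \by, k)$ of $\gapvec_{\gamma'}$ with $\bA \in \F^{n\times m}$, I would output the $\gapodds_\gamma$ instance $(\bA', k')$ where
\[
\bA' := \begin{bmatrix} \bA & \vone - \by \\ \vzero_m^\top & 1 \end{bmatrix} \in \F^{(n+1)\times(m+1)}, \qquad k' := k + 1.
\]
Completeness is immediate: if $\bA\bx = \by$ with $\|\bx\|_0 \leqs k$, then $\bx \circ 1$ satisfies $\bA'(\bx \circ 1) = \vone$ and has Hamming weight exactly $\|\bx\|_0 + 1 \leqs k'$.

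For soundness, any $\bx' \in \F^{m+1}$ with $\bA'\bx' = \vone$ is forced by the last row to have $\bx'[m+1] = 1$, so the prefix $\bx := \bx'[1..m]$ satisfies $\bA\bx = \by$ with weight $\|\bx'\|_0 - 1$. If additionally $\|\bx'\|_0 \leqs \gamma k'$, then $\|\bx\|_0 \leqs \gamma(k+1) - 1$. To contradict the NO guarantee of $\gapvec_{\gamma'}$ I would need $\gamma(k+1) - 1 \leqs \gamma' k$, which rearranges to $k \geqs (\gamma - 1)/(\gamma' - \gamma)$.

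The main (and only) obstacle is this threshold: the additive $+1$ shift slightly erodes the multiplicative gap, so the inequality above fails when $k$ is smaller than the absolute constant $k_0 := \lceil (\gamma-1)/(\gamma'-\gamma) \rceil$, which depends only on $\gamma$ and $\gamma'$. I will handle this small-$k$ regime separately: since $k_0$ is a constant, I can decide the input $\gapvec_{\gamma'}$ instance in polynomial time $(nm)^{O(k_0)}$ by brute-force enumeration of all $\bx$ of weight at most $\gamma' k_0$, and then emit a fixed trivial YES or NO instance of $\gapodds_\gamma$ accordingly (for instance, the $1\times 1$ matrix $[1]$ with $k' = 1$ in the YES case and the $1\times 1$ zero matrix with $k' = 1$ in the NO case). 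The whole procedure runs in polynomial time with new parameter $k' \leqs k + 1$, yielding a valid FPT reduction.
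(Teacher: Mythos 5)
Your reduction is the same as the paper's: append the column $\vone - \by$ (equivalently $\vone + \by$ over $\F$) and a row $(\vzero_m^\top, 1)$ that pins the new coordinate to $1$, set $k' = k+1$, and observe that the additive shift only loses the gap when $k$ is below a constant depending on $\gamma, \gamma'$. The only difference is cosmetic: the paper simply declares ``we may assume w.l.o.g.\ that $k > \gamma/(\gamma'-\gamma)$,'' whereas you spell out why that WLOG is legitimate by brute-forcing the bounded-parameter regime and emitting a fixed trivial instance, which is a perfectly valid (and slightly more careful) way to justify the same step.
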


\begin{proof}
Let $(\bA, \by, k)$ be an instance of $\gapvec_{\gamma'}$ where $\bA \in \F^{n \times m}$ and $\by \in \F^n$. We may assume without loss of generality that $k > \frac{\gamma}{\gamma' - \gamma}$. The instance $(\bA' \in \F^{(n + 1) \times (m + 1)}, k')$ of $\gapodds_{\gamma}$ is defined as follows. First, we let $k' = k + 1$. Then, for $i \in [m]$, we let the $i$-th column of $\bA'$ be the $i$-th column concatenated with zero, and we let the $(m + 1)$-th column be $(\vone_n + \by) \circ 1$. That is,
\begin{align*}
\bA' = 
\begin{bmatrix}
\bA & \vone_n + \by \\
\vzero_{1 \times m} & 1
\end{bmatrix}.
\end{align*}
Clearly, the reduction runs in polynomial time. We next argue its correctness.

\textbf{(YES Case)} Suppose that $(\bA, \by, k)$ is a YES instance of $\gapvec_{\gamma'}$, i.e., there exists $\bx \in \F^n$ with $\|\bx\|_0 \leqs k$ such that $\bA\bx = \by$. Let $\bx' = \bx \circ 1$; it is simple to see that $\bA'\bx' = \vone_{m+1}$ and that $\|\bx'\|_0 \leqs k + 1 = k'$. Hence, $(\bA', k')$ is a YES instance of $\gapodds_{\gamma}$.

\textbf{(NO Case)} Suppose that $(\bA, \by, k)$ is NO instance of $\gapvec_{\gamma'}$. Now, let us consider any $\bx' \in \F^{n + 1}$ such that $\bA' \bx' = \vone$. Notice that $(\bA' \bx')[m + 1] = \bx'[n + 1]$, which implies that $\bx'[n + 1] = 1$. 

Now, consider the vector $\bx = (\bx[1], \dots, \bx[n])$; it is easy to verify that $\bA\bx = \by$. Since $(\bA, \by, k)$ is NO instance of $\gapvec_{\gamma'}$, we have $\|\bx\|_0 > \gamma' k$. As a result, $\|\bx'\|_0 = 1 + \|\bx\|_0 > 1 + \gamma' k > \gamma \cdot k'$, where the last inequality follows from $k \geqs \frac{\gamma}{\gamma' - \gamma}$. Thus, $(\bA', k')$ is indeed a NO instance of $\gapodds_{\gamma}$.
\end{proof}


\end{document}